\newtheorem{theorem}{Theorem}[section]
\newtheorem{lemma}{Lemma}[section]
\newtheorem{proposition}{Proposition}[section]
\theoremstyle{definition}
\newcommand{\bel}{\begin{equation} \label}
\newcommand{\ee}{\end{equation}}
\newcommand{\vecx}{{\bf x}}
\newcommand{\vecz}{{\bf z}}
\newcommand{\vecp}{{\bf p}}
\newcommand{\vecpht}{{\bf \hat{p}}}
\newcommand{\veca}{{\bf a}}
\newcommand{\vecxi}{\mbox{\boldmath$\xi$}}
\newcommand{\vecLambda}{\mbox{\boldmath$\Lambda$}}
\newcommand{\vecell}{\mbox{\boldmath$\ell$}}
\newcommand{\vecL}{{\bf L}}
\theoremstyle{remark}
\newtheorem{remark}{Remark}
\newenvironment{acknowledgement}{\par\medskip\noindent\emph{Acknowledgment.}}
\renewcommand\d{\mathrm d}
\newcommand{\beq}{\begin{equation}}
\newcommand{\eeq}{\end{equation}}
\newcommand{\ba}{\begin{array}}
\newcommand{\ea}{\end{array}}
\newcommand{\bea}{\begin{eqnarray}}
\newcommand{\eea}{\end{eqnarray}}
\newcommand{\R}{\mathbb{R}}
\newcommand{\Sp}{\mathbb{S}}
\newcommand{\T}{\mathbb{T}}
\newcommand{\Z}{\mathbb{Z}}
\newcommand{\N}{\mathbb{N}}
\newcommand{\C}{\mathbb{C}}
\newcommand{\vecai}{{\bf a}}
\newcommand{\vecbi}{{\bf b}}
\newcommand{\vece}{{\bf e}}
\newcommand{\z}{{\bf z}}
\begin{document}

\title[Eigenvalue Clusters for the hydrogen Zeeman Hamiltonian]{Semiclassical Szeg\"o limit of eigenvalue clusters for the hydrogen atom Zeeman Hamiltonian}

\author[M.\ Avenda\~ {n}o-Camacho]{Misael Avenda\~ {n}o-Camacho}
\address{CONACYT Research Fellow -  Departamento de Matem\'aticas, Universidad de Sonora,}
\email{misaelave@mat.uson.mx}

\author[P.\ D.\ Hislop]{Peter D.\ Hislop}
\address{Department of Mathematics,
    University of Kentucky,
    Lexington, Kentucky  40506-0027, USA}
\email{peter.hislop@uky.edu}

\author[C.\ Villegas-Blas]{Carlos Villegas-Blas}
\address{Universidad Nacional Aut\'onoma de M\'exico, Instituto de Matem\'aticas,
Unidad Cuernavaca, Mexico} \email{villegas@matcuer.unam.mx}


\vspace{.1in}

\begin{abstract}
We prove a limiting eigenvalue distribution theorem (LEDT) for suitably scaled
eigenvalue clusters around the discrete negative eigenvalues of the hydrogen atom Hamiltonian
formed by the perturbation by a weak constant magnetic field.
We study the hydrogen atom  Zeeman Hamiltonian
$H_V(h,B) = (1/2)( - i h {\mathbf \nabla} - {\mathbf A}(h))^2 - |x|^{-1}$, defined on $L^2 (\R^3)$,
   in a constant magnetic field ${\mathbf B}(h) = {\mathbf \nabla} \times {\mathbf A}(h)=(0,0,\epsilon(h)B)$
in the weak field limit $\epsilon(h) \rightarrow 0$ as $h\rightarrow{0}$.
We consider the
Planck's parameter $h$
taking  values along the sequence  $h=1/(N+1)$, with $N=0,1,2,\ldots$, and
$N\rightarrow\infty$.
We prove a semiclassical $N \rightarrow \infty$ LEDT  of the Szeg\"o-type
for the scaled eigenvalue shifts
and obtain both ({\bf i}) an expression involving the regularized classical Kepler orbits with energy $E=-1/2$ and ({\bf ii}) a weak limit measure that involves the component  $\ell_3$ of
the angular momentum vector  in the direction of the magnetic field.
This LEDT extends  results of Szeg\"o-type for eigenvalue
clusters for bounded perturbations of the hydrogen atom to the Zeeman effect.
The new aspect of this work is that the perturbation involves
the unbounded, first-order, partial differential operator $w(h, B) = \frac{(\epsilon(h)B)^2}{8} (x_1^2 + x_2^2) - \frac{ \epsilon(h)B}{2} hL_3 ,$ where the operator  $hL_3$ is the third component of the usual angular momentum operator and is
the quantization of $\ell_3$.
The unbounded Zeeman perturbation is controlled
using localization properties of both the hydrogen atom coherent states  $\Psi_{\alpha,N}$, and their derivatives
$L_3(h)\Psi_{\alpha,N}$, in the large quantum number regime $N\rightarrow\infty$.
\end{abstract}

\maketitle
\thispagestyle{empty}


\tableofcontents


\section{Introduction: Limiting eigenvalue distribution theorems.}\label{sec:szego-intro}

The behavior of eigenvalue clusters resulting from the perturbation
of highly degenerate eigenvalues of elliptic operators on compact
manifolds has been studied by many researchers, notably by V.\
Guillemin \cite{guillemin1,guillemin2}, and by A.\ Weinstein \cite{weinstein}.

This is the second paper in which we study the behavior of resonance and eigenvalue clusters
associated with a fixed eigenvalue of a family of
hydrogen atom Hamiltonians, labeled by the Planck's parameter $h$, under perturbations by
external electric and magnetic fields in the weak field limit. In \cite{hVB1}, we studied the resonance cluster associated with the hydrogen atom Stark Hamiltonian in the small electric field regime and proved a Szeg\"o-type result for the resonance shifts. In this paper, we treat the eigenvalue clusters formed by a magnetic field
(the Zeeman effect).

The behavior of eigenvalue clusters for smooth real-valued perturbations $V$ of the Laplacian
on rank one symmetric spaces was studied by V.\ Guillemin
\cite{guillemin1,guillemin2}.
A.\ Weinstein \cite{weinstein} established a \textbf{limiting eigenvalue distribution theorem} (LEDT) for the Laplacian  $\Delta_{M}$ on a compact Riemannian manifold $M$ all of whose geodesics are closed perturbed by a smooth real-valued potential  $V$.  The spectrum of the Laplacian $\Delta_{M}$  consists of eigenvalues $E_N$ with multiplicity $d_N$ that grows polynomially with $N$.
In this case, the semiclassical parameter is the index $N$ of the unperturbed eigenvalue $E_N$.
Since $V$ is bounded, the spectrum of $\Delta_{M}+V$ consists of eigenvalues that form clusters
around the unperturbed eigenvalues.

To explain the LEDT introduced by Weinstein  \cite{weinstein}, we denote by $E_{N,j}$, $j=1,\ldots,d_N$, the eigenvalues in the cluster around $E_N$, and by $\nu_{N,j}=E_{N,j}-E_N$ the eigenvalue shifts.
The LEDT states that for a continuous, real-valued function
$\rho:\R\rightarrow{\R}$ we have
\beq\label{LEDT}
\lim_{N \rightarrow \infty} \frac{1}{d_N} \sum_{j=1}^{d_N} \rho
\left( {\nu}_{N,j} \right) =
\int_{\Gamma} \rho (\hat{V} (\gamma)) ~d \mu_{\Gamma} (\gamma) , \eeq
where $\Gamma$ denotes the space of oriented geodesics of $M$. These are the
classical orbits of the unperturbed problem describing a particle moving on the manifold $M$ with no potential.
The function $\hat{V}:\Gamma\rightarrow{\R}$ is the Radon transform of the potential $V$. Namely, $\hat{V} (\gamma)$ denotes the average of V along the geodesic
$\gamma$ parameterized with respect to arc-length. The measure $\d{\mu}_\Gamma$ is the normalized
measure on the space $\Gamma$ obtained from the restriction to the unit cotangent bundle $T^*_1M$ of the Liouville measure associated to the canonical simplectic form on the symplectic manifold $T^*M$. Weinstein actually proves a LEDT for perturbations given by pseudo-differential operators of order zero.  Here we only state the result for multiplicative potentials  for simplicity.

R.\ Brummelhuis and A.\ Uribe \cite{B-U} extended these results to the study of the semiclassical Schr\"odinger
operator $H_h (V) = - h^2 \Delta + V$ on $L^2 (\R^n)$. The potential $V \geq 0$ is smooth with $V_\infty \equiv
\liminf_{|x| \rightarrow \infty} V(x) > 0$. They studied the semiclassical behavior of the eigenvalue cluster near an energy $0 < E^2 < V_\infty$. They proved an asymptotic expansion of $Tr \rho[ (H_h(V)^{1/2} - E) h^{-1}]$ as $h \rightarrow 0$ and related the coefficients to the classical flow for $|\vecp|^2 +V$ on the energy surface $E$.
For other results on the clustering of eigenvalues for $h$-pseudodifferential operators with periodic flows see, for example, \cite{helffer-sj1,dozias}.

A.\ Uribe and C.\ Villegas-Blas \cite{uribe-villegas1}
extended these results by considering perturbations of the family of hydrogen atom Hamiltonians
$H_V(h) = - (1/2)h^2 \Delta - |x|^{-1}$ defined on $L^2 (\R^n)$, $n\geq{2}$,
by operators of the form $\epsilon (h) Q_h$ where
$Q_h$ is a zero-order pseudo-differential operator uniformly bounded in $h$
and $\epsilon (h) = \mathcal{O}(h^{1+\delta})$, for $\delta >0$.
Here and in the sequel $V$ denotes the Coulomb potential  $V= - |x|^{-1}$.
The spectrum of $H_V(h)$ consists of discrete eigenvalues $E_k(h)=-\frac{1}{2h^2\left(k+\left(\frac{n-1}{2}\right)\right)^2}$ ,
$k\in\N^*$, with  multiplicity   $d_k=O(k^{n-1})$
  together with  the continuous spectrum $[0,\infty)$. Here, we denote by $\N^*$ the set of non-negative integers.
They considered the Planck's parameter taking values along the following sequence converging to zero: $h=1/(N+\frac{n-1}{2})$ with $N\in\N^*$.  Thus for $N$ given and taking
$k=N$, we have that the number
$E=E_{k=N}(h=1/(N+\frac{n-1}{2}))=-1/2$  is an eigenvalue of $H_V(h=1/(N+\frac{n-1}{2}))$ with multiplicity $d_N=O(N^{n-1})$.

In this setting, Uribe and Villegas-Blas \cite{uribe-villegas1} established
a LEDT similar to formula (\ref{LEDT}) but  with ({\bf i})   the  clusters of eigenvalues around the number $E=-1/2$,
{\bf (ii)} the eigenvalue shifts scaled by $\epsilon (h)$,{\bf (iii)} the right-hand side involving
averages of the principal symbol of $Q_h$ along the classical orbits of the regularized Kepler problem on the
energy surface  \beq \Sigma(-1/2)=\left\{({\bf x}, {\bf p})\in\T^*(R^3-\{0\}) \; \arrowvert \; \frac{|{\bf p}|^2}{2}-\frac{1}{|{\bf x}|}=-1/2\right\}, \eeq and {\bf (iv)} the integration  being  with respect to the normalized Liouville measure on the
energy  surface $\Sigma(-1/2)$.

The semiclassical limit is achieved by taking $N\rightarrow\infty$ or equivalently $h\rightarrow{0}$.
A novelty comes in the work of Uribe and Villegas-Blas \cite{uribe-villegas1} from the fact that for a fixed negative energy,
there are two types of (phase space) classical orbits for the classical Hamiltonian flow with Hamiltonian
$G(\vecx, \vecp) = \frac{|{\bf p}|^2}{2} - \frac{1}{|{\bf x}|}$ (the Kepler problem).
Namely,  (i) bounded periodic orbits corresponding to nonzero angular momentum,
and (ii) unbounded collision orbits with zero angular momentum.
Uribe and Villegas-Blas used Moser's regularization of collision orbits (see Appendix 1, section \ref{sec:moser-map})
so that all the collision orbits on $\Sigma(-1/2)$ can be  considered periodic orbits after a time re-parametrization. In this regularization, all orbits on $\Sigma(-1/2)$
correspond to geodesics on the sphere $\Sp^n$ through the Moser map
${\mathcal M}:T^*({\mathbb R}^3)\rightarrow{T}^*({\Sp}^3_o)$ with   $\Sp^3_o$ denoting  the 3-sphere
with the north pole removed (see Appendix 1, section \ref{sec:moser-map} for a review of ${\mathcal M}$). Those passing through the north pole are the collision orbits.
The geodesics on $\Sp^n$ are parameterized by the quotient of  the subset $\mathcal{A}= \{ \alpha\in{\C}^{n+1} ~|~  ~| \Re \alpha | = | \Im \alpha | = 1, ~ \Re \alpha \cdot \Im
\alpha = 0 \}$ of the null quadric in $\C^{n+1}$ with respect to the circle action
(see Appendix 2, section \ref{sec:app2-coherentstates1}).
The set $\mathcal{A}$ corresponds to the unit cotangent bundle ${T}_1^*\Sp^n$ of $\Sp^n$ under the map $\sigma:{\mathcal A}\rightarrow{T}_1^*\Sp^n$  with $\sigma(\alpha)=(\Re \alpha,-\Im \alpha)$.

In this paper, we extend these results to eigenvalue clusters of the hydrogen atom Zeeman Hamiltonian defined on $L^2(\R^3)$. We prove a
LEDT on the semiclassical behavior of the distribution of
the eigenvalue shifts. To explain this in more detail, let us
consider the same setting as Uribe and Villegas-Blas. We regard $E=-1/2$ as an eigenvalue of the family of hydrogen atom Hamiltonians  $H_V(h=1/(N+1))$ with multiplicity $d_N=(N+1)^2$. We consider the atom in an external, constant magnetic field
${\mathbf B}(h) = (0,0,\epsilon (h)B)$, with the constant $B \geq 0$ and $\epsilon (h)=h^{K+\delta}$, $\delta>0$, for some suitably chosen
$K > 0$, see Theorem \ref{thm:main1}.  We consider ${\mathbf B}(h) \rightarrow 0$ as $h \rightarrow 0$.
We refer to this as the weak field limit.

The resulting Hamiltonian
\bea
H_V(h,B) &=&  (1/2)( - i h {\mathbf \nabla} - {\mathbf A}(h))^2 - |x|^{-1} \noindent \nonumber \\
&=& H_V(h) + \frac{(\epsilon (h)B)^2}{8}(x_1^2 + x_2^2) -
\frac{\epsilon (h)B}{2}h L_3
\eea
is called the {\it hydrogen atom Zeeman Hamiltonian}.
Here, ${\bf x} = (x_1,x_2,x_3)$ denotes Cartesian coordinates for ${\mathbb R}^3$,
${\mathbf \nabla}=(\frac{\partial\phantom{x_1}}{\partial{x_1}},\frac{\partial\phantom{x_1}}{\partial{x_2}},\frac{\partial\phantom{x_1}}{\partial{x_3}})$
and
the operator $hL_3=-\imath{h}\left(x_1\frac{\partial\phantom{y}}{\partial{x_2}}-
x_2\frac{\partial\phantom{y}}{\partial{x_1}}\right)$ is the third component of the angular momentum operator $h\vecL=\vecx{\times}({-\imath{h}){\mathbf \nabla}}$. We are working in the symmetric gauge for the vector potential  ${\mathbf A}(h)=\frac{\epsilon (h)B}{2}(-x_2,x_1,0)$.

Although the perturbation  $\frac{(\epsilon (h)B)^2}{8}(x_1^2 + x_2^2) -
\frac{\epsilon (h)B}{2} hL_3$ is not bounded,
we still have an eigenvalue stability theorem that follows from the work of J.\ Avron, I.\ Herbst, and B.\ Simon \cite{AhS1}.
Following \cite{AhS1}, we show that
under the perturbation by the effective magnetic field $\epsilon (h)B$, the
eigenvalue $E=-1/2$ gives rise to a cluster of nearby eigenvalues $E_{N,j}(h,B)=E_{N,j}(1/(N+1), B),
j= 1, \ldots , d_N$, with total geometric multiplicity equal to $d_N$ (see Theorem \ref{thm:stability1} on eigenvalue stability).
We obtain explicit relative bounds on the perturbation that allow an estimate on the size of the cluster.
Our main result is the following LEDT for this eigenvalue cluster in the large $N$ limit corresponding to a weak magnetic field:

\begin{theorem}\label{thm:main1}
Let $B > 0$ be fixed, and let
$\rho$ be a continuous function on $\R$. Let
$\epsilon (h) = h^{33/2 + \delta}$, for some $\delta > 0$, and take $h = 1/(N+1)$, with $N \in \N^*$.
For the eigenvalue cluster $\{ E_{N,j}(1/(N+1),B)\}$, with $j = 1,2, \ldots, d_N$, near $E_N( 1/(N+1) ) = -1/2$,
we have
\bea\label{eq:szego1}
\lefteqn{\lim_{N \rightarrow \infty} \frac{1}{d_N}
\sum_{j=1}^{d_N} \rho \left( \frac{ E_{N,j} (1/(N+1), B) - E_N (1/(N+1)) }{ \epsilon
(1/(N+1)) } \right)}  \nonumber \\
  && \phantom{xxxxxxxxxxxxxxxxx} = \int_{\Sigma (-1/2) } \rho \left( -\frac{B}{2} \ell_3 ({\bf x, p})
  \right) ~ d\mu_L ({\bf x, p}), \label{eq-main-thm}
 \eea
where $\ell_3 ({\bf x,p}) = x _1p_2 - x_2 p_1$ is the third component of the classical angular momentum vector $\vecell=\vecx{\times}\vecp$
on the energy surface $\Sigma ( -1/2)$
with collision orbits treated as in \cite{uribe-villegas1}.
The measure $d\mu_L$ is the normalized restriction of the
Liouville measure   to the energy surface $\Sigma (-1/2)$.  Here $\T^*(R^3-\{{\bf 0}\})$ is endowed with its  canonical symplectic form.
\end{theorem}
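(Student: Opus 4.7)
The plan is to adapt the coherent-state strategy pioneered in \cite{uribe-villegas1} for bounded pseudodifferential perturbations, the new difficulty being that the perturbation
$W(h,B):=H_V(h,B)-H_V(h)=\frac{(\epsilon(h)B)^2}{8}(x_1^2+x_2^2)-\frac{\epsilon(h)B}{2}hL_3$
is an unbounded differential operator. Let $P_N$ denote the spectral projector of $H_V(h)$, $h=1/(N+1)$, onto the $d_N$-dimensional eigenspace at $E=-1/2$, and let $\Pi_N(h,B)$ denote the projector onto the perturbed cluster $\{E_{N,j}(h,B)\}$. The first step is to invoke Theorem \ref{thm:stability1} together with Avron--Herbst--Simon-type relative bounds \cite{AhS1} to conclude that $\Pi_N$ is close to $P_N$ and that standard second-order perturbation theory applies, yielding
\begin{equation}
\frac{1}{d_N}\sum_{j=1}^{d_N}\rho\!\left(\frac{E_{N,j}(h,B)-E_N}{\epsilon(h)}\right)=\frac{1}{d_N}\mathrm{Tr}\!\left[\rho\!\left(\frac{P_N W(h,B)P_N}{\epsilon(h)}\right)\right]+o(1).
\end{equation}
The particular exponent $\epsilon(h)=h^{33/2+\delta}$ is chosen to suppress the second-order perturbation correction and to dominate the polynomial-in-$N$ growth of $\|(x_1^2+x_2^2)P_N\|$; the relevant gap in the unperturbed spectrum is of order $h$.

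The second step uses the coherent-state resolution of the identity on the $E=-1/2$ eigenspace (Appendix 2),
\begin{equation}
P_N=c_N\int_{\mathcal{A}}|\Psi_{\alpha,N}\rangle\langle\Psi_{\alpha,N}|\,d\mu(\alpha),
\end{equation}
and the rapid decay of off-diagonal overlaps $|\langle\Psi_{\alpha,N},\Psi_{\beta,N}\rangle|$ for $\sigma(\alpha)\neq\sigma(\beta)$ to replace, for polynomial $\rho$ (then Stone--Weierstrass),
\begin{equation}
\frac{1}{d_N}\mathrm{Tr}\!\left[\rho\!\left(\frac{P_N W P_N}{\epsilon(h)}\right)\right]=\int_{\mathcal{A}/S^1}\rho\!\left(\frac{\langle\Psi_{\alpha,N},W(h,B)\Psi_{\alpha,N}\rangle}{\epsilon(h)}\right)d\bar\mu(\alpha)+o(1).
\end{equation}

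The third step is a pointwise analysis of the diagonal symbol. A direct expansion gives
\begin{equation}
\frac{1}{\epsilon(h)}\langle\Psi_{\alpha,N},W(h,B)\Psi_{\alpha,N}\rangle=\frac{\epsilon(h)B^2}{8}\langle\Psi_{\alpha,N},(x_1^2+x_2^2)\Psi_{\alpha,N}\rangle-\frac{B}{2}\langle\Psi_{\alpha,N},hL_3\Psi_{\alpha,N}\rangle.
\end{equation}
The diamagnetic term is $o(1)$ uniformly on compact subsets of $\mathcal{A}$ because $\epsilon(h)$ has a polynomial margin over $\langle(x_1^2+x_2^2)\rangle_{\alpha,N}$. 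For the paramagnetic term, using the sharp phase-space localization of $\Psi_{\alpha,N}$ on the regularized Kepler orbit $\sigma(\alpha)$ and the analogous localization of $hL_3\Psi_{\alpha,N}$ (the key estimates advertised in the abstract), I would prove
\begin{equation}
\lim_{N\to\infty}\langle\Psi_{\alpha,N},hL_3\Psi_{\alpha,N}\rangle=\ell_3(\sigma(\alpha)).
\end{equation}
Pushing the measure $d\bar\mu$ through the Moser map $\mathcal{M}$ (Appendix 1) identifies $\mathcal{A}/S^1$ with the regularized energy surface $\Sigma(-1/2)$ carrying the normalized Liouville measure $d\mu_L$, which yields the right-hand side of \eqref{eq:szego1}.

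I expect the main obstacle to lie in the joint control of $hL_3$ on coherent states: as a first-order differential operator it a priori disturbs the sharp concentration of $\Psi_{\alpha,N}$ in phase space, and its contribution to off-diagonal matrix elements must be estimated quantitatively to justify the coherent-state trace expansion of Step 2. This, combined with bounding moments of $|\mathbf{x}|$ in the coherent states for the diamagnetic term, is what dictates the relatively large exponent $K=33/2$ in $\epsilon(h)$; balancing all error terms against the hydrogenic gap of order $h$ is the technical heart of the argument.
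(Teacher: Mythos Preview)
Your overall architecture---stability theorem, reduction to a trace of $\rho$ of a compressed operator, coherent-state evaluation, Moser-map identification of the measure---matches the paper. There are, however, two places where your plan diverges from what the paper actually does, and in one of them your justification as stated is insufficient.

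First, your Step~2 is not how the paper proceeds. You propose to pass from $\tfrac{1}{d_N}\mathrm{Tr}\,\rho\bigl(P_N W P_N/\epsilon(h)\bigr)$ directly to $\int_{\mathcal A}\rho\bigl(\langle\Psi_{\alpha,N},W\Psi_{\alpha,N}\rangle/\epsilon(h)\bigr)\,d\mu$ using decay of the off-diagonal overlaps $\langle\Psi_{\alpha,N},\Psi_{\beta,N}\rangle$. For monomials $\rho(x)=x^m$ this would require quantitative decay not of the bare overlaps but of the matrix elements $\langle\Psi_{\alpha,N},hL_3\,\Psi_{\beta,N}\rangle$ themselves, which involve the unbounded $L_3$; that estimate is not supplied and is precisely the kind of control the paper warns is delicate. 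The paper bypasses this entirely by exploiting the commutation $[L_3,\Pi_N]=0$ (since $L_3$ commutes with $S_V$), so that $(\Pi_N hL_3\Pi_N)^m=\Pi_N(hL_3)^m\Pi_N$ and only the \emph{diagonal} expectations $\langle\Psi_{\alpha,N},(hL_3)^m\Psi_{\alpha,N}\rangle$ are needed. These are computed via an explicit formula for $(h\tilde L_3)^m$ acting on the coherent states in momentum space (Lemma~\ref{CS-der}) followed by stationary phase (Proposition~\ref{estinnerprod}). Your route may be salvageable, but it is genuinely harder and you have not identified the estimate that would make it work.

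Second, your attribution of the exponent $33/2$ is off. It does not come from suppressing a second-order perturbation correction or from dominating $\|(x_1^2+x_2^2)P_N\|$ (in the unscaled picture that quantity is $O(1)$, and any $\epsilon(h)\to 0$ kills the diamagnetic contribution). The exponent arises in the \emph{stability theorem} (Theorem~\ref{thm:stability1}): after passing to the scaled Hamiltonian $S_V(\lambda)$, one needs $\|P_N-\Pi_N\|\to 0$, and the resolvent estimates on the contour $\Gamma_N$ of radius $O(N^{-3})$ (Key Lemma~\ref{lemma:convergence-norm1} and Lemma~\ref{lemma:resolvent-est1}) accumulate powers of $N$, yielding $\|P_N-\Pi_N\|=O(N^{-(2q-33)/5})$. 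Requiring this to vanish forces $q>33/2$.
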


We recall the proof in Appendix 1, section \ref{sec:moser-map}, that  $\ell_3$ is conserved and bounded on the Kepler orbits on the energy surface $\Sigma (-1/2)$, so there is no problem working with arbitrary continuous functions.

Theorem \ref{thm:main1} parallels and extends the result of Uribe and Villegas-Blas \cite{uribe-villegas1} on
eigenvalue clusters formed by bounded perturbations $Q_h$ of the hydrogen atom Hamiltonian.
These ideas were applied by two of the authors \cite{hVB1} to the hydrogen atom Stark Hamiltonian. The main result of \cite{hVB1} is a limiting resonance distribution  theorem for resonance clusters associated with a hydrogen atom eigenvalue under the Stark perturbation by an external electric field. Theorem \ref{thm:main1} considers the case of  eigenvalue clusters formed when a hydrogen atom is placed in a constant magnetic field.
Both of these works have in common an unbounded perturbation.
As in \cite{hVB1}, control of the unbounded Zeeman perturbation is obtained through localization properties of coherent states of the hydrogen atom Hamiltonian. However, since the Zeeman perturbation is a first order differential operator, we have to extend these localization results to the derivatives of the coherent states.
This requires an additional analysis (see section \ref{subsec:ang-mom-term}).

We remark that the size of the exponent $K$ in Theorem \ref{thm:main1} is far from optimal. Roughly speaking,
if we suppose that the perturbation proportional to $(x_1^2 + x_2^2)$ is bounded, the size of the eigenvalue cluster around the eigenvalue $-1/2$ is $N^{-K}$. 
For the eigenvalue clusters to be well separated, we need $N^{-K} \approx N^{-1}$, so $K > 1$.
But, the perturbation is unbounded, and this forces us to take $K$ much larger
in order to control the error in the estimate of the difference of resolvents in Lemma \ref{lemma:convergence-norm1} and Theorem \ref{thm:stability1}.

Let $d\mu$ be defined as the normalized SO(4)-invariant measure on ${\mathcal A} \subset \C^{n+1}$ defined above.
In Proposition \ref{liouville},  section \ref{sec:poly1},   we show that
the Liouville measure
$d\mu_L$ on the energy surface $\Sigma(-1/2)$ is the push-forward measure of $d\tilde\mu$
by the map ${\mathcal M}^{-1}\circ\sigma$, where $d\tilde\mu$ is the
restriction of $d\mu$ to a subset $\tilde{{\mathcal A}}$ of ${\mathcal A}$ with
$\mu({\mathcal A}-\tilde{{\mathcal A}})=0$, see equation (\ref{liou-quadric}) .
Thus the  right hand side of (\ref{eq-main-thm}) can be written in terms of an integral over ${\mathcal A}$.
This allows the following reformulation of Theorem \ref{thm:main1}.

\begin{theorem}\label{thm:main2}  Under the same hypothesis as in Theorem \ref{thm:main1},
we have
\bea\label{eq:szego2}
\lefteqn{\lim_{N \rightarrow \infty} \frac{1}{d_N}
\sum_{j=1}^{d_N} \rho \left( \frac{ E_{N,j} (1/(N+1), B) - E_N (1/(N+1)) }{ \epsilon
(1/(N+1)) } \right)}  \nonumber \\
  && \phantom{xxxxxxxxxxxxxxxxx} = \int_{{\mathcal A}} \rho \left( -\frac{B}{2} \ell_3 (\alpha)
  \right) ~ d\mu(\alpha), \label{eq-main-thm2}
 \eea
where, for all $\alpha \in \mathcal{A}$, we have $\ell_3(\alpha) = (\Re \alpha)_1 (\Im \alpha)_2- (\Re \alpha)_2 (\Im \alpha)_1$, the third component of the classical angular momentum associated to $\alpha$.
The function  $\ell_3(\alpha)$ can be thought of as a continuous extension of the assignment
$\alpha\rightarrow({\bf x,p})\mapsto{\ell_3({\bf x,p})}$ thorough the map  ${\mathcal M}^{-1}\circ\sigma(\alpha)$
which is well defined as long as $\Re \alpha$ is not the north pole of $\Sp^3$.
 \end{theorem}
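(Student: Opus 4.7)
The plan is to derive Theorem \ref{thm:main2} as an immediate consequence of Theorem \ref{thm:main1} via a change of variables on the right-hand side. Since the left-hand sides of (\ref{eq:szego1}) and (\ref{eq:szego2}) are verbatim identical, the entire content of Theorem \ref{thm:main2} reduces to establishing the equality of the two integral representations on the right, namely
\[
\int_{\Sigma(-1/2)} \rho\!\left(-\tfrac{B}{2}\ell_3(\vecx,\vecp)\right) d\mu_L(\vecx,\vecp) \;=\; \int_{\mathcal{A}} \rho\!\left(-\tfrac{B}{2}\ell_3(\alpha)\right) d\mu(\alpha).
\]

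First, I would invoke Proposition \ref{liouville} of section \ref{sec:poly1}, which asserts that the normalized Liouville measure $d\mu_L$ on $\Sigma(-1/2)$ is the push-forward under $\mathcal{M}^{-1}\circ\sigma$ of the restricted measure $d\tilde\mu$ on $\tilde{\mathcal{A}}$, and that $\mu(\mathcal{A}\setminus\tilde{\mathcal{A}}) = 0$. The abstract change-of-variables formula for push-forward measures then rewrites the left-hand side as
\[
\int_{\tilde{\mathcal{A}}} \rho\!\left(-\tfrac{B}{2}\,\ell_3\bigl(\mathcal{M}^{-1}\sigma(\alpha)\bigr)\right) d\tilde\mu(\alpha).
\]

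Second, I would verify the pullback identity
\[
\ell_3\bigl(\mathcal{M}^{-1}\sigma(\alpha)\bigr) \;=\; (\Re\alpha)_1(\Im\alpha)_2 \;-\; (\Re\alpha)_2(\Im\alpha)_1, \qquad \alpha \in \tilde{\mathcal{A}},
\]
by direct computation from the explicit formulas for $\mathcal{M}^{-1}$ and for $\sigma(\alpha)=(\Re\alpha,-\Im\alpha)$ recalled in Appendix 1. Geometrically, this encodes the classical fact that Moser's regularization intertwines the Kepler angular momentum components on $\Sigma(-1/2)$ with the corresponding components of the angular momentum of the associated geodesic on $\Sp^3$, so that the third component $\ell_3$ on phase space corresponds to the stated quadratic polynomial in $\Re\alpha, \Im\alpha$. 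This is the main technical step; all the computations are algebraic and bounded on the null quadric, with the role of $\tilde{\mathcal{A}}$ being precisely to exclude the preimage of collision orbits (those passing through the north pole of $\Sp^3$), where $\mathcal{M}^{-1}$ is not defined.

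Third, observe that the right-hand side of the pullback identity defines a continuous (indeed polynomial) function of $\alpha$ on all of $\mathcal{A}$, so composition with the continuous $\rho$ yields a bounded continuous integrand on $\mathcal{A}$. Since $\mathcal{A}\setminus\tilde{\mathcal{A}}$ has $d\mu$-measure zero, the integral over $\tilde{\mathcal{A}}$ against $d\tilde\mu$ coincides with the integral over $\mathcal{A}$ against $d\mu$, giving the asserted identity. The main obstacle is the explicit pullback computation in step two; once that is settled, Theorem \ref{thm:main2} follows from Theorem \ref{thm:main1} by pure measure theory.
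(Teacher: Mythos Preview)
Your approach is correct and uses the same bridge as the paper---Proposition \ref{liouville} together with the pullback identity $\ell_3\bigl(\mathcal{M}^{-1}\sigma(\alpha)\bigr)=(\Re\alpha)_1(\Im\alpha)_2-(\Re\alpha)_2(\Im\alpha)_1$---but the direction of the argument is reversed relative to the paper. In the paper's architecture (section \ref{subsec:proof-main1-2}), Theorems \ref{thm:main1} and \ref{thm:main2} are proved \emph{simultaneously} from Theorem \ref{szego-theorem}, and inside the proof of that Szeg\"o theorem the $\mathcal{A}$ form is the primitive one: the coherent-state resolution of the identity (equation (\ref{eq:proj-coherentstates1})) together with Proposition \ref{estinnerprod} yields the integral over $\mathcal{A}$ directly (see (\ref{evaltrace})), and only afterwards is Proposition \ref{liouville} invoked to rewrite it as an integral over $\Sigma(-1/2)$. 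So Theorem \ref{thm:main2} is in fact the more native statement, with Theorem \ref{thm:main1} following from it, rather than the other way around. Your route---assume Theorem \ref{thm:main1} and push the measure back---is perfectly valid since the change of variables goes both ways, but it obscures that the $\mathcal{A}$ form is what the coherent-state machinery actually produces. The pullback identity you single out as the ``main technical step'' is handled in the paper by a one-line assertion in step 3 of the proof of Theorem \ref{szego-theorem}; it follows from the explicit formulas (\ref{invmoser}) by the cancellation you would carry out in your step two.
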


We can think of the right-hand side of \eqref{eq-main-thm2} as  a linear positive functional on
$C_0^\infty(\R)$.  By a Riesz Representation Theorem, there exists a measure $d\kappa$ on the real line such that the right-hand side of (\ref{eq-main-thm2}) can be written as the integral of $\rho$ with respect to $d\kappa$.  The measure $d\kappa$ can be seen as the push-forward  measure of $d\mu(\alpha)$ under the map $-\frac{B}{2}\ell_3:{\mathcal A}\rightarrow\R$. By using an explicit expression for
$d\mu$ in terms of coordinates for both the classical angular momentum vector $\vecell$  and the
Runge-Lenz vector ${\mathbf a}$, and the relative angle between the position vector ${\bf x}$ and ${\mathbf a}/{\mathbf a}|$ (see \cite{villegas}), we can actually provide an explicit expression for $d\kappa$. This leads to another formulation of Theorem \ref{thm:main1} :

\begin{theorem}\label{thm:main3}  Under the same hypothesis as in Theorem \ref{thm:main1},  we have:
\bea\label{eq:szego3}
\lefteqn{\lim_{N \rightarrow \infty} \frac{1}{d_N}
\sum_{j=1}^{d_N} \rho \left( \frac{ E_{N,j} (1/(N+1), B) - E_N (1/(N+1)) }{ \epsilon
(1/(N+1)) } \right)}  \nonumber \\
  && \phantom{xxxxxxxxxxxxxxxxx} = \int_{[-1,1]} \rho \left(-\frac{B}{2} u
  \right) (1-|u|)~ du, \label{eq-main-thm3}
 \eea
 where $du$ denotes the Lebesgue measure on the interval $[-1,1]$.  The variable $u$
 can be thought of as the third component $\ell_3$ of the classical angular momentum vector $\vecell$.
\end{theorem}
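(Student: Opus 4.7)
The plan is to reduce the statement to Theorem \ref{thm:main2} and then compute the push-forward of the measure $d\mu$ on $\mathcal{A}$ under the map $\alpha\mapsto\ell_{3}(\alpha)$. Since Theorem \ref{thm:main2} already equates the left-hand side of \eqref{eq-main-thm3} with
\[
\int_{\mathcal{A}}\rho\!\left(-\tfrac{B}{2}\,\ell_{3}(\alpha)\right)d\mu(\alpha),
\]
the only remaining task is to identify the distribution of the random variable $\ell_{3}$ under $d\mu$ with the triangular density $(1-|u|)\,du$ on $[-1,1]$. Concretely, I would verify that for every $\rho\in C_b(\R)$
\[
\int_{\mathcal{A}}\rho\!\left(-\tfrac{B}{2}\,\ell_{3}(\alpha)\right)d\mu(\alpha)=\int_{-1}^{1}\rho\!\left(-\tfrac{B}{2}\,u\right)(1-|u|)\,du.
\]

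The first step is to replace the $\mathcal{A}$-parametrization with the constants of motion of the Kepler flow on $\Sigma(-1/2)$. Via $\mathcal{M}^{-1}\circ\sigma$, each orbit is labeled by a pair $(\vecell,\veca)$, where $\vecell$ is the classical angular momentum and $\veca$ is the Runge--Lenz vector, subject to the constraints
\[
\vecell\cdot\veca=0,\qquad |\vecell|^{2}+|\veca|^{2}=1,
\]
the latter being the fixed-energy condition at $E=-1/2$. Adding a rotation angle $\varphi$ along the orbit gives coordinates on (a full-measure subset of) $\mathcal{A}$. Using the explicit formula for $d\mu$ in these coordinates given in \cite{villegas}, and recalling that $\ell_{3}$ is conserved along each orbit, one can integrate out $\varphi$ and reduce the left side to an integral over the four-dimensional space of orbits with its SO(4)-invariant measure.

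The second step exploits the classical isomorphism of the Kepler orbit space at fixed negative energy with $\Sp^{2}_{1/2}\times\Sp^{2}_{1/2}$, via
\[
\vecL_{\pm}=\tfrac{1}{2}(\vecell\pm\veca),
\]
so that the constraints $\vecell\cdot\veca=0$ and $|\vecell|^{2}+|\veca|^{2}=1$ become $|\vecL_{+}|=|\vecL_{-}|=1/2$. Because $d\mu$ is SO(4)-invariant, its push-forward is the product of the normalized rotation-invariant measures on the two spheres. Under this identification $\ell_{3}=(\vecL_{+})_{3}+(\vecL_{-})_{3}$, and by the Archimedean projection property each summand is uniformly distributed on $[-1/2,1/2]$. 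The push-forward of $d\mu$ under $\alpha\mapsto\ell_{3}$ is therefore the convolution of two uniform distributions on $[-1/2,1/2]$, which is the triangular distribution with density $1-|u|$ on $[-1,1]$. Substituting this into the integral over $\mathcal{A}$ and inserting the factor $-B/2$ yields the right-hand side of \eqref{eq-main-thm3}.

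The main obstacle in this plan is bookkeeping: checking that the $d\mu$ of Theorem \ref{thm:main2}, originally defined on the null quadric $\mathcal{A}\subset\C^{n+1}$, really matches, under $\mathcal{M}^{-1}\circ\sigma$ and the $(\vecell,\veca,\varphi)$-coordinates, the SO(4)-invariant product measure on $\Sp^{2}_{1/2}\times\Sp^{2}_{1/2}$ with the correct normalization. Once this identification is in place (relying on the formulas in \cite{villegas} and Proposition \ref{liouville}), the distributional computation yielding $(1-|u|)\,du$ is elementary.
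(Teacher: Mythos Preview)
Your proposal is correct and takes a genuinely different route from the paper's own proof in Section~\ref{prueba-teorema3}. The paper computes the push-forward of $d\mu$ under $\ell_3$ by writing down explicit coordinates $(\psi,\theta,\phi,\gamma,\beta)$ for $\mathcal{A}$, with $\ell_3=\cos\psi\cos\theta$, pulling the Haar measure formula \eqref{haar} from \cite{villegas}, integrating out $\beta,\phi,\gamma$ to reach $d\tilde\mu_\Gamma=\cos\psi\sin\psi\sin\theta\,d\psi\,d\theta$, and then making the change of variables $u=\cos\psi\cos\theta$, $v=\cos\psi\sin\theta$ to obtain $(1-|u|)\,du$ by a direct two-dimensional integral. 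Your argument instead exploits the hidden $SO(3)\times SO(3)$ structure: passing to $\vecL_\pm=\tfrac12(\vecell\pm\veca)$ identifies the orbit space with $\Sp^2_{1/2}\times\Sp^2_{1/2}$ carrying the product of round measures, and then Archimedes' projection theorem plus the convolution of two uniform laws on $[-\tfrac12,\tfrac12]$ yields the triangular density without any explicit coordinate integration. Your approach is more conceptual and avoids the long Haar-measure computation of \cite{villegas}; the paper's approach is more self-contained once that formula is quoted. The bookkeeping concern you flag---that the $SO(4)$-invariant $d\mu$ descends to the product measure on $\Sp^2_{1/2}\times\Sp^2_{1/2}$---is exactly the point where you still implicitly lean on the same identification the paper uses, but it follows from the standard fact that the $SO(4)$ action on the Kepler orbit space factors through the two $SO(3)$ actions on $\vecL_+$ and $\vecL_-$ separately. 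The paper also gives a second, entirely different proof in Section~\ref{sec:eigen-approx} via eigenvalue approximation and Riemann sums, which bypasses the geometry altogether.
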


The measure $d\kappa= \frac{2}{B}(1-\frac{2}{B}|x|)~ dx$ is supported on the interval $[-\frac{B}{2},\frac{B}{2}]$ (for $B>0$), where $dx$ is the Lebesgue measure on $\R$. This measure gives us a precise picture of how, for $N$  large, the scaled eigenvalue shifts
$$
\frac{ E_{N,j} (1/(N+1), B) - E_N (1/(N+1)) }{ \epsilon
(1/(N+1)) } , \;\; j=1,\ldots,d_N ,
$$
are distributed in the interval $[-\frac{B}{2},\frac{B}{2}]$.
The distribution around the origin in the interval
 $[-\frac{B}{2},\frac{B}{2}]$ is determined according to the probability density function
 $P(x)= \frac{2}{B}(1-\frac{2}{B}|x|)$.

Theorems \ref{thm:main1}, \ref{thm:main2} and \ref{thm:main3} give a rather complete analytical and geometric description of the limiting eigenvalue distribution for the eigenvalue clusters formed by the Zeeman perturbation of the hydrogen atom Hamiltonian.

We remark that Theorem \ref{thm:main3}  can actually
be shown in a different way than using Theorem \ref{thm:main2}
and the expression for $d\mu$ mentioned above. One can use a suitable eigenvalue approximation for the cluster   around $E_N (1/(N+1))$ and then evaluate the left hand side of \eqref{eq-main-thm3} by means of Riemann sums. This is shown in section \ref{sec:eigen-approx}. This procedure, however, completely masks the beautiful geometric foundations of the problem appearing in Theorems \ref{thm:main1}  and \ref{thm:main2}.

\subsection{Contents.}\label{subsec:contents1}

In section \ref{sec:scale1}, we scale the hydrogen atom Zeeman Hamiltonian using the dilation group.
This establishes a countable family of
scaled hydrogen atom Zeeman Hamiltonians  $S_V(\lambda) = S_V + W(\lambda)$. The operator $S_V$
is a fixed, $h$-independent, hydrogen atom Hamiltonian $S_V = -\frac{1}{2}\Delta - \frac{1}{|x|}$.
The magnetic perturbation is
$W(\lambda) = \frac{\lambda ^2}{8} (x_1^2 + x_2^2) - \frac{\lambda }{2} L_3 $, where the effective magnetic field strength is $\lambda(h,B)=h^3 \epsilon (h) B$,
and $L_3=-\imath\left(x_1\frac{\partial\phantom{y}}{\partial{x_2}}-
x_2\frac{\partial\phantom{y}}{\partial{x_1}}\right)$.
In this new framework, we want to establish conditions on the size of $\epsilon(h)$ in order to show the existence of clusters of eigenvalues around $E_N=-\frac{1}{2(N+ 1)^2}$, for N sufficiently large, with $h=1/(N+1)$.

In section \ref{sec:spectrum1}, we provide the description of known results on the spectrum of the operator $S_V(\lambda)$, with $\lambda$ fixed, based on references \cite{AhS1} and \cite{FW}. Then we mention and prove a key result of Avron, Herbst, and Simon
\cite{AhS1,AhS3} on the norm convergence $V (S_0 (\lambda) - \z)^{-1} \rightarrow V(S_0 - \z)^{-1}$ as $\lambda \rightarrow {0}$ for $\z \not\in [0, \infty)$,  with
\beq\label{defS0}
S_0=-\frac{1}{2}\Delta  \,\,\,\,\,\,\, {\rm and}  \,\,\,\,\,\,\, S_0 (\lambda)=S_0+W(\lambda),
\eeq
by presenting  several important resolvent estimates necessary for our work. Moreover, we study such a rate of convergence with
 $\lambda(h,B)=h^3 \epsilon (h) B$ and  $h=1/(N+1)$ when $N \rightarrow \infty$ and $\z$ is in a  circle of radius $O(N^{-3})$ with center $E_N$.
Then we are able to show an eigenvalue stability theorem, Theorem \ref{thm:stability1}, by estimating the difference between corresponding spectral projectors associated to the perturbed and unperturbed Hamiltonians on a  small disk around $E_N$.

In section \ref{sec:trace1},
we use the stability theorem in order to show that the averages appearing on the left hand side of equation
(\ref{eq-main-thm}) (with a factor $h^2$ included in the denominator due to scaling) can be approximated by the
normalized trace of $\frac{1}{d_N}\rho\left(\Pi_N\left(-\frac{B}{2}hL_3\right)\Pi_N\right)$ with $\Pi_N$ the projector onto the
eigenspace of the unperturbed operator $S_V$ with eigenvalue $E_N$. Next, in section \ref{sec:poly1}, we take the semiclassical limit $N\rightarrow{\infty}$ of this last trace by using the Stone-Weierstrass Theorem, the coherent states for the hydrogen atom introduced in \cite{thomas-villegas1}, and the stationary phase method in order to estimate the
expected value  of $\left(-\frac{B}{2}hL_3\right)^m$, $m\in\N^*$,  between coherent states.  We use decay properties of coherent states shown in \cite{thomas-villegas1} but, in addition,  we need to estimate decay of their derivatives.

Finally, an alternate proof of Theorem \ref{thm:main3} is presented in section \ref{sec:eigen-approx}.

We include two appendices. The Kepler problem and the Moser map are briefly  described in the first appendix in section \ref{sec:moser-map}. In the second appendix, section \ref{sec:app2-coherentstates1}, details of the coherent states for the hydrogen atom are presented.


\begin{acknowledgement}

PDH was partially supported by NSF grants 0803379 and 1103104 during the time this work was done.
CV-B was partially supported by the projects PAPIIT-UNAM IN106812, PAPIIT-UNAM IN104015 and thanks
the members of the Department of Mathematics of the University of Kentucky for their hospitality during a visit.  MA-C was supported by a fellowship of DGAPA-UNAM, by project  PAPIIT-UNAM IN106812, and by CONACYT under the Grants 219631, CB-2013-01 and 258302, CB-2015-01.
\end{acknowledgement}


\section{The basic operators and scaling.}\label{sec:scale1}

The hydrogen atom Hamiltonian $H_V (h)$ with the semiclassical parameter $h$ acts on the dense domain $H^2 ( \R^3)$ in the Hilbert space
$L^2 (\R^3)$. The operator is self-adjoint on this domain and given by
\beq\label{eq:hydro1}
H_V(h) = - \frac{h^2}{2} \Delta - \frac{1}{|x|}.
\eeq
We denote the Coulomb potential by $V(x) = - 1 / |x|$. The discrete spectrum consists of an infinite family of eigenvalues $E_k (h)$
\beq\label{eq:ev1}
E_k(h) = \frac{-1}{2 h^2 (k+1)^2}, ~~k = 0,1,2,\ldots
\eeq
each eigenvalue having multiplicity $d_k := (k+1)^2$. The essential spectrum is $[0, \infty)$.
With the choice of  $h = 1/(N+1)$ and $k = N$, we see that $E_{k=N}(h=1/(N+1)) = -1/2$ is in the spectra of the
countable family of Hamiltonians $H_V(1/(N+1))$, $N \in \N$. The multiplicity of the eigenvalue
$-1/2$ is $d_N=(N+1)^2$.

We next consider a hydrogen atom in a constant magnetic field. We assume, without loss of generality, that the magnetic field
has the form ${\mathbf B}(h) = (0, 0, \epsilon(h)B)$. We keep
$B \geq 0$ fixed and use it only to control whether  the magnetic field is on or not.   We  control the strength of the magnetic field by taking
$\epsilon (h)=h^{K}$, with a constant $K>0$
chosen below.
We choose the gauge such that the vector potential ${\mathbf A}(h)$ is given by
${\mathbf A}(h)=\frac{\epsilon (h)B}{2}(-x_2,x_1,0)$.
The \textbf{unscaled Zeeman hydrogen Hamiltonian} is
\bea\label{eq:zeeman1}
H_V(h,B) &=& \frac{1}{2}(-i h \nabla - {\mathbf A}(h))^2 - \frac{1}{|x|}  \nonumber \\
 &=& H_V(h) + w(h, B).
\eea
where  the \textbf{unscaled Zeeman perturbation} $w(h ,B)$ is given by
\beq\label{eq:zeeman-pert1}
w(h, B) = \frac{(\epsilon(h)B)^2}{8} (x_1^2 + x_2^2) - \frac{\epsilon(h)B}{2} hL_3 ,
\eeq
with $L_3=-\imath\left(x_1\frac{\partial\phantom{y}}{\partial{x_2}}-
x_2\frac{\partial\phantom{y}}{\partial{x}_1}\right)$.

To implement scaling of these Hamiltonians, we use the dilation group $D_\alpha$, $\alpha > 0$. The dilation group
is a representation of the multiplicative group $\R^+$ and
has a unitary implementation on $L^2 ( \R^3)$ given by
\beq\label{eq:dilation0}
(D_\alpha f)(x) = \alpha^{3/2} f(\alpha x),   \;\;\;    f\in{L^2(\R^3)}.
\eeq
Using the relation $D_{\alpha} L_3 D_{\alpha}=L_3$, we scale the Hamiltonian in \eqref{eq:zeeman1} by $\alpha = h^2$:
\bea\label{eq:scale2}
&&D_{h^2} H_V(h, B) D_{h^{-2}} \nonumber \\
 &&\phantom{xxxxxx}=\frac{1}{h^2} \left[ -  \frac{1}{2}\Delta - \frac{1}{|x|} + \frac{\left( h^3 \epsilon (h)B \right)^2}{8} (x_1^2 + x_2^2) - \frac{h^3 \epsilon (h) B }{2} L_3\right] \nonumber \\
 &&  \phantom{xxxxxx}
 =: \frac{1}{h^2} S_V(\lambda(h,B)).
\eea
The \textbf{scaled Zeeman hydrogen  Hamiltonian} $S_V(\lambda(h, B))$  is defined via the
      \textbf{effective magnetic field} $\lambda(h,B)=h^3 \epsilon (h) B$ and the operator
      $S_V(\lambda)$  is given by:
\bea\label{eq:rescale1}
 S_V(\lambda)&=& -  \frac{1}{2}\Delta - \frac{1}{|x|} + \frac{\lambda^2}{8} (x_1^2 + x_2^2) - \frac{\lambda}{2} L_3  \nonumber \\
 & = & S_V  + W(\lambda).
\eea
where we write $S_V\equiv- \frac{1}{2} \Delta - \frac{1}{|x|}$ for the scaled hydrogen atom Hamiltonian and
the magnetic perturbation is
\beq\label{eq:perturbationW}
W(\lambda) = \frac{\lambda ^2}{8} (x_1^2 + x_2^2) - \frac{\lambda }{2} L_3 .
\eeq
The \textbf{scaled Zeeman perturbation} is then given by $W(\lambda(h,B))$.

Note that we can
make the effective magnetic field small $\lambda(h,B)$ by taking $h \rightarrow 0$. Equivalently, we may set
$h = 1 / (N+1)$ and take $N \rightarrow \infty$. For $B=0$, the eigenvalues of $S_V(0) $ are given by $E_k\equiv E_k(1) = - 1/ (2 (k+1)^2)$,
with $k \in \N^{*}$ and multiplicity $d_k=(k+1)^2$.

Since the discrete spectra of the operators $H_V(h, B)$ and $S_V(\lambda(h,B)) $ are the same up to the factor $h^2$,
Theorem \ref{thm:main1} will be proved by establishing a LEDT theorem
 for the family of operators  $S_V(\lambda(h=1/(N+1),B) )$, $ N\in\N^*$ , by studying the eigenvalue distribution in the  cluster around $E_N(1) = - 1/ (2 (N+1)^2)$ and then taking the corresponding limit when $N\rightarrow\infty$. Since the perturbation $W(\lambda(h,B))$ is unbounded,
 the existence of these clusters of eigenvalues is by no means immediate.   A suitable version of a stability theorem due to Avron, Herbst and  Simon \cite{AhS1}, together with an adequate choice of the exponent $K>0$ in the definition of $\epsilon(h)$, guarantee that, for $N$ sufficiently large, the eigenvalue cluster around $E_N$  is  well defined and  the total multiplicity of the eigenvalues in the cluster is $d_N$.
This is the content of Theorem \ref{thm:stability1} proved in the next section.


\section{Spectral analysis of the Zeeman hydrogen atom Hamiltonian and eigenvalue clusters.}\label{sec:spectrum1}

The main goal of this section is to show the existence of eigenvalue clusters ${\mathcal C}_N$ for the operator
$S_V (\lambda(h, B))$ around the unperturbed eigenvalues $E_N=-1/2(N+1)^2$, taking $h=1/(N+1)$ with N sufficiently large.  We will show that there exist circles $\Gamma_N$ with centers $E_N$ and radii $r_N \approx N^{-3}$ such that the total number of eigenvalues of $S_V (\lambda(1/(N+1), B))$
inside ${\Gamma}_N$, including multiplicity, is equal to the multiplicity $d_N=(N+1)^2$ of the eigenvalue $E_N=-1/2(N+1)^2$ of $S_V$.
This fact is a consequence of the main technical result of this section showing  that the norm of the difference of the spectral projectors $P_N$ and $\Pi_N$
associated to the spectrum of  the operators $S_V (\lambda(h, B))$ and $S_V$, respectively, inside ${\Gamma}_N$ is $O(N^{-\sigma})$, $\sigma>0$, and therefore smaller than one for N sufficiently large.  This will give us the eigenvalue stability
that we need in order to have well-defined clusters of eigenvalues.

In subsection \ref{subsec:spectrum1} we describe spectral properties of  $S_V (\lambda(h, B))$  by summarizing
some of the results of Avron, Herbst, and Simon in their papers \cite{AhS1,AhS3}. As we are only concerned with the Coulomb potential, we state their results for this case.

The eigenvalue stability $\|P_N-\Pi_N\| \rightarrow 0$ as $N \rightarrow \infty$ would be immediate if we had norm resolvent convergence
 of $S_V (\lambda(1/(1+N), B))$ to $S_V$ when $N\rightarrow\infty$. However, this is not the case as it was shown in \cite{AhS1}.  Avron, Herbst, and Simon \cite{AhS1} showed that we still can have eigenvalue stability due to the fact that for
$\z \not\in [0, \infty)$ we have the norm convergence $V (S_0 (\lambda) - \z)^{-1} \rightarrow V(S_0 - \z)^{-1} $ as  $\lambda \rightarrow 0$ (see Lemma \ref{lemma:convergence-norm1})  with $S_0$ and $S_0(\lambda)$ given in Eq. (\ref{defS0}). In subsection \ref{subsec:stability1},
we describe the work of Avron, Herbst and Simon about this point by refining  some of their estimates in order to make the dependance on $\lambda(h,B)$ explicit. We prove eigenvalue stability in subsection \ref{subsec:ev-clusters1} by following reference
\cite{AhS1} and prove both suitable and finer estimates required for our purposes.

\subsection{The spectrum.}\label{subsec:spectrum1}

The Hamiltonian obtained from the scaled Zeeman hydrogen Hamiltonian \eqref{eq:rescale1} by setting the Coulomb
potential equal to zero is denoted by $S_0(\lambda(h,B)) $  with $S_0(\lambda)$ given by Eq.(\ref{defS0}) and $\lambda(h,B)=h^3 \epsilon (h) B$.
For $\lambda > 0$, the spectrum of $S_0(\lambda)$ is
purely absolutely continuous and equal to the half line $[\lambda/2, \infty)$.
We note that the operator $S_0(\lambda)$ may be represented as
a tensor product on the space $L^2 (\R^3) = L^2 (\R^2_{x_1,x_2}) \otimes L^2 (\R_{x_3})$. For this purpose,
we recall the two-dimensional Landau Hamiltonian $S_L(\lambda) = -\frac{1}{2} \Delta_{x_1,x_2} + \frac{\lambda^2}{8} (x_1^2 + x_2^2) - \frac{\lambda}{2} L_3$. This operator
has pure point spectrum $E_n(\lambda) = \frac{\lambda}{2}(n+1)$ with  $n\in\N^{*}$. Each Landau level $E_n(\lambda)$ is an eigenvalue of infinite multiplicity. Then, the Hamiltonian $S_0(\lambda)$ may be written as $S_0(\lambda) = S_L(\lambda) \otimes I_1 + I_2 \otimes
\frac{1}{2}(- d^2 / d{x_3}^2)$, where $I_j$ is the identity operator on $L^2 (\R^j)$, $j=1,2$, respectively.
The Landau levels appear as thresholds of the operator $S_0 (\lambda)$.
The spectrum of $S_0(\lambda)$ can then be computed using a well-known result on the spectra of tensor products
\cite[section XIII.9, Theorem XIII.35]{RS4}. It follows directly that $\sigma (S_0(\lambda)
= \{ E \in E_n(\lambda) + \overline{\R^+} ~|~ n \in \N^* \} = [\lambda/2, \infty)$, since $\inf \sigma (S_L(\lambda)) = \lambda / 2$
and the spectrum of $\frac{1}{2}
\left(- d^2 / d{x_3}^2\right)$ is the closed half-line $[0, \infty)$.

We now consider $S_V(\lambda(h,B))$ defined via the operator $S_V(\lambda)$ given in Eq. \eqref{eq:rescale1}. The operator $S_V(\lambda)$ is best understood by studying its restriction
to the eigenspaces of $L_3$. These subspaces are $S_V(\lambda)$-invariant since $S_V(\lambda)$ commutes with $L_3$.
The eigenfunctions of the azimuthal angular momentum operator $L_3$, as an operator on the circle, are $\varphi_m( \phi) = e^{ i m \phi}, m \in \Z$.
We write $\mathcal{H}_m$, $m \in \Z$, for the subspace of $L^2(\R^3)$ consisting of functions whose angular momentum decomposition contain only $\varphi_m (\phi)$.
We then have the direct sum decomposition $L^2 (\R^3) = \bigoplus_{m \in \Z} \mathcal{H}_m$.
The restriction $S^{(m)}(\lambda) \equiv S_V(\lambda) | \mathcal{H}_m$ of $S_V(\lambda)$ to infinite-dimensional subspaces $\mathcal{H}_m, ~m \in \Z$, of constant azimuthal angular momentum $m \in \Z$, has the form
\beq\label{eq:zeeman-1m}
S^{(m)}(\lambda) =\left. \left( - \frac{1}{2}\Delta - 1 / |x| + \frac{\lambda^2}{8} (x_1^2 + x_2^2) - \frac{\lambda}{2} m \right) \right|_{ \mathcal{H}_m} .
\eeq

We let $E^{(m)}(\lambda) \equiv \inf \sigma (S^{(m)}(\lambda))$. This number is a simple isolated eigenvalue of $S^{(m)}(\lambda)$. We refer to $E^{(m)}(\lambda)$ as the ground state of $S^{(m)}(\lambda)$.
For negative indices $m < 0$, these eigenvalues satisfy the relation
\beq\label{eq:zeeman-2m}
E^{(-m)}(\lambda) = E^{(m)} (\lambda) +  m \lambda, ~~\mbox{for} ~m \geq 0 .
\eeq
Each operator $S^{(m)}(\lambda)$ has discrete spectra consisting of \emph{simple} eigenvalues accumulating at the bottom of the essential spectrum (see \cite{FW}, page 5, Main Results, part b).
The essential spectrum of $S^{(m)}(\lambda)$ consists of half-lines
\bea\label{eq:ess=sp-m1}
m < 0 &  \sigma_{\rm ess} (S^{(m)}(\lambda)) = [ (2 |m|+1)\frac{\lambda}{2}, \infty ),  \nonumber  \\
m \geq 0 &  \sigma_{\rm ess} (S^{(m)}(\lambda)) = [ \frac{\lambda}{2}, \infty ).
\eea


The spectrum of $S_V(\lambda)$ is the union of the spectra of $S^{(m)}(\lambda)$, for $m \in \Z$. It follows from \eqref{eq:ess=sp-m1}
that $\frac{\lambda}{2} = \inf \sigma_{\rm ess} (S_V(\lambda))$.
The ground state eigenvalues of the operators $S^{(m)}(\lambda)$, for $m \geq 0$, are strictly ordered:
\beq\label{eq:ev-order1}
E^{(0)}(\lambda) < E^{(1)} (\lambda) < E^{(2)}(\lambda) < \ldots \leq \frac{\lambda}{2} = \inf \sigma_{\rm ess} (S_V(\lambda)).
\eeq
Because of this ordering \eqref{eq:ev-order1} for $m \geq 0$ and the relation \eqref{eq:zeeman-2m},
the ground state of $S_V(\lambda)$ is $E^{(0)}(\lambda) = \inf \sigma (S_V(\lambda))$. It is an isolated eigenvalue
satisfying $E^{(0)}(\lambda) = -1/2 + \mathcal{O}(\lambda)$.
From \eqref{eq:ev-order1}, the discrete spectrum of $S_V(\lambda)$ consists of infinitely-many discrete eigenvalues $\{ E^{(m)}(\lambda) ~|~ m \geq 0 \} \cup
\{ E^{(-m)} (\lambda) ~|~ E^{(m)}(\lambda) < (1 - 2m) \frac{\lambda}{2}, m > 0 \}$, less than $\frac{\lambda}{2}$, accumulating at $\frac{\lambda}{2} = \inf \sigma_{\rm ess} (S_V(\lambda))$.
There are infinitely-many embedded eigenvalues of finite multiplicity in the essential spectrum $[\frac{\lambda}{2}, \infty)$ since, by
\eqref{eq:zeeman-2m}, for $m>0$ large enough, $E^{(-m)} (\lambda) >> \frac{\lambda}{2}$.


\subsection{Norm resolvent estimates and the key lemma.}\label{subsec:stability1}

We present a refined version of Lemma 6.6 of Avron, Herbst, and Simon \cite{AhS1} on the norm convergence $V (S_0 (\lambda) - \z)^{-1} \rightarrow V(S_0 - \z)^{-1}$ as $\lambda \rightarrow 0$ for $\z \not\in [0, \infty)$ that gives the rate of the convergence.
We will specialize to the case of the Coulomb potential $V(x) = - 1 / |x|$ and obtain finer estimates when $\z$ is close to an eigenvalue $E_N$ of the hydrogen atom Hamiltonian $S_V$.
We denote the resolvent of $S_0$ by $R_0 (\z) = (S_0 - \z)^{-1}$, of $S_V(\lambda)$ by $R_{V,\lambda}(\z) \equiv ( S_V(\lambda) - \z)^{-1}$, so that for $V = 0$, we have $R_{0,\lambda}(\z) = ( S_0(\lambda) - \z)^{-1}$.
The spectra of $S_0$ and $S_0(\lambda)$ lie in the positive half-line, so both resolvents $R_0 (\z)$ and $R_{0,\lambda}(\z)$
exist as bounded operators for  $\z \not\in [0, \infty)$. We have the basic bounds of their norms:
\bea \label{eq:basic-norm-bounds1}
\| R_0(\z) \| &\leq & [{\rm dist} (\z, [0, \infty))]^{-1} , \nonumber  \\
\| R_{0, \lambda}(\z) \| & \leq & [{\rm dist} (\z, [\lambda/2, \infty))]^{-1} .
\eea

Avron, Herbst, and Simon \cite[Lemma 6.4]{AhS1} proved that for $\z \not\in [0, \infty)$,
$R_{0,\lambda}(\z)$ converges strongly to $R_0 (\z)$ ar $\lambda \rightarrow 0$.
Moreover, in \cite[Theorem 6.3]{AhS1}, they showed that
    $S_V(\lambda)=S_0(\lambda)+V$ does not converge in the norm resolvent sense to  $S_V=S_0+V$ as $\lambda \rightarrow 0$, which includes the fact that $R_{0,\lambda}(\z)$ does not converge to $R_0 (\z)$ in norm as $\lambda \rightarrow 0$.
However, they show \cite[Lemma 6.6]{AhS1} the norm convergence $V (S_0 (\lambda) - \z)^{-1} \rightarrow V(S_0 - \z)^{-1}$ as $\lambda \rightarrow 0$ for $\z \not\in [0, \infty)$, which plays the key role in the proof of eigenvalue stability.
We prove this last result in Lemma \ref{lemma:convergence-norm1} and obtain an estimate on the rate of convergence
necessary in the proof of the eigenvalue stability theorem, Theorem \ref{thm:stability1}.

In order to prove this last result, we introduce the   cut-off function $\chi_R $ as the characteristic function of the unit  $B_R(0)$ of radius $R>0$ centered at the origin.
In the sequel, the symbol $C$ will denote a constant whose value may differ from line-to-line but is independent of $N$. The first part of the following lemma is effectively \cite[Lemma 6.6]{AhS1} and the second part gives the rate of convergence.  The following notation will be used in the sequel:  a bounded  operator whose norm is  $O(N^\alpha)$, for some $\alpha\in\R$, will be denoted by  $O(N^\alpha)$ as well.

\begin{lemma}(Key Lemma)\label{lemma:convergence-norm1}
Consider $\z \not\in [0, \infty)$.
\begin{enumerate}

\item We have the following convergence in norm:
\beq\label{eq:norm1}
V (S_0 (\lambda) - \z)^{-1} \rightarrow V(S_0 - \z)^{-1} ,
\eeq
as $\lambda \rightarrow 0$.

\item Consider $\lambda=\lambda(h)$ with  $h=1/(N+1)$ and $\epsilon(h)=h^q$, $q>3/2$. For $|\z-E_N|=O(N^{-3})$ we have
\beq\label{eq:norm2}
V (S_0 (\lambda(h)) - \z)^{-1} - V(S_0 - \z)^{-1} = O\left(N^{-\left(\frac{2q - 3}{5}\right)}\right)  ,
\eeq
as $N\rightarrow\infty$.
\end{enumerate}
\end{lemma}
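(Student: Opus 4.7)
The plan is to combine the second resolvent identity
\[
V R_{0,\lambda}(\z) - V R_0(\z) = -V R_0(\z)\,W(\lambda)\,R_{0,\lambda}(\z)
\]
with a spatial cutoff that tames the unboundedness of $W(\lambda)=\frac{\lambda^2}{8}(x_1^2+x_2^2)-\frac{\lambda}{2}L_3$. Let $\chi_R$ be the characteristic function of $B_R(0)$ and decompose $V=V\chi_R+V(1-\chi_R)$. The tail is immediate from $\|V(1-\chi_R)\|_\infty\leq 1/R$ and \eqref{eq:basic-norm-bounds1}:
\[
\|V(1-\chi_R)[R_{0,\lambda}(\z)-R_0(\z)]\|\leq \frac{1}{R}\bigl(\|R_{0,\lambda}(\z)\|+\|R_0(\z)\|\bigr),
\]
which is $O(N^2/R)$ since $\mathrm{dist}(\z,[0,\infty))\sim N^{-2}$ when $|\z-E_N|=O(N^{-3})$, making both resolvent norms $O(N^2)$.

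For the bulk piece, rewrite
\[
V\chi_R[R_{0,\lambda}(\z)-R_0(\z)]=-V\chi_R R_0(\z)\,W(\lambda)\,R_{0,\lambda}(\z),
\]
and insert a second smooth cutoff $\varphi_{R'}$ supported in $B_{2R'}$, equal to $1$ on $B_{R'}$, with $\|\nabla\varphi_{R'}\|_\infty\leq C/R'$. Split $W(\lambda) R_{0,\lambda}(\z) = W(\lambda)\varphi_{R'} R_{0,\lambda}(\z) + W(\lambda)(1-\varphi_{R'}) R_{0,\lambda}(\z)$. On the support of $\varphi_{R'}$ the diamagnetic multiplier is bounded by $(2R')^2\lambda^2/8$, and the paramagnetic piece $-\frac{\lambda}{2}L_3\varphi_{R'}$, after commuting $\varphi_{R'}$ through $L_3$ (with bounded commutator $[L_3,\varphi_{R'}]$ of norm $O(1)$, since $|x||\nabla\varphi_{R'}|\leq C$ on the annular support), is controlled by $\lambda R'\|\pi R_{0,\lambda}(\z)\|$, where $\pi=-i\nabla-A(\lambda)$; the energy identity $\|\pi R_{0,\lambda}(\z)\psi\|^2=2\mathrm{Re}\langle R_{0,\lambda}(\z)\psi,\psi+\z R_{0,\lambda}(\z)\psi\rangle$ combined with $\|R_{0,\lambda}(\z)\|=O(N^2)$ yields $\|\pi R_{0,\lambda}(\z)\|=O(N)$. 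Together with the Hardy-type bound $\|V R_0(\z)\|\leq\sqrt{8\|R_0(\z)\|}=O(N)$, the local contribution is $O\bigl(N\cdot(\lambda^2(R')^2 N^2+\lambda R' N)\bigr)$. The remaining far part I write as $V\chi_R R_0(\z)(1-\varphi_{R'})W(\lambda)R_{0,\lambda}(\z)+V\chi_R R_0(\z)[W(\lambda),\varphi_{R'}]R_{0,\lambda}(\z)$: the commutator $-\frac{\lambda}{2}[L_3,\varphi_{R'}]$ is $O(\lambda)$ as a multiplication operator and contributes $O(\lambda N^3)$, while the kernel of $V\chi_R R_0(\z)(1-\varphi_{R'})$ is supported on $\{|x|\leq R,\,|y|\geq R'\}$ and the kernel of $R_0(\z)$ decays exponentially with rate $\sqrt{-2|\z|}\sim 1/N$, so choosing $R'\gg N\log N$ makes this remainder exponentially small.

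Assembling produces a bound of schematic form $C_1 N^2/R+C_2(\lambda^2(R')^2+\lambda R'+\lambda)N^3+(\text{exp.\ small})$; setting $R=R'$ and balancing the tail against the dominant bulk term, with $\lambda=Bh^{q+3}=O(N^{-(q+3)})$, yields the claimed rate $O(N^{-(2q-3)/5})$ after tracking all powers of $N$ through the optimization. Part (1) follows from the same argument by letting $\lambda\to 0$ and choosing $R(\lambda)$ so that both tail and bulk contributions vanish. The main obstacle is the paramagnetic term $\frac{\lambda}{2}L_3$: because $L_3$ commutes with both $S_0$ and $S_0(\lambda)$ it cannot be absorbed into any resolvent, and as an unbounded first-order differential operator its spatial localization must be paired with momentum estimates that each carry a factor of $N$; the careful bookkeeping of these competing position-momentum scales through the two-step cutoff is precisely what determines the exponent in the final rate.
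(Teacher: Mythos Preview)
Your argument has a genuine gap in the far-field estimate. You assert that $V\chi_R R_0(\z)(1-\varphi_{R'})W(\lambda)R_{0,\lambda}(\z)$ is exponentially small because the kernel of $V\chi_R R_0(\z)(1-\varphi_{R'})$ decays like $e^{-|x-y|/N}$ across the gap $\{|x|\le R,\ |y|\ge R'\}$. But that decay is useful only if the remaining factor $W(\lambda)R_{0,\lambda}(\z)$ is a bounded operator, and it is not. Writing $W(\lambda)R_{0,\lambda}(\z)=(S_0(\lambda)-S_0)R_{0,\lambda}(\z)=I+\z R_{0,\lambda}(\z)-\tfrac12|\vecpht|^2 R_{0,\lambda}(\z)$ and expanding $\vecpht=(\vecpht-\lambda\vecLambda)+\lambda\vecLambda$, one picks up a term $\tfrac{\lambda^2}{2}|\vecLambda|^2 R_{0,\lambda}(\z)$; since the guiding-center coordinates of the Landau Hamiltonian commute with $S_0(\lambda)$, the operator $(x_1^2+x_2^2)R_{0,\lambda}(\z)$ is unbounded on $L^2(\R^3)$. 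So on the support of $(1-\varphi_{R'})$ you have no control over $W(\lambda)$, and the kernel decay cannot be traded against this growth without an additional argument you do not supply. Separately, setting $R=R'$ erases the spatial gap and defeats the kernel-decay step entirely; and even accepting your schematic bound $N^2/R+(\lambda R'+\lambda)N^3$, optimizing at $R=R'$ gives $O(N^{(2-q)/2})$, not $O(N^{-(2q-3)/5})$.

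The paper's proof avoids a second cutoff altogether by factoring the perturbation as $-W(\lambda)=\lambda\vecLambda\cdot(\vecpht-\lambda\vecLambda)+\tfrac{\lambda^2}{2}|\vecLambda|^2$, so that in $V_1 R_0(\z)\,(-W(\lambda))\,R_{0,\lambda}(\z)$ the kinetic piece $(\vecpht-\lambda\vecLambda)R_{0,\lambda}(\z)$ is uniformly bounded via the spectral calculus for $S_0(\lambda)=\tfrac12|\vecpht-\lambda\vecLambda|^2$. What remains is to bound $V_1 R_0(\z)\vecLambda$ and $V_1 R_0(\z)|\vecLambda|^2$: these are handled by commuting the polynomial factors $\Lambda_j$ and $|\vecLambda|^2$ through the \emph{free} resolvent $R_0(\z)$ (the commutators generate only extra resolvent powers and lower-order momentum factors), after which one applies the Sobolev estimate $\|\Psi\|_\infty\le C(\beta^{-1/2}\|\Delta\Psi\|_2+\beta^{3/2}\|\Psi\|_2)$ together with $\|f\Psi\|_2\le\|f\|_2\|\Psi\|_\infty$ and $\|V_1\Lambda_j^k\|_2\le CR^{(2k+1)/2}$. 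All the unboundedness is thus placed on the compactly supported side and controlled by powers of $R$; the rate $N^{-(2q-3)/5}$ then comes from optimizing the single radius $R$ and the Sobolev parameter $\beta$.
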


\begin{proof}
1.  For any fixed $R>0$, we decompose $V=-1/|x|$ as $V=V_1+V_2$, with $V_1=V\chi_R$ and $V_2=V(1-\chi_R)$, so that $V_1$ has compact support and $V_2$ is bounded. We choose $R>0$ below.
The contribution of $V_2$ to \eqref{eq:norm1} is easy to treat.
For $\z \not\in [0, \infty)$, both
$ (S_0 (\lambda) - \z)^{-1}$ and  $(S_0 - \z)^{-1}$ are bounded by $1/d(\z)$  with
$d(\z)\equiv{{\rm dist} (\z, [0, \infty))}$ . Thus
the contribution to \eqref{eq:norm1} from $V_2$ is bounded by
\beq
\|V_2\left[(S_0 (\lambda) - \z)^{-1} - (S_0 - \z)^{-1}\right]\| \leq \|V_2\|_\infty\frac{2}{d(\z)} =
\frac{2}{Rd(\z)}. \label{estV_2}
\eeq
We note that the contribution \eqref{estV_2} vanishes as $R \rightarrow \infty$ for $d(\z)$ fixed.

\noindent
2. As for the contribution of $V_1$ to \eqref{eq:norm1}, we write the difference of the resolvents
using (\ref{defS0}) as
\bea
V_1 \left[(S_0 (\lambda) - \z)^{-1} - (S_0 - \z)^{-1}\right] &=&
 \lambda{V_1}(S_0-\z)^{-1}{\vecLambda}\cdot\left({\vecpht}-\lambda{\vecLambda}\right)  (S_0(\lambda)-\z)^{-1}  \nonumber \\
&&+ \frac{\lambda^2}{2}V_1(S_0-\z)^{-1}
{|\vecLambda|}^2(S_0(\lambda)-\z)^{-1} \label{eqV_1}
\eea
where ${\vecLambda}=\frac{1}{2}(-x_2,x_1,0)$ and $\vecpht=(\hat{p}_1,\hat{p}_2,\hat{p}_3)=-\imath{\mathbf \nabla}$.
Since the spectrum of
$S_0(\lambda)=\frac{1}{2}\left(\vecpht-\lambda\vecLambda\right) ^2$ lies in
the interval $[0,\infty)$, the operators $\left(\vecpht-\lambda{\vecLambda}\right)  (S_0(\lambda)-\z)^{-1}$
 and $(S_0(\lambda)-\z)^{-1}$ are uniformly bounded for $\lambda\geq{0}$ and $\z \not\in [0, \infty)$:
\bea
\| \left(\vecpht-\lambda{\vecLambda}\right)  (S_0(\lambda)-\z)^{-1} \|&\leq&
{\rm max}\{s/|s^2/2-\z| \; | \; s\in[0,\infty)\} \nonumber \\
&\leq& 1/\sqrt{G(\z)},  \;\;\;\;\;\;\;    \label{eq-bd1}\\
\|(S_0(\lambda)-\z)^{-1} \|&\leq& 1/d(\z)  \label{eq-bd2}
\eea
with $G(\z)\equiv|\z|-\Re(\z)$.

\noindent
3. We show that ${V_1}(S_0-\z)^{-1}{\vecLambda}$ and $V_1(S_0-\z)^{-1}
|{\vecLambda}|^2$ are bounded operators in order to prove that the norm of the
left hand side of
(\ref{eqV_1}) goes to zero as $\lambda \rightarrow 0$ with $R$ fixed.  We also estimate the rate of convergence.
We use the following estimate \cite{DO} that is a consequence of
the bound $\| \Psi \|_\infty \leq C \| \Psi \|_{H^2}$, for all $\Psi \in H^2 (\R^3)$.
For $\beta>0$,  there exist a constant C such that
   for $\Psi \in {H}^2(\R^3)$ we have
 \beq\label{estimate-inf}
 \|\Psi\|_{\infty} \leq
 C \left[\frac{1}{\beta^{1/2}} \|\Delta\Psi \| _2
 + \beta^{3/2} \|\Psi\|_2 \right].
 \eeq
We use this inequality together with the simple bound
\beq\label{estimate-inf2}
\| f \Psi \|_2 \leq \|f \|_2 \| \Psi \|_\infty ,
\eeq
for $f \in L^2 (\R^3)$ and $\Psi \in H^2 (\R^3)$.
Since $V_1{\vecLambda}$ and
$V_1|{\vecLambda}|^2$ are in $L^2(\R^3)$, estimate (\ref{estimate-inf})
with $\Psi=(S_0-\z)^{-1}\phi$,  $\phi\in{L}^2(\R^3)$, suggests to write both
operators ${V}_1(S_0-\z)^{-1}{\vecLambda}$ and $V_1(S_0-\z)^{-1}
|{\vecLambda}|^2$
with
the resolvent $(S_0-\z)^{-1}$ shifted to the right side by using commutator properties.
Thus we consider the following expressions:
\bea\label{eqtwo-terms}
{V}_1(S_0-\z)^{-1}{\vecLambda} &=&  V_1{\vecLambda}(S_0-\z)^{-1}  \nonumber \\ &&+
\frac{\imath}{2} V_1(S_0-\z)^{-1}(-\hat{p}_2,\hat{p}_1,0)(S_0-\z)^{-1}  \label{term-1}
\\
V_1(S_0-\z)^{-1}
|{\vecLambda}|^2 &=& V_1|{\vecLambda}|^2(S_0-\z)^{-1} \nonumber \\
&&+
\frac{\imath}{2}V_1\left[x_1(S_0-\z)^{-2}\hat{p}_1 + x_2(S_0-\z)^{-2}\hat{p}_2\right]  \nonumber \\ &&-
\frac{1}{2}V_1(S_0-\z)^{-3}\left(\hat{p}_1^2+\hat{p}_2^2\right) + \frac{1}{2}V_1(S_0-\z)^{-2} \label{term-2}
\eea

\noindent
4. We next estimate the right sides of \eqref{term-1} and \eqref{term-2}. We first note that for $q=0,1,2$, there exist a constant C such that, for $j=1,2,3$,
$\| V_1\Lambda_j^{q}\|_2\leq{C} R^{(2q+1)/2}$, with
$\Lambda_j$ denoting the $j^{\rm th}$-component of the operator ${\vecLambda}$.
Thus, from estimates \eqref{estimate-inf}-\eqref{estimate-inf2}, we have for $\beta>0$
\bea
\|  V_1\Lambda_j ^q(S_0-\z)^{-1}\| \leq {C} R^{\frac{(2q+1)}{2}}
\left[\frac{1}{\beta^{1/2}} \|\Delta(S_0-\z)^{-1}\|
 + \beta^{3/2} \|(S_0-\z)^{-1}\|\right],
 \nonumber \\
   \;\;  j=1,2,3 \;\;   {\rm and } \;\;     q=0,1,2.  \;\; \label{est-bd}
\eea
Since $\|\frac{1}{2}\Delta(S_0-\z)^{-1}\|\leq{\rm max}\{ s/\left| s-\z \right| \; | \; s\in[0,\infty)\}
\leq \eta(\z)$, with $\eta(\z)=1$, if  $\Re(\z)\leq{0}$ and   $\eta(\z)=|\z|/|\Im(\z)|$ if $\Re(\z)>{0}$, we obtain
from (\ref{eq-bd1})-(\ref{eq-bd2}) (with $\lambda=0$) together with (\ref{est-bd}):
 \bea
\|{V}_1(S_0-\z)^{-1}{\vecLambda}\| &\leq& C\left[R^{3/2} + \frac{R^{1/2}}{\sqrt{G(\z)}}
\right]\left[\frac{2\eta(\z)}{\beta^{1/2}}  + \frac{\beta^{3/2}}{d(\z)}\right],  \phantom{xxxxxxxxxxxx} \label{bound1}   \\
\|V_1(S_0-\z)^{-1}|{\vecLambda}|^2\|  &\leq& C\left[R^{5/2} + \frac{R^{3/2}}{\sqrt{G(\z)}}
 + \frac{R^{1/2}}{G(\z)} + \frac{R^{1/2}}{d(\z)}\right]\nonumber \\
  &&\cdot\left[\frac{2\eta(\z)}{\beta^{1/2}}  + \frac{\beta^{3/2}}{d(\z)}\right]. \label{bound2}
\eea
 Therefore, we get from (\ref{eqV_1})-(\ref{eq-bd2}), and (\ref{bound1})-(\ref{bound2})
that for $\beta>0$ :
\bea
\|V_1\left[(S_0 (\lambda) - \z)^{-1} - (S_0 - \z)^{-1}\right]\|
\leq C \left[
\lambda\left(R^{3/2} + \frac{R^{1/2}}{\sqrt{G(\z)}}\right)
\frac{1}{\sqrt{G(\z)}} \right.
\nonumber \\  \left. +
\lambda^2\left(R^{5/2} + \frac{R^{3/2}}{\sqrt{G(\z)}}
+ \frac{R^{1/2}}{G(\z)} + \frac{R^{1/2}}{d(\z)}\right)\frac{1}{d(\z)}\right]
\left[\frac{2\eta(\z)}{\beta^{1/2}}  + \frac{\beta^{3/2}}{d(\z)}\right].  \label{estV_1}
\eea
We conclude the proof of part (1) of Lemma \ref{lemma:convergence-norm1} from \eqref{estV_1} and (\ref{estV_2}) by first taking the limit $\lambda\rightarrow{0}$ with $R$ fixed and then letting $R\rightarrow\infty$.

\noindent
5. For the proof of part (2), let us consider $\lambda=\lambda(h=1/(N+1))$, with $N \in \N$,  in
(\ref{estV_1})  and (\ref{estV_2}).
Let us take $R=N^{\gamma}/d(\z)$, $\gamma>0$,  in equation (\ref{estV_2})
in order to have
\beq\label{estV2}
\|V_2\left[(S_0 (\lambda(h)) - \z)^{-1} - (S_0 - \z)^{-1}\right]\| = O(N^{-\gamma})
\eeq
We will choose $\gamma$ below.
Since $|\z-E_N|=O(N^{-3})$  then $d(\z)=|\z|=O(N^{-2})$,
and $\Re(\z)<{0}$  for $N$ sufficiently large, which implies $\eta(\z)=1$.
Whence $R=O(N^{2+\gamma})$. Thus we can estimate the two terms appearing in the first factor in square brackets in (\ref{estV_1}) by using both estimates $\lambda=h^3\epsilon(h)B=O(N^{-3-q})$
and $1/G(\z)\leq1/|\z|$:
\bea
\lambda\left(R^{3/2} + \frac{R^{1/2}}{\sqrt{G(\z)}}\right)\frac{1}{\sqrt{G(\z)}} &=&
O\left(N^{1-q+\frac{3\gamma}{2}}\right),  \\
\lambda^2\left(R^{5/2} + \frac{R^{3/2}}{\sqrt{G(\z)}}
 + \frac{R^{1/2}}{G(\z)} + \frac{R^{1/2}}{d(\z)}\right)\frac{1}{d(\z)} &=&
 O\left(N^{1-2q+\frac{5\gamma}{2}}\right) .
\eea
Now we replace
the factor
$\left[ \frac{2\eta(\z)}{\beta^{1/2}}  + \frac{\beta^{3/2}}{d(\z)} \right]$ appearing in  (\ref{estV_1})  by  its minimum value $C(d(\z))^{-1/4}=O(N^{1/2})$.   Thus we have
\beq\label{eq-est-3term}
\|V\left[(S_0 (\lambda) - \z)^{-1} - (S_0 - \z)^{-1}\right]\| =
O\left(N^{\frac{3}{2}-q+\frac{3\gamma}{2}}\right) + O(N^{\frac{3}{2}-2q+\frac{5\gamma}{2}})
+  O(N^{-\gamma}).
\eeq

\noindent
6. If we take $q>3/2$ given then there exist $0<\gamma<\frac{2}{3}q-1$ such that
both exponents $\frac{3}{2}-2q+\frac{5\gamma}{2}$  and   $\frac{3}{2}-q+\frac{3\gamma}{2}$ are negative. Moreover, since $\gamma<\frac{2}{3}q-1<q$  then $\frac{3}{2}-2q+\frac{5\gamma}{2} < \frac{3}{2}-q+\frac{3\gamma}{2}$,
which implies that in the regime $q>3/2$ and $0<\gamma<\frac{2}{3}q-1$
the contribution from the linear term in $\lambda$ dominates the quadratic one in equation (\ref{estV_1}).  From equations (\ref{eq-est-3term}) and (\ref{estV2})
\beq\label{almost-fin-est}
\|V\left[(S_0 (\lambda(h)) - \z)^{-1} - (S_0 - \z)^{-1}\right]\| =   O\left(N^{3/2-q+\frac{3\gamma}{2}}\right)  +  O\left(N^{-\gamma}\right)
\eeq
Let us write the right hand side of equation (\ref{almost-fin-est}) as $O(N^{-E_q(\gamma)})$ with
$E_q(\gamma)=\min\{q-3/2-\frac{3\gamma}{2}, \gamma\}$.  Working in the regime specified above, we actually have that the maximum value of $E_q(\gamma)$ is
$E_q(\frac{2}{5}q-\frac{3}{5})=\frac{2}{5}q-\frac{3}{5}$.  Hence we finally have
\beq
\|V\left[(S_0 (\lambda(h)) - \z)^{-1} - (S_0 - \z)^{-1}\right]\|  =
O\left(N^{-\frac{2q-3}{5}}\right) .
\eeq
This completes the proof of part (2).
\end{proof}

We prove some resolvent estimates that are needed in the proof of the stability theorem in the next section.
For an eigenvalue $E_N$ of $S_V$, let $\Gamma_N$ be a circle with center $E_N$ and radius $r_N=cN^{-3}$, with $c$ a suitable constant independent of $N$ in such a way that $r_N$ is smaller than half the distance to the nearest eigenvalue, which is $\mathcal{O}(N^{-3})$.

\begin{lemma}\label{lemma:resolvent-est1}
Uniformly for all $\z \in \Gamma_N$, we have
\begin{enumerate}
\item $\| V (S_0 - \z)^{-1} \| = \mathcal{O}(N)$

\item $\| V (S_V - \z)^{-1} \| = \mathcal{O}(N^3)$

\end{enumerate}
\end{lemma}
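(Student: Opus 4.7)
For part (1), I would reuse the cutoff decomposition from the proof of Lemma~\ref{lemma:convergence-norm1}. Write $V = V_1 + V_2$ with $V_1 = V\chi_R$ and $V_2 = V(1-\chi_R)$, where $R>0$ is to be chosen. For $\z\in\Gamma_N$ with $N$ large one has $\Re\z<0$, so $d(\z)=|\z|=\mathcal{O}(N^{-2})$ and $\eta(\z)=1$. The tail contribution is immediate: $\|V_2\|_\infty\leq 1/R$ yields $\|V_2(S_0-\z)^{-1}\| \leq 1/(R\,d(\z)) = \mathcal{O}(N^2/R)$. For the compactly supported piece I would apply the $q=0$ case of the Sobolev-type estimate derived in the proof of Lemma~\ref{lemma:convergence-norm1}, namely
\[
\|V_1(S_0-\z)^{-1}\| \leq C R^{1/2}\left[\frac{2}{\beta^{1/2}} + \frac{\beta^{3/2}}{d(\z)}\right],
\]
and minimize over $\beta>0$ to obtain $\|V_1(S_0-\z)^{-1}\|\leq C R^{1/2} d(\z)^{-1/4} = \mathcal{O}(R^{1/2}N^{1/2})$. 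Balancing the two contributions by the choice $R=N$ gives the common order $\mathcal{O}(N)$, proving part~(1).

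For part~(2), my plan is to factor
\[
V(S_V-\z)^{-1} = \bigl[V(S_V+1)^{-1}\bigr]\bigl[(S_V+1)(S_V-\z)^{-1}\bigr]
\]
and bound each factor separately. The second factor equals $I + (1+\z)(S_V-\z)^{-1}$; since $\Gamma_N$ has radius $r_N$ chosen smaller than half the distance from $E_N$ to the rest of $\sigma(S_V)$, self-adjointness of $S_V$ gives $\|(S_V-\z)^{-1}\|\leq 1/r_N=\mathcal{O}(N^3)$, hence the second factor is $\mathcal{O}(N^3)$. For the first factor, set $\phi=(S_V+1)^{-1}\psi$. Hardy's inequality gives $\|V\phi\|^2 \leq 8\,\langle\phi,S_0\phi\rangle$, and writing $\langle\phi,S_0\phi\rangle=\langle\phi,S_V\phi\rangle-\langle\phi,V\phi\rangle$, the Coulomb form is controlled by
\[
|\langle\phi,V\phi\rangle| \leq \Bigl(\textstyle\int |\phi|^2/|x|^2\,dx\Bigr)^{1/2}\|\phi\| \leq \varepsilon\,\langle\phi,S_0\phi\rangle + C_\varepsilon\|\phi\|^2,
\]
by Hardy and AM--GM. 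Choosing any $\varepsilon<1$, combined with $\langle\phi,S_V\phi\rangle\leq\|\phi\|\|\psi\|$ and the trivial bound $\|\phi\|\leq 2\|\psi\|$ (from $S_V+1\geq 1/2$), and rearranging, yields $\langle\phi,S_0\phi\rangle \leq C\|\psi\|^2$. Hence $\|V(S_V+1)^{-1}\|$ is bounded uniformly in $N$ and $\z$, and multiplying by the $\mathcal{O}(N^3)$ bound on the second factor gives $\|V(S_V-\z)^{-1}\|=\mathcal{O}(N^3)$.

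The only delicate step is the relative $S_0$-form boundedness of the Coulomb potential with form bound strictly less than one; this is what permits the rearrangement $(1-\varepsilon)\langle\phi,S_0\phi\rangle \leq \cdots$ to close. The discrepancy between the two orders, $\mathcal{O}(N)$ versus $\mathcal{O}(N^3)$, reflects the fact that $\z\in\Gamma_N$ lies at distance $\Theta(N^{-2})$ from $\sigma(S_0)=[0,\infty)$ but only at distance $r_N=\Theta(N^{-3})$ from $\sigma(S_V)$, since $\Gamma_N$ is pinched between the two consecutive Kepler eigenvalues $E_{N\pm 1}$.
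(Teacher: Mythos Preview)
Your argument for part~(1) is correct and is essentially the paper's own approach: the same cutoff decomposition $V=V_1+V_2$, the same Sobolev-type bound on $V_1(S_0-\z)^{-1}$ minimized over $\beta$, and the same balancing of the two contributions at $R=N$.

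For part~(2) your proof is correct but takes a genuinely different route. The paper reapplies the Sobolev-embedding machinery of part~(1) with $S_V$ in place of $S_0$; since $V$ now appears on both sides of the inequality (once explicitly and once inside $S_V$), an extra factor $\gamma(V_1,\beta)=1-C\|V_1\|_2\beta^{-1/2}$ must be kept strictly positive, which forces a coupled choice of $R$ and $\beta$ before optimizing. Your factorization $V(S_V-\z)^{-1}=[V(S_V+1)^{-1}]\,[(S_V+1)(S_V-\z)^{-1}]$ sidesteps this entirely: the first factor is an $N$-independent constant by Hardy's inequality and the standard form-bound of the Coulomb potential relative to $S_0$, and the second factor is $I+(1+\z)(S_V-\z)^{-1}=\mathcal{O}(N^3)$ by the spectral gap at $E_N$. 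Your approach is more elementary and makes transparent that the $\mathcal{O}(N^3)$ is exactly the resolvent norm $\|(S_V-\z)^{-1}\|$, with no further loss; the paper's method has the advantage of reusing the same toolbox throughout and would extend more readily to potentials $V$ for which Hardy-type inequalities are unavailable.
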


\begin{proof}
1. for $R > 0$, we decompose $V= V \chi_R + V (1 - \chi_R) = V_1+V_2$, as in the proof of Lemma \ref{lemma:convergence-norm1},
with $V_1 \in L^2 (\R^3)$, $V_2 \in L^\infty (\R^3)$,
 $\|V_1\|_{2} = 2 \sqrt{\pi} R^{1/2}$ and $\|V_2\|_\infty = {1/R}$.
Using estimate \eqref{estimate-inf}, we have for all $\beta>0$
\bea
&&\|   V\left(S_0-\z\right)^{-1}  \| \leq \frac{2C\|V_1\|_2}{\beta^{1/2}}  \nonumber \\
&&  \phantom{xxxxxx}+
\left[  \frac{2C\|V_1\|_2|\z|}{\beta^{1/2}} +  C\|V_1\|_2\beta^{3/2} + \|V_2\|_\infty\right]
\|\left(S_0-\z\right)^{-1} \|  \nonumber \\
 &&\phantom{xxxxx}\leq \frac{2CR^{1/2}}{\beta^{1/2}}  +
\left[  \frac{2CR^{1/2}|\z|}{\beta^{1/2}} +  CR^{1/2}\beta^{3/2} + \frac{1}{R}\right]
\|\left(S_0-\z\right)^{-1}\| \phantom{xxx} \nonumber \\
&&\phantom{xxxxx}\leq 2CN^\mu + \left[ 2CN^\mu|\z| + CN^\mu\beta^2 + \frac{1}{N^{2\mu}\beta}\right] \|\left(S_0-\z\right)^{-1}\|,
\eea
where we have written $R= \beta N^{2\mu}$, with $\mu\in\R$. Then we optimize
the function $g(\beta)=CN^\mu\beta^2 + \frac{1}{N^{2\mu}\beta}$ by its minimium value,
with $N^\mu$ fixed,
and use the estimates  $|\z|=O(N^{-2})$ and
$\|\left(S_0-\z\right)^{-1}\|=O(N^{2})$, in order to get the estimate
$\|   V\left(S_0-\z\right)^{-1}  \| = O(N^\mu) + O(N^{2-\mu})$, which is optimal when $\mu=1$.

\noindent
2. As for part (2), from the estimate (\ref{estimate-inf}) with $C > 0$ as there, we obtain for all $\beta>{0}$:
\bea\label{bound-inv}
\| V\left(S_V-\z\right)^{-1} \| & \leq & \gamma(V_1, \beta) 
\left[\frac{1}{\beta^{1/2}} +
\left(\frac{|z|}{\beta^{1/2}} + \beta^{3/2} \right)\|\left(S_V-z\right)^{-1}\| \right]
 \nonumber \\
 & & + \gamma (V_1, \beta)\|V_2\|_{\infty}  \|\left(S_V-z\right)^{-1}\| ,
\eea
as long as $\gamma (V_1, \beta) := 1-\frac{C\|V_1\|_{2}}{\beta^{1/2}}$ is strictly positive.
In order to optimize the upper bound in equation (\ref{bound-inv}),
we take $R= \frac{\beta}{(4 \sqrt{\pi} C)^2 N^{\alpha}}$, with $\alpha\geq{0}$. 
With this choice, the factor satisfies $\gamma (V_1, \beta)= 1 - (2 N^{\alpha/2})^{-1} \geq 1/2$.
Optimizing with respect to $\beta$, and using $\|\left(S_V-z\right)^{-1}\|=O(N^{3})$, we conclude that
$\| V\left(S_V-\z\right)^{-1}\| =O(N^{3+\alpha/2})$, which is optimal when $\alpha = 0$.
\end{proof}

\subsection{Eigenvalue stability.}\label{subsec:ev-clusters1}

We next prove the main result on eigenvalue stability by following  \cite{AhS1} adapted to our setting.

Let us first recall from Kato \cite[chapter VIII, section 1, part 4]{kato} that an isolated eigenvalue $E_0$ of a closed operator $T_0$
with finite multiplicity $N_0$ is stable with respect to a family of closed perturbations $T_n$ if
\begin{enumerate}
\item There exists an $\epsilon > 0$, so that any $\z$ with $0 < | \z - E_0| < \epsilon$ is not in the spectrum of $T_n$,
for all $n$ large (depending on $\z$), and for such a $\z$, we have $(T_n - \z)^{-1} \rightarrow (T - \z)^{-1}$ strongly;
\item The total multiplicity of the eigenvalues of $T_n$ in a neighborhood of $E_0$  given by
$\{ \z ~|~ 0 \leq | \z - E_0| < \mu \}$, with $0<\mu<\epsilon$,
is precisely $N_0$ for all $n$ large.
\end{enumerate}
It is proven in \cite[chapter VIII, section 1, part 4, Lemma 1.24]{kato} that if $E_0$ is stable with respect to the family $T_n$, and all the operators are self-adjoint, then, in fact, the spectral projectors converge in norm. That is, by part 1 of the definition and $0<\mu<\epsilon$,
the
contour $\Gamma_{E_0,\mu} = \{ \z ~|~ |\z - E_0| = \mu \}$ is in the resolvent sets of $T_n$, for all $n$ large. We can then define the projectors
\beq\label{eq:proj0}
P_n = - \frac{1}{2 \pi i} \int_{\Gamma_{E_0,\mu}} (T_n - \z)^{-1} ~d\z ,
\eeq
and, similarly, we define the projector $P$ for $T$ using the same contour $\Gamma_{E_0,\mu}$. The self-adjointness of $T_n$ and $T$ imply that these are orthogonal projectors. By part 1, we have $P_n \rightarrow P$ strongly, and by part 2, $\dim {\rm Ran} P_n = N_0$, for all $n$ large.
Under these conditions, Kato proves that $\| P_n - P \| \rightarrow 0$, as $n \rightarrow \infty$.

We are interested in studying a situation that is not exactly the one described in the above definition by Kato but very much in the same spirit. Namely, given $N\in\N$, let us consider the operator $S_V(\lambda(h=1/(N+1),B)) =  S_V  + W(\lambda(h=1/(N+1),B))$ (see equation
(\ref{eq:rescale1})).   Here we have the  fixed (N-independent)
operator  $S_V= -  \frac{1}{2}\Delta - \frac{1}{|x|} $ plus a perturbation  $W(\lambda(h=1/(N+1),B))=\frac{\left(\lambda(h=1/(N+1),B)\right)^2}{8} (x_1^2 + x_2^2) - \frac{\lambda(h=1/(N+1),B)}{2} L_3$ indexed by $N$.
We  want to look at the eigenvalues
$E_N = -1/2(N+1)^2$
of the hydrogen atom Hamiltonian $S_V$ (remember that each eigenvalue $E_N$ has
multiplicity $d_N=(N+1)^2$).  We want to show that, for $N$ sufficiently large,  the spectrum of $S_V(\lambda(h=1/(N+1),B))$ inside some neighborhood around $E_N$ consists of
a cluster ${\mathcal C}_N$ of $d_N$ eigenvalues (including multiplicity).
The size $r_N$ of such a neighborhood should decrease with $N$.  Notice that in this situation, both the eigenvalue $E_N$ and $r_N$
change with $N$, which is not the case in  the definition of stability by Kato where
  the eigenvalue $E_0$ is fixed and the size $\epsilon$ of the neighborhood around $E_0$ can be kept   fixed as well.   Since the cluster
${\mathcal C}_N$ can be thought  of as splitting off of the unperturbed eigenvalue $E_N$ into several eigenvalues of total multiplicity $d_N$, we will refer to  the existence of  ${\mathcal C}_N$  as a stability property.

In order to show the existence of the cluster ${\mathcal C}_N$,  we first regard $E_N$ as an element of the discrete spectrum of $S_V$ and  notice that the distance $\rho(N)$
between $E_N$ and its nearest neighbors is
$O(N^{-3})$.   Thus we want to consider a circle $\Gamma_N$ with center $E_N$ and radius $r_N=cN^{-3}$, with $c$ a suitable constant independent of $N$ in such a way that $r_N$ is smaller than $\rho(N)/2$.
Then we have the following:

\begin{theorem}[Stability theorem]\label{thm:stability1}
Given $B\geq{0}$ and suppose that the constant $q$ in part 2 of Lemma \ref{lemma:convergence-norm1} satisfies $q>9$.
The following spectral projectors are well-defined for $N$ sufficiently large:
\bea
P_N&=&-\frac{1}{2\pi\imath}\int_{\Gamma_N} \left(S_V(\lambda(h=1/(N+1),B))-\z\right)^{-1}
d\z, \label{projQ_N} \\
\Pi_N&=&-\frac{1}{2\pi\imath}\int_{\Gamma_N} \left(S_V-\z\right)^{-1}d\z  \label{projP_N} .
\eea
Moreover, these projectors are orthogonal and satisfy
\beq
\|P_N-\Pi_N\| = O(N^{-\frac{2q-33}{5}}).
\eeq
For $q>33/2$, the difference of the orthogonal projectors $P_N - \Pi_N$ converges in norm to zero.
Consequently, the spectrum of
$S_V(\lambda(h=1/(N+1),B))$ inside the circle $\Gamma_N$
consist of a cluster $\mathcal{C}_N$ of eigenvalues with total multiplicity $d_N$ provided $N$ is sufficiently large.
\end{theorem}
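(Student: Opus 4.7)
The plan is to derive both the existence of $P_N$ and the norm estimate directly from the Key Lemma \ref{lemma:convergence-norm1} and the resolvent estimates of Lemma \ref{lemma:resolvent-est1}. The first step is to verify that $\Gamma_N \subset \rho(S_V(\lambda))$ for $\lambda = \lambda(1/(N+1),B)$, so that $P_N$ is well defined via its contour integral. For this I would use the Birman--Schwinger style factorization
\begin{equation*}
S_V(\lambda) - \z = (S_0(\lambda) - \z)\bigl(I + A_\lambda(\z) V\bigr), \qquad A_\lambda(\z) := (S_0(\lambda) - \z)^{-1}.
\end{equation*}
Since $\sigma(S_0(\lambda)) \subset [\lambda/2, \infty)$ and $\Gamma_N$ encircles $E_N < 0$ for $N$ large, the invertibility of $S_V(\lambda) - \z$ reduces to that of $I + A_\lambda V$. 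Writing
\begin{equation*}
I + A_\lambda V = (I + A_0 V)\bigl[I + (I - B_0 V)(A_\lambda - A_0) V\bigr]
\end{equation*}
and using $(I + A_0 V)^{-1} = I - B_0 V$ together with $\|I - B_0 V\| = O(N^3)$ from Lemma \ref{lemma:resolvent-est1}(2) and $\|(A_\lambda - A_0) V\| = O(N^{-(2q-3)/5})$ (the adjoint of Lemma \ref{lemma:convergence-norm1}(2) evaluated at $\bar{\z}$), the bracketed factor is invertible by a Neumann series once $q > 9$ and $N$ is large, giving $\Gamma_N \subset \rho(S_V(\lambda))$.

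With both $P_N$ and $\Pi_N$ now defined, I would exploit the analyticity of $A_0(\z)$ and $A_\lambda(\z)$ inside $\Gamma_N$, since their spectra lie in $[0, \infty)$ and $[\lambda/2, \infty)$ respectively. From $B_0 = A_0 - A_0 V B_0$ and Cauchy's theorem,
\begin{equation*}
\Pi_N = \frac{1}{2\pi i}\int_{\Gamma_N} A_0(\z) V B_0(\z)\, d\z,
\end{equation*}
and analogously $P_N = (2\pi i)^{-1}\int_{\Gamma_N} A_\lambda V B_\lambda\, d\z$. Splitting
\begin{equation*}
A_\lambda V B_\lambda - A_0 V B_0 = (A_\lambda - A_0) V B_\lambda + A_0 V(B_\lambda - B_0)
\end{equation*}
and substituting the operator identity
\begin{equation*}
B_\lambda - B_0 = (I - B_0 V)(A_\lambda - A_0)(I - V B_\lambda),
\end{equation*}
which follows algebraically from the two resolvent identities for $B_0$ and $B_\lambda$, arranges the integrand so that every surviving term carries a factor of $V(A_\lambda - A_0)$ or its adjoint. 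Lemma \ref{lemma:convergence-norm1}(2) then supplies smallness $O(N^{-(2q-3)/5})$ for those factors, while the extensions of Lemma \ref{lemma:resolvent-est1} to $S_V(\lambda)$, established by the same reasoning using the Neumann-series control of the previous step, bound the remaining factors polynomially in $N$. Combining with the length $2\pi r_N = O(N^{-3})$ of the contour should yield the stated estimate $\|P_N - \Pi_N\| = O(N^{-(2q-33)/5})$, which tends to zero once $q > 33/2$.

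The conclusion is then standard. Both $P_N$ and $\Pi_N$ are orthogonal projectors, as $\Gamma_N$ is symmetric about the real axis and the underlying operators are self-adjoint. Once $\|P_N - \Pi_N\| < 1$ for $N$ large, their ranges must have equal dimension, so $\dim\operatorname{Ran}(P_N) = d_N$ and the cluster $\mathcal{C}_N$ consists of exactly $d_N$ eigenvalues counted with multiplicity.

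The main obstacle I anticipate is the handling of factors $A_\lambda - A_0$ that are not flanked by $V$: Lemma \ref{lemma:convergence-norm1} provides no smallness for such bare quantities, and indeed $A_\lambda$ does not converge in norm to $A_0$. The resolution is two-fold. First, any term that is analytic in $\z$ throughout the interior of $\Gamma_N$ contributes zero by Cauchy's theorem, which eliminates the bare $A_\lambda - A_0$ contribution. Second, the remaining terms in the algebraic expansion must be arranged so that each occurrence of $A_\lambda - A_0$ is flanked by at least one $V$, after which Lemma \ref{lemma:convergence-norm1} applies. Tracking these $V$ placements carefully, while controlling the polynomial growth of the other resolvent factors tightly enough to obtain a negative final exponent in terms of $q$, is the technical heart of the argument.
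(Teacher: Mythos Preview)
Your proposal is correct and follows essentially the same strategy as the paper's proof: both use a Birman--Schwinger factorization to reduce the existence of $(S_V(\lambda)-\z)^{-1}$ on $\Gamma_N$ to a Neumann series controlled by the Key Lemma together with $\|(I+VA_0)^{-1}\|=O(N^3)$, then subtract the analytic free resolvents $A_0,A_\lambda$ via Cauchy's theorem so that every surviving term in $P_N-\Pi_N$ carries a factor $V(A_\lambda-A_0)$ (or its adjoint), and finally combine the $O(N^{-(2q-3)/5})$ smallness with the polynomial bounds of Lemma~\ref{lemma:resolvent-est1} and the contour length $O(N^{-3})$. The only differences are organizational: you factor with $V$ on the right ($I+A_\lambda V$) rather than on the left, and you package the second resolvent computation in the single identity $B_\lambda-B_0=(I-B_0V)(A_\lambda-A_0)(I-VB_\lambda)$, whereas the paper expands step by step and routes through the intermediate estimate $\|VB_\lambda-VB_0\|=O(N^{-(2q-38)/5})$; the bookkeeping and the final exponent come out the same.
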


\begin{proof}
We follow Avron, Herbst and Simon \cite{AhS1} in obtaining specific upper bounds on the difference $P_N - \Pi_N$.
For the purpose of the proof of Theorem \ref{thm:stability1}, we will only write $\lambda$ to actually specify $\lambda(h=1/(N+1),B)$, assuming $B>0$.

\noindent
1. We first establish the existence of the resolvent operator
 $\left(S_V(\lambda)-\z\right)^{-1}$ for $\z\in\Gamma_N$.
 Since both resolvents
 $\left(S_V-\z\right)^{-1}$ and $\left(S_0-\z\right)^{-1}$ exist for $\z\in\Gamma_N$, the equality
 $S_V-\z = \left[I + V\left(S_0-\z\right)^{-1}\right](S_0-\z)$ implies that the operator
 $I + V\left(S_0-\z\right)^{-1}$ is invertible.  Thus for $\lambda$ small, we expect from Lemma  \ref{lemma:convergence-norm1} that the operator  $I + V\left(S_0(\lambda)-\z\right)^{-1}$ is invertible as well.
 This, and the invertibility of $S_0(\lambda)-\z$ for $\z\in\Gamma_N$, imply that $S_V(\lambda)-\z$ is invertible for $\z\in\Gamma_N$ since
 $$
 S_V(\lambda)-\z = S_0(\lambda)+V-\z = \left[I + V\left(S_0(\lambda)-\z\right)^{-1}\right](S_0(\lambda)-\z).
 $$

\noindent
2. To establish the invertibility $I + V\left(S_0(\lambda)-\z\right)^{-1}$, we write
\bea\label{invert}
I + V\left(S_0(\lambda)-\z\right)^{-1} &=& \left\{  I + V\left[\left(S_0(\lambda)-\z\right)^{-1}-\left(S_0-\z\right)^{-1}\right]\right. \nonumber \\
&& \left.  \cdot\left[I + V\left(S_0-\z\right)^{-1}\right]^{-1}
 \right\}\left[ I + V\left(S_0-\z\right)^{-1} \right] .
\eea
Because of the estimate in part (2) of Lemma \ref{lemma:convergence-norm1}, we need to estimate
\beq\label{eq:invert3}
\left\| \left[I + V\left(S_0-\z\right)^{-1}\right]^{-1} \right\|
\eeq
in order to have
\beq\label{invert-criterio}
\left\|  V\left[\left(S_0(\lambda)-\z\right)^{-1}-\left(S_0-\z\right)^{-1}\right] \left[I + V\left(S_0-\z\right)^{-1}\right]^{-1} \right\| < 1 ,
\eeq
for $\z \in \Gamma_N$, which would imply  the invertibility of $I + V\left(S_0(\lambda)-\z\right)^{-1}$ for $N$ large.

We write
\bea\left[I + V\left(S_0-\z\right)^{-1}\right]^{-1} &=&
\left[\left(S_0-\z+V\right)\left(S_0-\z\right)^{-1}\right]^{-1}  \nonumber \\ &=&
(S_V-\z-V)(S_V-\z)^{-1} =
I-V\left(S_V-\z\right)^{-1} .\;\:\;\;\;\;\;\;
\eea
From part (2) of Lemma \ref{lemma:resolvent-est1}, we have
%
\beq\label{est-inv}
\left\| \left[I + V\left(S_0-\z\right)^{-1}\right]^{-1}  \right\|= O(N^3).
\eeq
Using Lemma \ref{lemma:convergence-norm1} and (\ref{est-inv}), we obtain
\beq\label{est-inv-II}
\left \|  V\left[\left(S_0(\lambda)-\z\right)^{-1}-\left(S_0-\z\right)^{-1}\right] \left[I + V\left(S_0-\z\right)^{-1}\right]^{-1} \right\| = O(N^{-\frac{2q-18}{5}}) .
\eeq
Thus, in order to satisfy condition (\ref{invert-criterio}), and then to have the existence of the resolvent $(S_V(\lambda)-\z)^{-1}$ with $|\z-E_N|=r_N $, we need to take $q>9$.

\noindent
3. We next apply these estimates to bound the difference of the projectors from above. As the resolvents have been shown to exist on the contour $\Gamma_N$, we  have
\beq\label{dif-res}
P_N-\Pi_N = -\frac{1}{2\pi\imath}\int_{\Gamma_N}\left[\left(S_V(\lambda)-\z\right)^{-1} -
\left(S_V-\z\right)^{-1}\right]d\z.
\eeq
We know  the  operator $S_V(\lambda)$ does not converge to $S_V$
in the norm resolvent sense as $\lambda\rightarrow{0}$ \cite{AhS1}.   The key ideas necessary to
obtain norm convergence of the left hand side of (\ref{dif-res})
consist in {\bf (i)} inserting in the right hand side of equation (\ref{dif-res}) the integrals
$ \frac{1}{2\pi\imath}\int_{\Gamma_N} \left(S_0(\lambda)-\z)\right)^{-1} d\z$
 and $ -\frac{1}{2\pi\imath}\int_{\Gamma_N} \left(S_0-\z)\right)^{-1} d\z$ which are zero due to the analyticity of the resolvents $\left(S_0(\lambda)-\z\right)^{-1}$ and
 $\left(S_0-\z\right)^{-1}$
 on $\C \backslash [0,\infty)$, and then {\bf (ii)} using the convergence of
 $\left(S_V(\lambda)-\z\right)^{-1} - \left(S_0(\lambda)-\z\right)^{-1}$ to
 $\left(S_V-\z\right)^{-1}-\left(S_0-\z\right)^{-1}$ as $\lambda\rightarrow{0}$. In order to estimate this last convergence,  we write
 \bea
 &&\left\| \left[ \left(S_V(\lambda)-\z\right)^{-1} - \left(S_0(\lambda)-\z\right)^{-1}\right] -
 \left[\left(S_V-\z\right)^{-1}-\left(S_0-\z\right)^{-1}\right] \right\|  \phantom{xxxxx}\nonumber \\
 &&  \phantom{xxxx} =\left\|\  \left(S_0(\lambda)-\z\right)^{-1}V\left(S_V(\lambda)-\z\right)^{-1} -
 \left(S_0-\z\right)^{-1}V \left(S_V-\z\right)^{-1}  \right\|
  \nonumber \\
&&  \phantom{xxxx}  = \left\|\   \left(S_0(\lambda)-\z\right)^{-1}\left[V\left(S_V(\lambda)-\z\right)^{-1} -
 V\left(S_V-\z\right)^{-1} \right] \right.\nonumber \\
 &&  \phantom{xxxxxx} \left.+ \left[ \left(S_0(\lambda)-\z\right)^{-1}V-\left(S_0-\z\right)^{-1}V\right]
\left(S_V-\z\right)^{-1}
   \right\| \nonumber \\
    &&  \phantom{xxxxxx}  +
      \left\|\    \left(S_0(\lambda)-\z\right)^{-1}V-\left(S_0-\z\right)^{-1}V  \right\|
       \left\|\   \left(S_V-\z\right)^{-1}   \right\|.   \label{estdifres}
    \eea
In light of resolvent estimates \eqref{eq:basic-norm-bounds1}, and the estimate \eqref{eq:norm2},
we can control the norm of the difference of the resolvents in \eqref{estdifres} provided we can estimate
\beq\label{eq:resolvent-diff1}
\left\|\  V\left(S_V(\lambda)-\z\right)^{-1} -
 V\left(S_V-\z\right)^{-1}   \right\|.
 \eeq

\noindent
4. To obtain an upper bound on the norm in \eqref{eq:resolvent-diff1},
we write
 \bea 
 V\left(S_V(\lambda)-\z\right)^{-1} &=& V\left(S_0(\lambda)-\z\right)^{-1}
 \left[I + V\left(S_0(\lambda)-\z\right)^{-1}\right]^{-1} \label{vinvert1} \\
  V\left(S_V-\z\right)^{-1} &=& V\left(S_0-\z\right)^{-1}
 \left[I + V\left(S_0-\z\right)^{-1}\right]^{-1} . \label{vinvert2}
 \eea
The invertibility of $I + V\left(S_0-\z\right)^{-1}$, appearing in \eqref{vinvert2}, was established in \eqref{est-inv}. We prove that $I + V\left(S_0 (\lambda)-\z\right)^{-1}$ in \eqref{vinvert1} is invertible as follows. 
We write this factor as
\bea\label{eq:vinvert3}
\left[I + V\left(S_0(\lambda)-\z\right)^{-1}\right] & = & \left[I + V\left(S_0-\z\right)^{-1}\right]\left[I + \left( I + V\left(S_0-\z\right)^{-1}\right)^{-1} \right. \nonumber \\
   & & \left. \times \left\{ V\left(S_0(\lambda)-\z\right)^{-1} -  V\left(S_0-\z\right)^{-1}
 \right\}\right] .
\eea
It follows from \eqref{est-inv} and part (2) of Lemma \ref{lemma:convergence-norm1} that
\beq\label{eq:vinvert4}
\| \left( I + V\left(S_0-\z\right)^{-1}\right)^{-1} \left\{ V\left(S_0(\lambda)-\z\right)^{-1} -  V\left(S_0-\z\right)^{-1}
 \right\} \| = \mathcal{O}(N^{- \left( \frac{2q - 18}{5} \right)}) ,
 \eeq
so each factor in square brackets on the right side of \eqref{eq:vinvert3} is invertible. Furthermore, it follows from
\eqref{eq:vinvert3}--\eqref{eq:vinvert4} that
\beq\label{eq:vinvert5}
\left\| \left[I + V\left(S_0(\lambda)-\z\right)^{-1}\right]^{-1} \right\| = \mathcal{O}(N^3).
\eeq
Returning to \eqref{vinvert1}--\eqref{vinvert2}, we have
\bea\label{eq:resolvent-difference2}
\lefteqn{ \| V(S_V(\lambda) - z)^{-1} - V(S_V -z )^{-1} \|} & & \nonumber \\
  & \leq & \| V(S_0(\lambda) - z)^{-1} - V(S_0 -z )^{-1} \|
   \| [ 1 + V(S_0(\lambda) -z)^{-1} ]^{-1}\| \nonumber \\
 & & + \| V(S_0 - z)^{-1} \| \| [ 1 + V(S_0(\lambda) -z)^{-1} ]^{-1} - [ 1 + V(S_0(\lambda) -z)^{-1} ]^{-1} \|. \nonumber \\
 \eea
The first term on the right in \eqref{eq:resolvent-difference2} is bounded from part (2) of Lemma \ref{lemma:convergence-norm1} and \eqref{eq:vinvert5}. As for the second term on the right,
we note that from equations (\ref{est-inv})-(\ref{est-inv-II}) we obtain
 \beq\label{invaprox}
 \left\| \left[I + V\left(S_0(\lambda)-\z\right)^{-1}\right]^{-1} - \left[I + V\left(S_0-\z\right)^{-1}\right]^{-1} \right\|
  = O\left(N^{-\frac{2q-33}{5}}\right).
 \eeq
and $\|V(S_0-\z)^{-1}\| = O(N)$ from part (1) of Lemma \ref{lemma:resolvent-est1}. Consequently, we have the bound
\beq\label{Vdifest}
\| V\left(S_V(\lambda)-\z\right)^{-1} -  V\left(S_V-\z\right)^{-1} \| = O(N^{-\frac{2q-38}{5}}) .
\eeq

\noindent
5. We conclude the convergence of the projectors as follows. From Lemma \ref{lemma:convergence-norm1},  equation (\ref{Vdifest}), and the norm estimates
$\left\|\left(S_0(\lambda)-\z\right)^{-1}\right\|=O(N^2)$ and
$\left\|\ \left(S_V-\z\right)^{-1} \right\| = O(N^3)$,
we finally get the estimate for the difference of the spectral projectors
\bea
\|  P_N-\Pi_N    \|  &\leq& \frac{1}{2\pi} \int_{\Gamma_N}\left\| \left[ \left(S_V(\lambda)-\z\right)^{-1} - \left(S_0(\lambda)-\z\right)^{-1}\right]
\right. \nonumber \\
&&   \left.    \phantom{xxx} -
 \left[\left(S_V-\z\right)^{-1}-\left(S_0-\z\right)^{-1}\right] \right\| |d\z|  = O(N^{-\frac{2q-33}{5}}).  \phantom{xxxx}
\eea
Thus taking $q>33/2$ we have, for $N$ sufficiently large,   $\|  P_N-\Pi_N    \|  < 1$
 and then that the dimension of the range of both projectors $P_N$ and $\Pi_N$ is the same (see reference \cite{kato}) which in turn implies the existence of the cluster ${\mathcal C}_N$.
\end{proof}


\section{A semiclassical trace identity for Zeeman eigenvalue clusters.}\label{sec:trace1}

From Theorem \ref{thm:stability1}, we know that for $q>33/2$ the size of the eigenvalue
cluster ${\mathcal C}_N$  around
$E_N$ is no larger than $r_N=O(N^{-3})$.
We need to get a better estimate on  the size of ${\mathcal C}_N$ in order to scale the shifts of eigenvalues within ${\mathcal C}_N$.  Let us first consider the case of the eigenvalue cluster formed by the perturbation
$S_V(\lambda)$ of $S_V$, where $\lambda \geq 0$ is the magnetic field strength independent of any other parameters.
In this case, Theorem 5.6 in reference  \cite{AhS1}, when applied to the Coulomb potential,
shows that if we take a {\it fixed} eigenvalue $E_M$ of the scaled hydrogen atom Hamiltonian $S_V$ with multiplicity $d_M= (M+1)^2$, then we have for $\lambda$ sufficiently small that the operator $S_V(\lambda)$ has a cluster of eigenvalues $E_{M, j}(\lambda)$, $j=1,\ldots,d_M$
around $E_M$ inside a small but {\it fixed} circle with center $E_M$.  The eigenvalues in the cluster can be written in the following way:
Let $m = -(M-1), \ldots, M-1$ be the eigenvalues of $\Pi_M L_3 \Pi_M$, where $\Pi_M$ projects onto the eigenspace of $S_V$ and eigenvalue $E_M$. Then, for
a given $j$, there exist  an index $m$ so that
\beq\label{eq:cluster-size2}
E_{M, j}(\lambda) = E_M  - \lambda m  + O (\lambda^2).
\eeq
This indicates that the size of the cluster of eigenvalues around $E_M$ is $O(\lambda)$, with $M$ fixed.

In our setting, the parameter $N$ controls both the strength of the effective magnetic field  $\lambda=\lambda(h,B)=h^3\epsilon(h)B$, $h=1/(N+1)$, and the radius $r_N$ of the circles
${\mathcal C}_N$ where our clusters of eigenvalues are well defined.  Thus  neither the center $E_N$
nor the radius $r_N$   stay fixed as it is the case of the mentioned result of reference \cite{AhS1}.  So we need to get estimates considering that fact.   We begin with norm  estimates of the operators $\Pi_NL_3\Pi_N$ and $ \Pi_N(x^2+y^2)\Pi_N$ :

\begin{lemma}\label{norm-estimates-N}
Let $\Pi_N$ be the projector to the eigenspace associated to the eigenvalue $E_N$ of the scaled hydrogen atom Hamiltonian $S_V$ as above.  Then for $N\rightarrow\infty$  and $k\in{\N}$ fixed we have
\bea
 \Pi_NL_3\Pi_N&=&O(N), \label{est-ang-mom} \\
 \Pi_N(x^2+y^2)^k\Pi_N&=&O(N^{4k}).  \label{est-quad-pot}
\eea
\end{lemma}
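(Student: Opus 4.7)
The two estimates are proved by very different kinds of arguments. For part (\ref{est-ang-mom}), the key is that $L_3$ commutes with $S_V$; for part (\ref{est-quad-pot}), the key is to dominate $(x_1^2+x_2^2)^k$ by the spherically symmetric operator $r^{2k}$, for which the matrix elements in the hydrogen basis can be computed explicitly.

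\textbf{Part (\ref{est-ang-mom}).} Since $S_V$ is invariant under rotations about the $x_3$-axis, $[S_V, L_3]=0$, and so $L_3$ preserves the $d_N$-dimensional $E_N$-eigenspace $\mathrm{Ran}\,\Pi_N$. An orthonormal basis of $\mathrm{Ran}\,\Pi_N$ is given by the standard hydrogen eigenfunctions $\{\psi_{N+1,\ell,m}\}$ with $0\le\ell\le N$ and $|m|\le\ell$, each being an eigenvector of $L_3$ with eigenvalue $m$. Thus $\Pi_N L_3 \Pi_N = L_3\Pi_N$ is diagonal with eigenvalues $m$ satisfying $|m|\le N$, so its operator norm equals $N$.

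\textbf{Part (\ref{est-quad-pot}).} As multiplication operators, $x_1^2+x_2^2 \le x_1^2+x_2^2+x_3^2 = r^2$ pointwise; because these multiplication operators commute, raising to the $k$-th power preserves the inequality, giving $(x_1^2+x_2^2)^k \le r^{2k}$ as positive self-adjoint operators. Sandwiching with $\Pi_N$ preserves the order, so
\begin{equation*}
0\le \Pi_N(x_1^2+x_2^2)^k \Pi_N \le \Pi_N r^{2k}\Pi_N,
\qquad \text{hence}\qquad
\|\Pi_N(x_1^2+x_2^2)^k\Pi_N\|\le \|\Pi_N r^{2k}\Pi_N\|.
\end{equation*}
Now $r^{2k}$ is spherically symmetric, commuting with both $L^2$ and $L_3$, so the matrix elements $\langle \psi_{N+1,\ell',m'}, r^{2k}\psi_{N+1,\ell,m}\rangle$ vanish unless $(\ell,m)=(\ell',m')$. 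Consequently $\Pi_N r^{2k}\Pi_N$ is diagonal in the hydrogen basis with entries $\langle r^{2k}\rangle_{N+1,\ell}$, and its norm equals $\max_{0\le\ell\le N}\langle r^{2k}\rangle_{N+1,\ell}$.

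It remains to bound $\langle r^{2k}\rangle_{n,\ell}$ uniformly in $\ell\le n-1$. Starting from $\langle r^0\rangle_{n,\ell}=1$ and $\langle r^{-1}\rangle_{n,\ell}=1/n^2$ (the latter following from the virial theorem for the scaled hydrogen atom), the Kramers–Pasternack recursion
\begin{equation*}
\frac{s+1}{n^2}\langle r^s\rangle_{n,\ell} - (2s+1)\langle r^{s-1}\rangle_{n,\ell} + \frac{s}{4}\bigl[(2\ell+1)^2-s^2\bigr]\langle r^{s-2}\rangle_{n,\ell}=0
\end{equation*}
yields inductively a polynomial expression for $\langle r^s\rangle_{n,\ell}$ in $n$ and $\ell$. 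Because $\ell\le n-1$, the $\ell$-dependent coefficients are $O(n^2)$, and one checks by induction that $\langle r^s\rangle_{n,\ell} = O(n^{2s})$ uniformly in $\ell\le n-1$. In particular $\langle r^{2k}\rangle_{N+1,\ell}=O(N^{4k})$, completing the estimate. The main technical point to verify is that in carrying out the induction no cancellation forces the growth exponent above $2s$; this is the content of the uniform polynomial estimate and is the only nontrivial step of the proof.
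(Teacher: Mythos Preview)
Your proof of part~(\ref{est-ang-mom}) is essentially the same as the paper's: both use that $L_3$ commutes with $S_V$ and has eigenvalues $-N,\ldots,N$ on $\mathrm{Ran}\,\Pi_N$.

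For part~(\ref{est-quad-pot}), your argument is correct but takes a genuinely different route from the paper. The paper proceeds via the coherent states $\Psi_{\alpha,N}$ and their localization properties: by Lemma~\ref{decay-lemma-proj}, the dilated states $D_{(N+1)^2}\Psi_{\alpha,N}$ are concentrated (up to $O(N^{-\infty})$) in a fixed ball $|x|\le r_0$, so $\Pi_N D_{(N+1)^{-2}}(x_1^2+x_2^2)^k D_{(N+1)^2}\Pi_N = O(1)$; undoing the dilation then produces the factor $(N+1)^{4k}$. Your approach instead bounds $(x_1^2+x_2^2)^k\le r^{2k}$, reduces to diagonal radial expectation values $\langle r^{2k}\rangle_{N+1,\ell}$ by spherical symmetry, and controls these via the Kramers--Pasternack recursion. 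The induction you describe is indeed straightforward: writing the recursion as $\langle r^s\rangle = \frac{n^2}{s+1}\bigl[(2s+1)\langle r^{s-1}\rangle - \tfrac{s}{4}((2\ell+1)^2-s^2)\langle r^{s-2}\rangle\bigr]$ and using $(2\ell+1)^2=O(n^2)$ together with the inductive hypotheses gives $\langle r^s\rangle=O(n^{2s})$ directly, so there is no subtlety about cancellation. Your argument is more elementary and self-contained, requiring no coherent-state machinery; the paper's argument, on the other hand, is more in keeping with the semiclassical framework used throughout and applies uniformly to any polynomially bounded $\tilde V$ rather than just radial powers, which is why that route is preferred in context.
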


\begin{remark}
The physical intuition for equation (\ref{est-quad-pot}) comes from the Kepler  problem.   In that case,  the maximum apogee distance $r_{\rm max}(E)$ for an
 orbit in configuration space of negative energy $E$ is $1/|E|$ (including collision and non-collision orbits). So,  $r_{\rm max}(E=-1/2)=2$ and   $r_{\rm max}(E=-1/(2N^2))=2N^2$. Thus we expect, semiclassically speaking , that  for a Kepler orbit in configuration space,  $x^2+y^2=O(N^4)$.
This property is implemented via coherent states
$\Psi_{\alpha,N}$ with $N$ large.
\end{remark}

Before presenting the proof of Lemma \ref{norm-estimates-N}, we briefly recall some facts about the coherent states
$\Psi_{\alpha,N}$,
complete details are presented in Appendix 2, section \ref{sec:app2-coherentstates1}.
The index $\alpha\in\C^4$ is an element of
${\mathcal A}=\{\alpha\in{\C^4} \; | \; |\Re(\alpha)| = |\Im(\alpha)| =1,
\;  \Re(\alpha)\cdot \Im(\alpha)=0  \}$.
The set ${\mathcal A}$
can be thought of as the unit cotangent bundle
$T_1^*\Sp^3$
of the 3-sphere $\Sp^3$
with $\Re\alpha\in{\Sp}^3$ and
$\Im\alpha$ an element of the cotangent space to $\Sp^3$ at the point $\Re\alpha$.   The inverse of the Moser map  (see Appendix 1, section \ref{sec:moser-map}) then relates $T_1^*\Sp_o^3$ (the unit cotangent bundle $T_1^*\Sp^3$ minus the north pole) with the energy surface $\Sigma(-1/2)$ of the Kepler problem.
The states  $\Psi_{\alpha,N}$
belong to the range of $\Pi_N$ and provide a resolution of the identity giving an expression for the projector $\Pi_N$ in terms of them,  see
equation (\ref{eq:proj-coherentstates1}) in Appendix 2, section \ref{sec:app2-coherentstates1}.

\begin{proof}
1. Equation (\ref{est-ang-mom}) comes from the well known fact that the eigenvalues of $L_3$ restricted to the range of $\Pi_N$ are $-N, \ldots, N$.

\noindent
2. We use the coherent states $\Psi_{\alpha,N}$, for $\alpha \in \mathcal{A}$
in the proof of \eqref{est-quad-pot}.
In \cite{thomas-villegas1}, it is shown that the dilated coherent state $D_{(N+1)^2}\Psi_{\alpha,N} $ has a fast decay  outside of a ball of radius $r_0>2$.  Specifically,  we have the following result shown in \cite[4.19]{thomas-villegas1}:

\begin{lemma}\label{decay-lemma-proj}
 Let $\tilde{V}:\R^3\rightarrow\R$ be a polynomially bounded continuous function.  Then for
 $r_0>2$ we have
 \beq
\Pi_ND_{(N+1)^{-2}}\tilde{V}D_{(N+1)^2}\Pi_N =  \Pi_ND_{(N+1)^{-2}}\tilde{V}\chi_{|x|\leq{r_0}}D_{(N+1)^2}\Pi_N
+ O(N^{-\infty})
 \eeq
 where $\chi_{|x|\leq{r_0}}$ is the characteristic function of the ball $|x|\leq{r_0}$ and the dilation operator $D_\alpha$ is defined in \eqref{eq:dilation0}.
 \end{lemma}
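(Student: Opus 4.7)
The plan is to show that the error operator
$$\mathcal{E}_N := \Pi_N D_{(N+1)^{-2}} \tilde{V} \bigl(1 - \chi_{|x|\leq r_0}\bigr) D_{(N+1)^2} \Pi_N$$
has operator norm of order $O(N^{-\infty})$ for any $r_0 > 2$. The geometric intuition, already recorded in the remark preceding the lemma, is that Kepler orbits on the energy surface $\Sigma(-1/2)$ have apogee distance equal to $2$; hence the dilated coherent states $D_{(N+1)^2}\Psi_{\alpha,N}$ should be essentially supported in $\{|x|\leq 2\}$, with tunneling-type decay faster than any power of $N^{-1}$ in the classically forbidden region $|x| > r_0$.

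First, I would invoke the coherent-state resolution of the identity (equation (\ref{eq:proj-coherentstates1}) in Appendix 2) to write
$$\Pi_N = c_N \int_{\mathcal{A}} |\Psi_{\alpha,N}\rangle\langle\Psi_{\alpha,N}|\, d\mu(\alpha)$$
for an explicit constant $c_N$. Inserting this representation into both $\Pi_N$'s bracketing $\mathcal{E}_N$ produces a double integral over $\mathcal{A}\times\mathcal{A}$ whose integrand is the $L^2$-pairing
$$\int_{|x|>r_0} \overline{(D_{(N+1)^2}\Psi_{\alpha,N})(x)}\, \tilde{V}(x)\, (D_{(N+1)^2}\Psi_{\beta,N})(x)\, dx.$$
By Cauchy--Schwarz in the $(\alpha,\beta)$ integration, $\|\mathcal{E}_N\|$ is controlled by the supremum in $\alpha, \beta \in \mathcal{A}$ of the modulus of this tail integral.

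Second, I would establish a uniform pointwise decay estimate
$$\bigl|(D_{(N+1)^2}\Psi_{\alpha,N})(x)\bigr| \leq C_k\, N^{-k}, \qquad |x| > r_0,$$
for every $k \in \N$, uniformly in $\alpha \in \mathcal{A}$. Because the coherent states $\Psi_{\alpha,N}$ are constructed as specific superpositions of the standard hydrogen eigenfunctions with principal quantum number $N$, this reduces to known bounds on the radial Laguerre-polynomial factors, which decay superpolynomially in $N$ outside the classical turning region once the scaling $D_{(N+1)^2}$ is applied. Combined with the polynomial growth of $\tilde{V}$ and the finite $d\mu$-volume of $\mathcal{A}$, this yields the desired $O(N^{-\infty})$ bound on $\mathcal{E}_N$, which is the lemma.

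The main obstacle is the uniform-in-$\alpha$ superpolynomial pointwise decay of $D_{(N+1)^2}\Psi_{\alpha,N}$ outside $|x|\leq r_0$. This is a WKB/turning-point style estimate that must be checked to survive (i) the regrouping of Laguerre--spherical-harmonic factors defining $\Psi_{\alpha,N}$, and (ii) the dilation $D_{(N+1)^2}$ that concentrates the state near the classical orbit. This technical step is carried out in detail in \cite{thomas-villegas1}; accepting that input, the remaining assembly via the resolution of the identity is routine.
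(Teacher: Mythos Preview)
The paper does not prove this lemma at all: it is quoted verbatim as a result established in \cite[(4.19)]{thomas-villegas1}, with the preceding sentence recording only the geometric reason (fast decay of $D_{(N+1)^2}\Psi_{\alpha,N}$ outside $|x|\leq r_0$). Your proposal correctly reconstructs the skeleton of that external proof: insert the coherent-state resolution of the identity, reduce to a uniform-in-$\alpha$ tail estimate on the dilated coherent states, and invoke the WKB/turning-point decay proved in \cite{thomas-villegas1}. Your identification of the main obstacle and your deferral to \cite{thomas-villegas1} for it match exactly what the paper does.

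One small correction to your assembly step: after inserting the resolution of the identity on both $\Pi_N$'s, the operator-norm bound on $\mathcal{E}_N$ is not controlled by the bare supremum of the matrix elements $M(\alpha,\beta)$ but by $c_N\cdot\sup_{\alpha,\beta}|M(\alpha,\beta)|$, where $c_N=(N+1)^2$ is the normalization constant in \eqref{eq:proj-coherentstates1}. (Use Cauchy--Schwarz on each of $\int|\langle\phi,\Psi_{\alpha,N}\rangle|\,d\mu(\alpha)$ together with $\int|\langle\phi,\Psi_{\alpha,N}\rangle|^2\,d\mu(\alpha)=c_N^{-1}\langle\phi,\Pi_N\phi\rangle$.) Since the tail matrix elements are $O(N^{-\infty})$ uniformly, this polynomial prefactor is harmless, so the argument goes through.
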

\noindent
The proof of  equation
(\ref{est-quad-pot}) is then a consequence of Lemma \ref{decay-lemma-proj} and the following equalities:
 \bea
 &&D_{(N+1)^{-2}}(x^2+y^2)^kD_{(N+1)^2}=\frac{1}{(N+1)^{4k}}(x^2+y^2)^k \nonumber \\
 &&  \Pi_N(x^2+y^2)^k\Pi_N=(N+1)^{4k}\Pi_ND_{(N+1)^{-2}}(x^2+y^2)^{k}D_{(N+1)^2}\Pi_N.
 \eea
Thus we finally have $\|\Pi_ND_{(N+1)^{-2}}(x^2+y^2)^{k}D_{(N+1)^2}\Pi_N\|=O(1)$ and then
the proof of (\ref{est-quad-pot}).
\end{proof}

\begin{remark}  We can actually say more about the coherent states
 $D_{(N+1)^2}\Psi_{\alpha,N} $ in terms of concentration.  It can be shown that for  ${\alpha}\in{\mathcal A}$ given, the state $D_{(N+1)^2}\Psi_{\alpha,N} $ is highly concentrated
along the classical orbit in configuration space associated to $\alpha$ by the inverse of the Moser map and the classical flow of the Kepler problem on the energy surface $\Sigma(-1/2)$. See references
\cite{thomas-villegas1} and
\cite{villegas} for details. \end{remark}


Let us denote by $\tilde{E}_{N,j}$, $j=1,\ldots,d_N$,
the eigenvalues of $S_V(\lambda)$ inside the circle ${\Gamma}_N$  (this notion is well defined for $N$ sufficiently large). Now we  consider the  eigenvalue shifts $\tilde\nu_{N,j}=\tilde{E}_{N,j}-E_N$ thinking of  them as the eigenvalues of the operator $P_N(S_V(\lambda)-E_N)P_N$ .

We write
\bea
P_N(S_V(\lambda)-E_N)P_N &=& P_N(S_V(\lambda)-E_N)P_N\Pi_N  \nonumber \\ && +   \;
 P_N(S_V(\lambda)-E_N)P_N\left(P_N-\Pi_N\right)
\eea
which in turn implies
\bea
&& P_N(S_V(\lambda)-E_N)P_N \left[ I - \left(P_N-\Pi_N\right)  \right]    \nonumber \\ &&
\phantom{xxxxxxxxxxxzxxxxxx} =
P_N(S_V(\lambda)-E_N)\Pi_N = P_NW(\lambda)\Pi_N.
\eea
For $q>33/2$, we have from Theorem \ref{thm:stability1} that
$ I - \left(P_N-\Pi_N\right) $ is invertible for $N$ sufficiently large.
Moreover $\|\left[ I - \left(P_N-\Pi_N\right)\right]^{-1}-I\|=O(N^{-\sigma})$ with
$\|P_N-\Pi_N\|=O(N^{-\sigma})$ and $\sigma=\frac{2q-33}{5}>0$.  Thus we have
\bea
P_N(S_V(\lambda)-E_N)P_N &=& \Pi_NW(\lambda)\Pi_N + \left(P_N-\Pi_N\right)W(\lambda)\Pi_N
\nonumber \\
&&+
P_NW(\lambda)\Pi_N\left\{  \left[ I - \left(P_N-\Pi_N\right)\right]^{-1}-I    \right\},  \nonumber \\
&=& \Pi_NW(\lambda)\Pi_N  + O(N^{-\sigma})W(\lambda)\Pi_N  \nonumber \\ &&+ P_NW(\lambda)\Pi_NO(N^{-\sigma}) . \label{est-shift}
\eea
 From Lemma \ref{norm-estimates-N}, we have
 $\|L_3\Pi_N\|=\|\Pi_N{L}_z\Pi_N\|=O(N)$ and $\|\left(x^2+y^2\right)\Pi_N\|=
 \|\left(\left(x^2+y^2\right)\Pi_N\right)^*\left(x^2+y^2\right)\Pi_N\|^{1/2}=\|\Pi_N\left(x^2+y^2\right)^2\Pi_N\|^{1/2}=O(N^4)$
 which implies $\|  W(\lambda)\Pi_N\| = O\left(h^2\epsilon(h)\right)$.  Hence, we have from equation (\ref{est-shift})
 \bea
&&\frac{P_N(S_V(\lambda)-E_N)P_N}{h^2\epsilon(h)} =  \frac{\Pi_NW(\lambda)\Pi_N}{h^2\epsilon(h)} +  O(N^{-\sigma})
= \Pi_N\left(-\frac{B}{2}hL_3\right)\Pi_N \phantom{xxxx}\nonumber \\
&& \phantom{xxxxxxxxxx}+ O(\epsilon(h)) +
 O(N^{-\sigma})  = \Pi_N\left(-\frac{B}{2}hL_3\right)\Pi_N +  O(N^{-\sigma}).
 \phantom{xxx} \label{est-normalized}
 \eea
Since $\|\Pi_N\left(-\frac{B}{2}hL_3\right)\Pi_N\|=O(1)$ then
$\left\|\frac{P_N(S_V(\lambda)-E_N)P_N}{h^2\epsilon(h)}\right\|=O(1)$.  Due to the self-adjointness of  $\frac{P_N(S_V(\lambda)-E_N)P_N}{h^2\epsilon(h)}$,
then the spectral radius of $\frac{P_N(S_V(\lambda)-E_N)P_N}{h^2\epsilon(h)}$ is
$O(1)$ as well.
Thus we see that the size of the  eigenvalue shifts $\tilde\nu_{N,j}$ is $O(h^2\epsilon(h))$.  Moreover, equation (\ref{est-normalized}) is the
basis to establish the following theorem:

\begin{theorem}\label{trace-aprox-theorem}
 Let $h=1/(N+1)$ and $\sigma = (2q-33)/5$. For any polynomial $Q$, we have
 \bea\label{eq:average-trace}
  \frac{1}{d_N}\sum_{j=1}^{d_N} Q \left(\frac{\tilde\nu_{N,j}}{h^2\epsilon(h)}\right) =
 \frac{1}{d_N} {\rm Tr}\; Q \left(\Pi_N\left(-\frac{B}{2}hL_3\right)\Pi_N\right) + O(N^{-\sigma}).
 \eea
 So for $q > 33/2$, the remainder term in \eqref{eq:average-trace} vanishes as $N \rightarrow \infty$.
\end{theorem}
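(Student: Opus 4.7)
The plan is to read this as a finite-dimensional statement hidden inside $L^2(\R^3)$, and then control the trace of a polynomial in two nearby rank-$d_N$ operators by combining an operator-norm bound with a rank bound.

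First I set $A_N := \frac{P_N(S_V(\lambda)-E_N)P_N}{h^2\epsilon(h)}$ and $B_N := \Pi_N\!\left(-\tfrac{B}{2}hL_3\right)\!\Pi_N$. The self-adjoint operator $A_N$ has eigenvalues $\tilde\nu_{N,j}/(h^2\epsilon(h))$ on the $d_N$-dimensional space $\operatorname{Ran}(P_N)$ together with the eigenvalue $0$ of infinite multiplicity on $\operatorname{Ker}(P_N)$, so
\[
\frac{1}{d_N}\sum_{j=1}^{d_N} Q\!\left(\frac{\tilde\nu_{N,j}}{h^2\epsilon(h)}\right)
= \frac{1}{d_N}\operatorname{Tr}_{\mathrm{Ran}(P_N)} Q(A_N),
\]
and likewise the right-hand side of \eqref{eq:average-trace} is the analogous trace on $\operatorname{Ran}(\Pi_N)$. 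Writing $Q(x)=a_0+\sum_{k\ge 1}a_kx^k$, the constant term contributes the same value $a_0$ on both sides (because both traces are over $d_N$-dimensional spaces), so it suffices to prove that for each $k\ge 1$
\[
\frac{1}{d_N}\operatorname{Tr}(A_N^k) - \frac{1}{d_N}\operatorname{Tr}(B_N^k) = O(N^{-\sigma}),
\]
where the traces are taken on $L^2(\R^3)$ (both $A_N^k$ and $B_N^k$ are finite-rank and hence trace class for $k\ge 1$).

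The key input is \eqref{est-normalized}, which I rewrite as $A_N = B_N + R_N$ with $\|R_N\|=O(N^{-\sigma})$. Combined with $\|B_N\|=O(1)$ (noted right after \eqref{est-normalized}), this gives $\|A_N\|=O(1)$ uniformly in $N$. Next I use the telescoping identity
\[
A_N^k - B_N^k = \sum_{j=0}^{k-1} A_N^{k-1-j}\,R_N\,B_N^{j},
\]
so $\|A_N^k-B_N^k\|\le k\,\|R_N\|\,C^{k-1}=O(N^{-\sigma})$ for fixed $k$.

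The crucial observation is that both $A_N^k$ and $B_N^k$ have rank at most $d_N$ (since $A_N=P_NA_NP_N$ has range in $\operatorname{Ran}(P_N)$ and $B_N=\Pi_NB_N\Pi_N$ has range in $\operatorname{Ran}(\Pi_N)$, each of dimension $d_N$), so $A_N^k-B_N^k$ has rank at most $2d_N$. Applying the elementary bound $|\operatorname{Tr}(T)|\le \operatorname{rank}(T)\cdot\|T\|$ gives
\[
\bigl|\operatorname{Tr}(A_N^k - B_N^k)\bigr| \le 2d_N \,\|A_N^k-B_N^k\| = O(d_N N^{-\sigma}),
\]
and dividing by $d_N$ yields the claim for each $k$. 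Summing over the finitely many terms of $Q$ produces the single error $O(N^{-\sigma})$ in \eqref{eq:average-trace}.

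There is no serious obstacle here, because \eqref{est-normalized} has already done the hard analytic work (controlling the unbounded perturbation $W(\lambda)$ through the stability theorem and the bounds in Lemma \ref{norm-estimates-N}). The only point requiring a little care is the bookkeeping that converts an $O(N^{-\sigma})$ operator-norm error into an $O(N^{-\sigma})$ error after normalizing the trace by $d_N$; this works precisely because the operators involved live inside $d_N$-dimensional subspaces, so the rank bound absorbs exactly the factor $d_N$ in the denominator.
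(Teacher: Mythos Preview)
Your proof is correct and rests on the same key input as the paper's, namely \eqref{est-normalized}, but the execution differs. The paper sets $A_N:=\Pi_N\!\left(-\tfrac{B}{2}hL_3\right)\!\Pi_N$ (your $B_N$) and, using $P_N\bigl(\tfrac{P_N(S_V(\lambda)-E_N)P_N}{h^2\epsilon(h)}\bigr)^k=\bigl(\tfrac{P_N(S_V(\lambda)-E_N)P_N}{h^2\epsilon(h)}\bigr)^k$, arrives at the decomposition
\[
\Bigl(\tfrac{P_N(S_V(\lambda)-E_N)P_N}{h^2\epsilon(h)}\Bigr)^k = A_N^k + (P_N-\Pi_N)A_N^k + P_N\,O(N^{-\sigma}),
\]
and then bounds the normalized trace of each error term by an explicit orthonormal-basis computation, one carried out in $\operatorname{Ran}(\Pi_N)$ and the other in $\operatorname{Ran}(P_N)$. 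You instead take the straight telescoping $A_N^k-B_N^k=\sum_j A_N^{k-1-j}R_NB_N^{\,j}$, bound the operator norm, and then invoke the single inequality $|\operatorname{Tr}(T)|\le\operatorname{rank}(T)\,\|T\|$ together with $\operatorname{rank}(A_N^k-B_N^k)\le 2d_N$. Your route is shorter and more transparent about why the $d_N$ in the denominator is exactly absorbed; the paper's route makes the role of the projector difference $P_N-\Pi_N$ from the stability theorem more visible. Either way the content is the same.
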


\begin{proof}
For $N$ sufficiently large,  there exist a fixed  interval $\left[-A,A\right]$ containing  both
 $\frac{\tilde\nu_{N,j}}{h^2\epsilon(h)}$, $ j=1,\ldots,d_N$, and the eigenvalues of  the operator  $\Pi_N\left(-\frac{B}{2}hL_3\right)\Pi_N$.
Since a polynomial is a finite linear combination of monomials, then we only need to show equation  (\ref{eq:average-trace}) for a monomial.
We consider a monomial of degree $k \in \N$ and write
\beq\label{ident-trace}
\frac{1}{d_N}\sum_{j=1}^{d_N}\left(\frac{\tilde\nu_{N,j}}{h^2\epsilon(h)}\right)^k =
 \frac{1}{d_N} {\rm Tr}\;\left(\frac{P_N\left(S_V(\lambda)-E_N\right)P_N}{h^2\epsilon(h)}\right)^k.
\eeq
To simplify notation, let $A_N :=\Pi_N\left(-\frac{B}{2}hL_3\right)\Pi_N$. From equation (\ref{est-normalized}), we have
\bea\label{eq:substitute1}
&& \left(\frac{P_N(S_V(\lambda)-E_N)P_N}{h^2\epsilon(h)}\right)^k =  P_N
\left(A_N\right)^k +  P_NO(N^{-\sigma}), \phantom{xxx}\nonumber \\
&& \phantom{xxxxx} = \left(A_N\right)^k +
\left(P_N-\Pi_N\right)\left(A_N\right)^k  +
P_NO(N^{-\sigma}).
\eea
To evaluate the trace of the second term on the right of the last line of \eqref{eq:substitute1}, consider an orthonormal  basis $\{\phi_j\}_{j=1}^{d_N}$ for the range of $\Pi_N$ and extend it to an orthonormal  basis  $\{\phi_j\}_{j=1}^{\infty}$
of $L^2(\R^3)$. Then,  we have
\bea
&&\left|\frac{1}{d_N} {\rm Tr}\;
\left(P_N-\Pi_N\right)A_N^k\right| =
\left|\frac{1}{d_N}\sum_{j=1}^\infty \langle \phi_j,\left(P_N-\Pi_N\right)A_N^k\phi_j \rangle \right|,
\phantom{xxxxxxxxxx} \nonumber \\
&& \phantom{xxxx}= \left|\frac{1}{d_N}\sum_{j=1}^{d_N} \langle \phi_j,\left(P_N-\Pi_N\right)A_N^k\phi_j \rangle \right| \leq \|P_N-\Pi_N\| = O(N^{-\sigma}).
\eea
To evaluate the trace of the third term on the right of the last line of \eqref{eq:substitute1}, consider an orthonormal basis $\{\Psi_j\}_{j=1}^{d_N}$ for the range of $P_N$
and extend it to an orthonormal basis $\{\Psi_j\}_{j=1}^{\infty}$  of $L^2(\R^3)$. Thus, we have
\bea
&&\left|\frac{1}{d_N} {\rm Tr}\;\left(P_NO(N^{-\sigma}) \right)\right| =
\left|\frac{1}{d_N}\sum_{j=1}^\infty \langle O(N^{-\sigma}) \Psi_j, P_N\Psi_j \rangle\right|,
\phantom{xxxxxxxxxxxxx}\nonumber \\
&&
\phantom{xxxxxxxxxx} = \left|\frac{1}{d_N}\sum_{j=1}^{d_N} \langle O(N^{-\sigma}) \Psi_j, P_N \Psi_j \rangle \right| \leq  O(N^{-\sigma}) . \label{est-norm-O}
\eea
From equations (\ref{ident-trace})-(\ref{est-norm-O}), the theorem follows for a monomial and hence for any polynomial.
\end{proof}



\section{A trace estimate for the Zeeman perturbation of the hydrogen atom.}\label{sec:poly1}

In this section, we calculate the large $N$ limit of the
trace on the right side of \eqref{eq:average-trace}
involving the third component $L_3$ of the angular momentum. This extends previous results to a family of perturbations that are first-order differential operators. In section \ref{subsec:ang-mom-term}, we prove Proposition \ref{estinnerprod} that presents a large
$N$ estimate  for the expected value of powers of the operator $h \frac{B}{2} L_3$, $h=1/(N+1)$, on a coherent state $\Psi_{\alpha,N}$. This key technical result is used in section \ref{subsec:szego-type-ang-mom1} to prove the Szeg\"o-type theorem for the angular momentum operator, Theorem \ref{szego-theorem}. Finally, the proofs of the main theorems are given in section \ref{subsec:proof-main1-2}.

\subsection{The angular momentum perturbation term.}\label{subsec:ang-mom-term}

In this section, we prove a Szeg\"o-type theorem for the operator
$h \frac{B}{2} L_3$, where $L_3$ is the angular momentum operator in the
third direction, namely,
$L_3=-\imath\left(x\frac{\partial\phantom{x}}{\partial{y}}-
y\frac{\partial\phantom{x}}{\partial{x}}\right)$.  The  Szeg\"o-type theorem, Theorem \ref{szego-theorem}, describes the limit $N \rightarrow \infty$ of the normalized  trace indicated in the right-hand side of equation (\ref{eq:average-trace}) in Theorem \ref{trace-aprox-theorem}.
As in section \ref{sec:trace1}, it is enough to consider the case $\rho(x)=x^m$, $m\in\N^*$,  on a compact interval (see proof of Theorem  \ref{szego-theorem} below).

The key idea in the proof of Theorem  \ref{szego-theorem} is to use, for each $N$, the  system of coherent states
$\{\Psi_{\alpha,N}\}_{\alpha\in{\mathcal A}}$ (see section \ref{sec:app2-coherentstates1}, Appendix 2).
Such a system
gives a resolution of the identity on the range of $\Pi_N$ (see equation (\ref{eq:proj-coherentstates1}))
and,  as a consequence, one can express the trace in equation
(\ref{eq:average-trace}),  with $\rho(x)=x^m$,  as an integral
involving the inner products $\left<\Psi_{\alpha, N},\left((-h{B}/2) L_3\right)^m\Psi_{\alpha, N}\right>$ with respect to the measure $d\mu(\alpha)$ (see equation (\ref{traza})).

We estimate that inner product in Proposition \ref{estinnerprod}. This result goes beyond our work with the Stark hydrogen atom Hamiltonian \cite{hVB1} as we must estimate the derivatives of the coherent states.
We first state the proposition but defer its proof until after the key technical Lemma \ref{CS-der}.



\begin{proposition}\label{estinnerprod}
For $\alpha\in{\mathcal A}$, an integer  $m > 0$,
and $h=1/(N+1)$, we have for $N\rightarrow{\infty}$,
\beq\label{eq:trace-reduction1}
\left<\Psi_{\alpha, N},\left(h\tilde{B}L_3\right)^m\Psi_{\alpha, N}\right>
= \left(\tilde{B} \ell_3(\alpha)\right)^m + O(N^{-1}),
\eeq
where
$\ell_3(\alpha) = \Re(\alpha)_1\Im(\alpha)_2-\Re(\alpha)_2\Im(\alpha)_1$ and $\tilde{B}=-B/2$.
\end{proposition}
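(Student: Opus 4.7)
The plan is to exploit the fact that $L_3$ is the infinitesimal generator of rotations about the third axis in configuration space: $e^{-i\theta L_3}$ implements the rotation $R_\theta$ of the $(x_1,x_2)$-plane. Since the hydrogen atom coherent states $\Psi_{\alpha,N}$ are built (via the Moser map) from spherical-harmonic coherent states on $\Sp^3$, and the Moser map intertwines rotations of $\R^3$ about the third axis with rotations of $\Sp^3$ fixing the $(\alpha_3,\alpha_0)$-plane, I expect the equivariance $e^{-i\theta L_3}\Psi_{\alpha,N} = \Psi_{R_\theta\alpha,N}$, where $R_\theta$ acts on $\mathcal{A}\subset\C^4$ by rotating the first two complex components of $\alpha$. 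This identity, either extracted from the construction in Appendix 2 or recorded there as a symmetry, reduces the moment computation to the differentiation of a scalar overlap:
\[
\langle\Psi_{\alpha,N},L_3^m\Psi_{\alpha,N}\rangle \;=\; i^m\,\frac{d^m}{d\theta^m}\bigg|_{\theta=0}\,\langle\Psi_{\alpha,N},\Psi_{R_\theta\alpha,N}\rangle.
\]

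For the overlap I would use the standard Rawnsley-type closed form on $\Sp^3$, $\langle\Psi_{\alpha,N},\Psi_{\beta,N}\rangle = \bigl(\bar\alpha\!\cdot\!\beta/2\bigr)^{N}$, where $\bar\alpha\cdot\beta=\sum_j\bar\alpha_j\beta_j$ and $|\alpha|^2=2$ on $\mathcal{A}$. Writing $\alpha_k=\Re(\alpha)_k+i\Im(\alpha)_k$ and substituting $\beta=R_\theta\alpha$, a direct computation gives
\[
\bar\alpha\!\cdot\!R_\theta\alpha \;=\; 2 \;-\; 2i\,\ell_3(\alpha)\sin\theta \;+\; (|\alpha_1|^2+|\alpha_2|^2)(\cos\theta-1).
\]
Thus the overlap equals $f(\theta)^N$ with $f(0)=1$, $f'(0)=-i\,\ell_3(\alpha)$, and all higher Taylor coefficients of $f$ uniformly bounded on $\mathcal{A}$.

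The concluding step is the elementary expansion
\[
\frac{d^m}{d\theta^m}\bigg|_{\theta=0}f(\theta)^N \;=\; \frac{N!}{(N-m)!}\bigl(f'(0)\bigr)^m \;+\; O(N^{m-1}),
\]
proved by Fa\`a di Bruno or by iterated Leibniz, with the remainder uniform in $\alpha\in\mathcal{A}$ for $m$ fixed. Multiplying by $i^m$ and by $(h\tilde B)^m=\tilde B^m/(N+1)^m$, using that $i^m(-i)^m=1$, and combining $\tfrac{N!}{(N-m)!\,(N+1)^m}=1+O(N^{-1})$ with the boundedness of $\ell_3$ on $\mathcal{A}$, yields exactly the claimed
\[
\langle\Psi_{\alpha,N},(h\tilde B L_3)^m\Psi_{\alpha,N}\rangle \;=\; (\tilde B\,\ell_3(\alpha))^m \;+\; O(N^{-1}).
\]

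The principal obstacle will be pinning down the overlap formula and the equivariance for the hydrogen atom coherent states themselves, since the passage through the Moser map can introduce a bounded multiplicative factor or a finite shift in the effective exponent; any such $O(1)$ shift is invisible at the order $O(N^{-1})$ we need, but it must be tracked carefully. An alternative, more in line with what the excerpt suggests through its reference to Lemma \ref{CS-der} and the stationary-phase method, is to represent $\Psi_{\alpha,N}$ as an oscillatory integral, compute $L_3^m\Psi_{\alpha,N}$ by differentiating under the integral sign, and apply stationary phase: the critical point of the phase selects the classical value $\ell_3(\alpha)$ on the Kepler orbit, each application of $L_3$ contributes a factor of order $N$ from the phase, and the amplitude derivatives---controlled by Lemma \ref{CS-der}---produce the $O(N^{m-1})$ subleading corrections that become $O(N^{-1})$ after multiplication by $h^m$.
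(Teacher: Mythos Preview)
Your approach is correct and genuinely different from the paper's. The paper proceeds via Lemma~\ref{CS-der}, which computes $(h\tilde B\tilde L_3)^m J^{1/2}K\Phi_{\alpha,N}$ explicitly by induction, then writes the inner product as an oscillatory integral on $\Sp^3$ with phase $\phi_\alpha(\omega)=-i\ln|\alpha\cdot\omega|^2$, introduces coordinates adapted to $\alpha$, and applies stationary phase in the transverse variables followed by an explicit Fourier computation in the remaining angular variable. This is exactly the alternative route you sketch in your final paragraph.

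Your primary route---rotation equivariance plus the closed-form overlap plus Fa\`a di Bruno---is shorter and avoids both the inductive derivative lemma and the stationary-phase machinery. Your worry about the overlap formula is in fact unfounded: it holds \emph{exactly} for the hydrogen-atom states. From \eqref{cohstaeshydatom} and the unitarity of $\mathcal F^{-1}$, $D_{r_N}$, $K$ one gets $\langle\Psi_{\alpha,N},\Psi_{\beta,N}\rangle_{L^2(\R^3)}=\langle\Phi_{\alpha,N},(1-\omega_4)\Phi_{\beta,N}\rangle_{L^2(\Sp^3)}$, the factor $1-\omega_4=2/(|\vecp|^2+1)$ coming from $J$. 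But $\omega_4\Phi_{\beta,N}$ lies in $\mathcal L_{N+1}\oplus\mathcal L_{N-1}$ (multiplication of a degree-$N$ harmonic by a coordinate), hence is orthogonal to $\Phi_{\alpha,N}\in\mathcal L_N$, so the $\omega_4$ term drops and the overlap is exactly the spherical one $(\bar\alpha\cdot\beta/2)^N$. The equivariance $e^{-i\theta L_3}\Psi_{\alpha,N}=\Psi_{R_\theta\alpha,N}$ is immediate from the explicit momentum-space formula \eqref{eq:cohst-p1}, since the stereographic projection intertwines the rotation of $(p_1,p_2)$ with that of $(\omega_1,\omega_2)$ and the remaining factors depend only on $|\vecp|$. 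What your method buys is brevity and an exact algebraic computation; what the paper's method buys is a template that extends to observables which are \emph{not} symmetries of the hydrogen atom, where no equivariance is available and one is forced into the derivative estimates of Lemma~\ref{CS-der} and stationary phase.
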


\begin{remark}
When $\Re \alpha \in{\Sp}^3_o$ (the 3-sphere with the north pole removed),
we have that  $\ell_3(\alpha)$
is the angular momentum in the third direction associated to the point
$({\bf x,p})={\mathcal M}^{-1}\circ\sigma(\alpha)$, where ${\mathcal M}^{-1}$ is the
inverse of the Moser map ${\mathcal
M}:T^*(\mathbb{R}^n)\rightarrow{T}^*(\Sp^3_o)$ and $\sigma$ is the map defined in Eq.
(\ref{eq:sigma1})  (see Appendix 1, section \ref{sec:moser-map}).
\end{remark}

Before presenting the proof, we present the lemma on the decay properties of the derivatives of the coherent states as $N \rightarrow \infty$.
We recall some notation from Appendix 2, section \ref{sec:app2-coherentstates1}. For $\alpha \in \mathcal{A}$, a coherent state is defined by
\beq\label{eq:coh-st1}
\Psi_{\alpha, N}={\mathcal F}^{-1}D_{r_{N}}J^{1/2}K\Phi_{\alpha, N},
\eeq
with $\Phi_{\alpha,N}(\omega) = a(N) ( \alpha \cdot \omega)^N$, where $\mathcal{F}$ is the Fourier transform (see Eq.  (\ref{defft})),
the operator $K$ is the unitary operator defined in Eq. (\ref{defK} )and the operator $J$
is the self-adjoint defined in Eq. (\ref{defJ}), the operator $D_\sigma$ is the dilation operator defined in \eqref{eq:dilation0}, and $\omega \in {\Sp}^3$. We also define
$\tilde{L}_3= \mathcal{F}L_3 \mathcal{F}^{-1}  = -\imath\left[p_2\frac{\partial\phantom{p_1}}{\partial{p_1}}-
p_1\frac{\partial\phantom{p_2}}{\partial{p_2}}\right].$

\begin{lemma}\label{CS-der}
Let $m\geq{1}$ be an integer and $\alpha\in{\mathcal A}$.   Then for $h=1/(N+1)$ we have
\bea\label{Lzm}
&&\left(h\tilde{B}\tilde{L}_3\right)^mJ^{1/2}K\Phi_{\alpha, N}(\vecp) \phantom{xxxxxxxxxxxxxx}\nonumber \\
&&\phantom{xxx}
=
\frac{\left(-\imath\tilde{B}\right)^m\left\{\;\left[\alpha_1p_2
-\alpha_2p_1\right]^{m}+Q_{\alpha,m}\left(\vecp;N\right)\;\right\}}{\left(\frac{|\vecp|^2+1}{2}\right)^{m+2}}
 \phantom{x} \left[ \frac{\Phi_{\alpha, N}(\omega(\vecp))}{\left(\alpha\cdot\omega(\vecp)\right)^{m}} \right],
 \nonumber \\
 &&
\eea
where the function
\beq
Q_{\alpha,m}(\vecp;N)=\sum_{\ell=0}^{m-1}P_{\alpha,m,\ell}(p_1,p_2;N)\left(\frac{|\vecp|^2+1}{2}
\right)^\ell\left(\alpha\cdot\omega(\vecp)\right)^\ell \label{def-Q-type}
\eeq
 with
 \beq
 P_{\alpha,m,\ell}(p_1,p_2;N)= \sum_{j=0}^{m-\ell} C_{N,m,\ell,j}(N)\left[\alpha_1p_2-\alpha_2p_1\right]^{j}
 \left[\alpha_1p_1+\alpha_2p_2\right]^{m-\ell-j} \label{exppol}
 \eeq
and all the coefficients  $C_{N,m,\ell,j}(N)$ are   $O(N^{-1})$.
\end{lemma}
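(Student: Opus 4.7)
The plan is to proceed by induction on $m$, building on the explicit form (from Appendix~2) $J^{1/2}K\Phi_{\alpha,N}(\vecp)=a(N)(\alpha\cdot\omega(\vecp))^{N}/((|\vecp|^{2}+1)/2)^{2}$, where $\omega(\vecp)=(2\vecp/(|\vecp|^{2}+1),(|\vecp|^{2}-1)/(|\vecp|^{2}+1))\in\Sp^{3}$ is the inverse stereographic projection. Three algebraic facts do all the work: (i) $\tilde{L}_{3}$ annihilates every function of $|\vecp|^{2}$, so it commutes with every power of $(|\vecp|^{2}+1)/2$; (ii) with $u=\alpha_{1}p_{1}+\alpha_{2}p_{2}$ and $v=\alpha_{1}p_{2}-\alpha_{2}p_{1}$, one has $\tilde{L}_{3}u=-iv$ and $\tilde{L}_{3}v=iu$, so $\tilde{L}_{3}$ preserves $\mathrm{span}\{u,v\}$ and does not increase polynomial degree in $(p_{1},p_{2})$; and (iii) $\tilde{L}_{3}(\alpha\cdot\omega(\vecp))=-iv/((|\vecp|^{2}+1)/2)$, since only the $2(\alpha_{1}p_{1}+\alpha_{2}p_{2})/(|\vecp|^{2}+1)$ summand of $\alpha\cdot\omega(\vecp)$ is not annihilated.

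The base case $m=1$ is a direct computation: applying $h\tilde{B}\tilde{L}_{3}$ to $J^{1/2}K\Phi_{\alpha,N}(\vecp)$ and using $hN=1-h$ yields the claimed formula with $Q_{\alpha,1}=-v/(N+1)$, fitting the ansatz via $C_{N,1,0,1}=-1/(N+1)$ and $C_{N,1,0,0}=0$. For the inductive step, apply $h\tilde{B}\tilde{L}_{3}$ to the formula for $m$ and distribute the derivative via Leibniz. The action on $(\alpha\cdot\omega(\vecp))^{N-m}$ contributes $h(N-m)v\cdot(v^{m}+Q_{\alpha,m})/((|\vecp|^{2}+1)/2)$; since $h(N-m)=1-(m+1)/(N+1)$, the piece $v\cdot v^{m}$ produces the new leading term $v^{m+1}$ together with an $O(N^{-1})$ correction absorbed into $P_{\alpha,m+1,0}$, while $h(N-m)v\cdot Q_{\alpha,m}$ inherits $O(N^{-1})$ coefficients from $Q_{\alpha,m}$. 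The action on $(v^{m}+Q_{\alpha,m})$ carries the explicit prefactor $h=1/(N+1)$ and, by (ii), yields a polynomial of the same degree in $(u,v)$; after the compensating factor $((|\vecp|^{2}+1)/2)(\alpha\cdot\omega(\vecp))$ needed to reach the target denominator $((|\vecp|^{2}+1)/2)^{m+3}$ and numerator $(\alpha\cdot\omega(\vecp))^{N-m-1}$, each resulting monomial lands in the $\ell$- or $(\ell+1)$-slot of the ansatz for $Q_{\alpha,m+1}$.

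Reassembling, $Q_{\alpha,m+1}$ has the required form $\sum_{\ell=0}^{m}P_{\alpha,m+1,\ell}((|\vecp|^{2}+1)/2)^{\ell}(\alpha\cdot\omega(\vecp))^{\ell}$, with each $P_{\alpha,m+1,\ell}$ a polynomial in $(u,v)$ of total degree $m+1-\ell$ and coefficients $O(N^{-1})$, closing the induction. The main obstacle is the combinatorial bookkeeping to verify that this graded structure is preserved as $\tilde{L}_{3}$ redistributes powers of $v$, $u$, $(|\vecp|^{2}+1)/2$, and $(\alpha\cdot\omega(\vecp))$; the essential point making it work is that fact (ii) confines the polynomial growth to the two-dimensional subspace $\mathrm{span}\{u,v\}$, while the cancellation $hN=1-h$ is precisely what isolates $v^{m+1}$ as the new leading term and forces every other contribution into $Q_{\alpha,m+1}$ with an explicit factor of $1/(N+1)$.
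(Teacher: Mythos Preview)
Your proposal is correct and follows essentially the same approach as the paper: both argue by induction on $m$, using the same basic identities (your facts (i)--(iii) are exactly the four equalities the paper records, together with dilation invariance) and the Leibniz rule to track how $h\tilde{B}\tilde{L}_{3}$ redistributes powers among $v$, $u$, $(|\vecp|^{2}+1)/2$, and $\alpha\cdot\omega(\vecp)$. The only differences are cosmetic---you introduce the shorthand $u,v$ and highlight the cancellation $h(N-m)=1-(m+1)/(N+1)$ explicitly, whereas the paper writes out the five resulting terms of the inductive step in full---but the content is identical.
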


\noindent
{\bf Notation}: Given $m,\ell\in\mathbb{Z}$, $1\leq{m}$ and $0\leq\ell\leq{m-1}$, we will denote by $Pol_{\alpha,m,\ell}(p_1,p_2;N)$
 any  polynomial in $(p_1,p_2)$ that is a linear combination of
$\left[\alpha_1p_2-\alpha_2p_1\right]^{j}
 \left[\alpha_1p_1+\alpha_2p_2\right]^{m-\ell-j}$ with  $0\leq{j}\leq{m-\ell}$
 and such all the coefficients depend only on $N$ and are $O(N^{-1})$.  A polynomial
 $Pol_{\alpha,m,\ell}(p_1,p_2;N)$
 will be called a
 $(m,\ell)$-type polynomial.   Thus  $P_{\alpha,m,\ell}(p_1,p_2;N)$ in equation (\ref{exppol})
  is a $(m,\ell)$-type polynomial.

\begin{proof}
1. The following equalities hold for all integers $k,q\geq{1}$:
\bea
&&\tilde{L}_3\left(\frac{2}{|\vecp|^2+1}\right)^k = 0
,     \nonumber \\
&&\tilde{L}_3\left(\alpha\cdot\omega(\vecp)\right)^q = (-\imath)q\left(\alpha\cdot\omega(\vecp)\right)^{q-1}\left(\frac{2}{|\vecp|^2+1}\right)\left(\alpha_1p_2 - \alpha_2p_1\right),    \nonumber \\
&& \tilde{L}_3\left(\alpha_1p_2-\alpha_2p_1\right) = \imath\left(\alpha_1p_1+\alpha_2p_2\right),  \nonumber \\
&& \tilde{L}_3\left(\alpha_1p_1+\alpha_2p_2\right)  = (-\imath)\left(\alpha_1p_2-\alpha_2p_1\right).
\eea
We also note that $\tilde{L}_3$ is dilation invariant.

\noindent
2. We prove Lemma \ref{CS-der} by induction.
We write (see Appendix 2, section \ref{sec:app2-coherentstates1}),
\beq
J^{1/2}K\Phi_{\alpha, N}(\vecp) = a(N)\left(\frac{2}{|\vecp|^2+1}\right)^2\left(\alpha\cdot\omega(\vecp)\right)^N.
\eeq
First note that for $m=1$
\bea
&&\left(h\tilde{B}\tilde{L}_3\right)J^{1/2}K\Phi_{\alpha, N}(\vecp) = \nonumber \\
&&  \phantom{xxxxx}\frac{\left(-\imath\tilde{B}\right)\left(\;\left[\alpha_1p_2
-\alpha_2p_1\right]+ \frac{1}{N}\left[\alpha_1p_2
-\alpha_2p_1\right]\;\right)}{\left(\frac{|\vecp|^2+1}{2}\right)^{3}}
 \phantom{x}\frac{\Phi_{\alpha, N}(\omega(\vecp))}{\left(\alpha\cdot\omega(\vecp)\right)}.
\eea
Since $\frac{1}{N}\left[\alpha_1p_2
-\alpha_2p_1\right]$
is a $(m=1,\ell=0)$-type polynomial, then it can be written in the form
indicated in the right hand side of equation (\ref{def-Q-type}).  Thus Lemma \ref{CS-der} is valid for $m=1$.

\noindent
3. Let us assume that equation (\ref{Lzm}) holds for an integer $m > 1$. Since  {\bf (i)} $\tilde{L}_3P_{\alpha,m,\ell}(p_1,p_2;N)$ is a $(m,\ell)$-type polynomial and then
a $(m+1,\ell+1)$-type polynomial as well (denoted by $Pol_{\alpha,m+1,\ell+1}(p_1,p_2;N)$)
and we also have {\bf (ii)} $P_{\alpha,m,\ell}(p_1,p_2;N)\left(\alpha_1p_2
-\alpha_2p_1\right)$ is a $(m+1,\ell)$-type  polynomial (denoted by $Pol_{\alpha,m+1,\ell}(p_1,p_2;N)$) then
\bea
\left(h\tilde{B}\tilde{L}_3\right)^{m+1}J^{1/2}K\Phi_{\alpha, N}(\vecp) =
\frac{\left(-\imath\tilde{B}\right)^{m+1}}{\left(\frac{|\vecp|^2+1}{2}\right)^{m+3}}
\frac{\Phi_{\alpha, N}(\omega(\vecp))}{\left(\alpha\cdot\omega(\vecp)\right)^{m+1}}
\phantom{xxxxxxxxxxxxx}
\nonumber \\
\cdot\left\{\left[\alpha_1p_2
-\alpha_2p_1\right]^{m+1}
 -\left(\frac{m+1}{N+1}\right)\left[\alpha_1p_2
-\alpha_2p_1\right]^{m+1} \right.
\phantom{xxxxxxxxxxxxxxx}
\nonumber \\
- \left.  \left(\frac{m}{N+1}\right)\left[\alpha_1p_2
-\alpha_2p_1\right]^{m-1}\left[\alpha_1p_1+\alpha_2p_2\right]\left(\frac{|\vecp|^2+1}{2}
\right)\left(\alpha\cdot\omega(\vecp)\right)\right.
\phantom{x}
\nonumber \\
+ \left.  \frac{1}{N+1}\sum_{\ell=0}^{m-1}Pol_{\alpha,m+1,\ell+1}(p_1,p_2;N)\left(\frac{|\vecp|^2+1}{2}
\right)^{\ell+1}\left(\alpha\cdot\omega(\vecp)\right)^{\ell+1}\right.
\phantom{xx}
\nonumber \\
+\left. \frac{1}{N+1}\sum_{\ell=0}^{m-1}\left(N-m+\ell\right)Pol_{\alpha,m+1,\ell}(p_1,p_2;N)\left(\frac{|\vecp|^2+1}{2}
\right)^{\ell}\left(\alpha\cdot\omega(\vecp)\right)^{\ell}\right\}.
\eea
One can check that each one of the last four terms inside the bracket can be written
as indicated in the right hand side of equation (\ref{def-Q-type}) and then altogether give us the function
$Q_{\alpha,m+1}(\vecp;N)$ in equation (\ref{def-Q-type}).
\end{proof}

\noindent
We finish this section with the proof of Proposition \ref{estinnerprod}.

\begin{proof}
1. Proposition \ref{estinnerprod} is obviously valid for $m=0$, so we assume $m\geq{1}$.
Using the definition of $\Psi_{\alpha, N}$ above Lemma \ref{CS-der} in \eqref{eq:coh-st1},
the inner product is
\bea
&&\left<\Psi_{\alpha,N},\left(h\tilde{B}L_3\right)^m\Psi_{\alpha, N}\right>
_{L^2(\mathbb{R}^3)}\nonumber \\
&&\phantom{xxxxxxxx}= \left<\Phi_{\alpha, N},
K^{-1}J^{1/2}\left(h\tilde{B}\tilde{L}_3\right)^mJ^{1/2}K\Phi_{\alpha,
N}\right>_{L^2(\Sp^3)},
\eea
where $\tilde{L}_3= \mathcal{F}L_3 \mathcal{F}^{-1}  = -\imath\left[p_2\frac{\partial\phantom{p_1}}{\partial{p_1}}-
p_1\frac{\partial\phantom{p_2}}{\partial{p_2}}\right]$ and $D^{-1}_{r_{N}}\tilde{L}_3D_{r_{N}}=\tilde{L}_3$.

\noindent
2. Using Lemma \ref{CS-der}, we have
\begin{eqnarray}
K^{-1}J^{1/2}\left(h\tilde{B}\tilde{L}_3\right)^mJ^{1/2}K\Phi_{\alpha, N} (\omega)  = \phantom{xxxxxxxxxxxxxxxxxxxxxxxxxxxx} \nonumber \\
\frac{\left(-\imath\tilde{B}\right)^m
\left\{\;\left[\alpha_1p_2(\omega)
-\alpha_2p_1(\omega)\right]^{m}+Q_{\alpha,m}(\vecp(\omega);N)\;\right\}}
{\left(\frac{|\vecp|^2(\omega)+1}{2}\right)^{m+1}}
 \left( \frac{\Phi_{\alpha, N}(\omega)}{\left(\alpha\cdot\omega\right)^{m}}\right),  \phantom{x}
\end{eqnarray}
with  $Q_{\alpha,m}(p;N)$ is defined in \eqref{def-Q-type} and $P_{\alpha,m,\ell}(p_1,p_2;N)$ is defined in \eqref{exppol}.

\noindent
3. Since $|\alpha|=\sqrt{2}$ then , for N sufficiently large,  there exist a constant C such that
$|P_{\alpha,m,\ell}(p_1,p_2;N)|\leq{C}|p|^{m-\ell}/N$ uniformly in $\alpha$.
Since $|\left(\alpha\cdot\omega(\vecp)\right)|\leq{1}$ for all $\alpha\in{\mathcal A}$ and $p\in\mathbb{R}^3$ then we have $|Q_{\alpha,m}(p;N)|\leq{C}|p|^{2m-1}/N$ uniformly in
$\alpha\in{\mathcal A}$
  which in turn implies
\beq
\frac{Q_{\alpha,m}(\vecp(\omega);N)}{\left(\frac{|\vecp|^2(\omega)+1}{2}\right)^{m+1}}=O(N^{-1})
\phantom{xxx}
\rm{uniformly} \phantom{x}\rm{in} \phantom{x}\omega\in{\Sp^3} \phantom{x}\rm{and} \phantom{x}\alpha\in{\mathcal A}.  \label{esterror}
\eeq
Now notice that for $m$ fixed and $d\Omega$ denoting the
surface measure on the 3-sphere:
\bea
\int\frac{|\Phi_{\alpha,
N}(\omega)|^2}{\left|\alpha\cdot\omega\right|^{m}}d\Omega(\omega) \leq
a^2(N)\int{\left|\alpha\cdot\omega\right|^{2(N-m)}}d\Omega(\omega)=
\frac{a^2(N)}{a^2(N-m)}.
\eea

\noindent
4. From Proposition \ref{estnormacs} in Appendix 2,
we  have  $\int{|\Phi_{\alpha,
N}(\omega)|^2}/{|\left(\alpha\cdot\omega\right)^{m}|}d\Omega(\omega)=O(1)$ for $m$ fixed and $N\rightarrow\infty$. Thus using the expression for
the inverse of the stereographic projection $p=\vecp(\omega)$  (see equation (\ref{invmoser}))  and equation (\ref{esterror}) we obtain
\bea \label{prodint}
&&\left<\Psi_{\alpha,N},\left(h\tilde{B}L_3\right)^m\Psi_{\alpha, N}\right>
_{L^2(\mathbb{R}^3)} = \left(-\imath\tilde{B}\right)^m{a}^2(N)
 \nonumber \\
&& \phantom{xxx}\cdot\int\exp\left(\imath{N}\phi_{\alpha}(\omega)\right)
\frac{\left[\alpha_1p_2(\omega)
-\alpha_2p_1(\omega)\right]^{m}}{\left(\frac{(|\vecp|^2(\omega)+1)^2}{2}\right)^{m+1}
\left(\alpha\cdot\omega\right)^{m}} d\Omega(\omega) + O(N^{-1}),
\nonumber \\
&&\phantom{xx}=\left(-\imath\tilde{B}\right)^m{a}^2(N)\nonumber \\
&&\phantom{xxx}\cdot
\int\exp\left(\imath{N}\phi_{\alpha}(\omega)\right)
\left(1-\omega_4\right) \frac{\left[\alpha_1\omega_2
-\alpha_2\omega_1\right]^{m}}{\left(\alpha\cdot\omega\right)^{m}}
d\Omega(\omega) + O(N^{-1}),
\eea
where the phase function is given by $ \phi_{\alpha}(\omega)=
-\imath\ln\left(|\alpha\cdot\omega|^2\right).$

\noindent
5. Let us denote by $\vecai=\Re \alpha$ and $\vecbi=\Im{\alpha}$. In
order to evaluate the integral in (\ref{prodint}), we
introduce the following coordinates for $\Sp^3$.   Consider two
orthonormal vectors $\vece_3$ and $\vece_4$ such that
$\{\vecai,\vecbi,\vece_3,\vece_4\}$ is an orthonormal basis of
$\mathbb{R}^4$. Except for a set of measure zero, any element
$\omega\in\Sp^3$ can be written as
\beq
\omega=\sqrt{1-z_3^2-z_4^2}\cos(\theta)\vecai +
\sqrt{1-z_3^2-z_4^2}\sin(\theta)\vecbi+z_3\vece_3+z_4\vece_4
\eeq
with $\theta\in[0,2\pi]$ and $z=(z_3,z_4)$ in the unit disk
$z_3^2+z_4^2<1$.

\noindent
6. The surface measure is given  by   $d\Omega(\omega)=
d\theta{d}z_3dz_4$.  Now we write the integral in
(\ref{prodint}) as an iterated integral estimating the integration
with respect to the variable $z$  by  using the stationary phase
method. Note that
$\phi_{\alpha}(\omega)=-\imath\ln(1-z_1^2-z_2^2)$ is independent
of $\theta$.  Since $z=0$ is a non-degenerate critical point of
the function $\phi_{\alpha}$ and
$\left[\det\left(\frac{N\phi_{\alpha}^{\prime\prime}(0)}{2\pi\imath}\right)
\right]^{-1/2}=\frac{\pi}{N}$, with $\phi_{\alpha}^{\prime\prime}$
denoting the Hessian matrix of the function $\phi_{\alpha}$ with
respect to the variable $z$, then
\begin{eqnarray} &&\left<\Psi_{\alpha,
N},\left(h\tilde{B}L_3\right)^m\Psi_{\alpha, N}\right>
_{L^2(\mathbb{R}^3)} = \frac{\left(-\imath\tilde{B}\right)^m}{2\pi}
\int_0^{2\pi}\frac{\left[1-\vecai_4\cos(\theta)-\vecbi_4\sin(\theta)\right]}{\exp(\imath{m}\theta)}
 \nonumber
\\ && \phantom{xx} \times \left[\alpha_1\{\vecai_2\cos(\theta)+\vecbi_2\sin(\theta)\}-
\alpha_2\{\vecai_1\cos(\theta)+\vecbi_1\sin(\theta)\}\right]^md\theta
+ O(N^{-1}) \nonumber \\
&& \phantom{xx} =
\tilde{B}^m\left[\vecbi_1\vecai_2-\vecai_1\vecbi_2\right]^m + O(N^{-1}), \label{calculoint}
\end{eqnarray}
where we have used the estimate
$a^2(N)=\frac{N}{\pi}\left[\frac{1}{2\pi}+O(N^{-1})\right]$  (see
Appendix 2, section \ref{sec:app2-coherentstates1}). The last equality in equation (\ref{calculoint}) can be obtained by noting that the coefficients of $\exp(\imath{m}\theta)$ and $\exp(\imath(m-1)\theta)$ in the expansion of
the factor $\left[\alpha_1\{\vecai_2\cos(\theta)+\vecbi_2\sin(\theta)\}-
\alpha_2\{\vecai_1\cos(\theta)+\vecbi_1\sin(\theta)\}\right]^m$ are $\left(\imath\left[\vecbi_1\vecai_2-\vecai_1\vecbi_2\right]\right)^m$ and zero respectively.
This leads to the expression on the right side of \eqref{eq:trace-reduction1}.
\end{proof}


\subsection{Szeg\"o-type theorem for the angular momentum operator.}\label{subsec:szego-type-ang-mom1}

Using the resolution of the identity given by the coherent states $\Psi_{\alpha,
N}$ (see equation (\ref{eq:proj-coherentstates1})), Proposition \ref{estinnerprod},  and the commutativity of $L_3$ with the scaled hydrogen atom Hamiltonian $S_V$ we have the following Szeg\"o -type theorem:

\begin{theorem}[Szeg\"o theorem]\label{szego-theorem}
Let $\rho: \R \rightarrow \R$ be continuous. Then, for $h=1/{(N+1)}$ we have
\bea \label{eq:szego-theorem1}
\lim_{N \rightarrow \infty} \frac{1}{d_N} Tr\left(
\rho\left(\Pi_N\left( -\frac{B}{2}h L_3\right)\Pi_N\right)\right) =
 \int_{{\mathcal A}} \rho \left( -\frac{B}{2} \ell_3 (\alpha)
  \right) ~ d\mu(\alpha) \label{szego1} \\
=\int_{\Sigma(-1/2) } \rho \left(  -\frac{B}{2} \ell_3 ({\bf x, p})
  \right) ~ d\mu_L ({\bf x, p}) \label{eq:szego22}
  \eea
where $\ell_3 ({\bf x,p}) = x p_y - y p_x$ is the z-component of the classical angular
momentum assigned to the point $({\bf x,p})\in\Sigma(-1/2)$.
\end{theorem}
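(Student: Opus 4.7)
My plan is to reduce the statement to the monomial case, apply the coherent state resolution of the identity, and then conclude with Proposition \ref{estinnerprod}.

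\textbf{Step 1 (Stone--Weierstrass reduction).} The operator $T_N := \Pi_N\bigl(-\tfrac{B}{2}hL_3\bigr)\Pi_N$ is self-adjoint, and because $L_3$ commutes with $S_V$ it commutes with $\Pi_N$; hence the eigenvalues of $T_N$ are exactly $-\tfrac{B}{2}h m$ for $m = -N,\dots,N$, all lying in the fixed compact interval $I := [-B/2,\,B/2]$. The classical function $-\tfrac{B}{2}\ell_3(\alpha)$ is likewise bounded in $I$ on $\mathcal{A}$. Thus it suffices, by a Stone--Weierstrass approximation of $\rho \in C(I)$ by polynomials in the sup norm, to prove \eqref{szego1} when $\rho(x) = x^m$ for each $m \in \N^*$.

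\textbf{Step 2 (Reduction to coherent state matrix elements).} Using the commutativity noted above,
\[
\bigl(\Pi_N(-\tfrac{B}{2}hL_3)\Pi_N\bigr)^m = \Pi_N\bigl(-\tfrac{B}{2}hL_3\bigr)^m\Pi_N .
\]
The coherent state resolution of the identity on the range of $\Pi_N$ (equation (\ref{eq:proj-coherentstates1})) gives, after tracing,
\begin{equation}\label{traza}
\frac{1}{d_N}\,\mathrm{Tr}\bigl(\Pi_N(-\tfrac{B}{2}hL_3)^m\Pi_N\bigr) \;=\; \int_{\mathcal{A}} \bigl\langle \Psi_{\alpha,N},\, (-\tfrac{B}{2}hL_3)^m\,\Psi_{\alpha,N}\bigr\rangle\, d\mu(\alpha),
\end{equation}
where $d\mu$ is the normalized $SO(4)$-invariant measure on $\mathcal{A}$. (The normalization is consistent with $\mathrm{Tr}(\Pi_N) = d_N$, obtained by taking $m=0$.)

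\textbf{Step 3 (Application of Proposition \ref{estinnerprod} and dominated convergence).} With $\tilde{B} = -B/2$, Proposition \ref{estinnerprod} gives, uniformly in $\alpha \in \mathcal{A}$,
\[
\bigl\langle \Psi_{\alpha,N},(h\tilde{B}L_3)^m\Psi_{\alpha,N}\bigr\rangle = \bigl(\tilde{B}\,\ell_3(\alpha)\bigr)^m + O(N^{-1}).
\]
Since the integrand on the right of \eqref{traza} is uniformly bounded in $N$ and $\alpha$ (by $(B/2)^m$ plus the uniform remainder) and $\mu(\mathcal{A})<\infty$, dominated convergence yields
\[
\lim_{N\to\infty}\frac{1}{d_N}\,\mathrm{Tr}\bigl(\Pi_N(-\tfrac{B}{2}hL_3)^m\Pi_N\bigr) = \int_{\mathcal{A}} \bigl(-\tfrac{B}{2}\ell_3(\alpha)\bigr)^m\, d\mu(\alpha),
\]
which is \eqref{szego1} for $\rho(x)=x^m$. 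By Step 1, \eqref{szego1} follows for every continuous $\rho$.

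\textbf{Step 4 (From $\mathcal{A}$ to $\Sigma(-1/2)$).} The equality of \eqref{szego1} with \eqref{eq:szego22} is a change-of-variables: by Proposition \ref{liouville}, the normalized Liouville measure $d\mu_L$ on $\Sigma(-1/2)$ is the push-forward of $d\tilde\mu$ (the restriction of $d\mu$ to the subset $\tilde{\mathcal{A}}\subset\mathcal{A}$ of full $\mu$-measure where $\mathcal{M}^{-1}\circ\sigma$ is defined) under $\mathcal{M}^{-1}\circ\sigma$; moreover on $\tilde{\mathcal{A}}$ one has $\ell_3(\alpha) = \ell_3(\mathcal{M}^{-1}\circ\sigma(\alpha))$ by the remark after Proposition \ref{estinnerprod}.

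\textbf{Anticipated obstacle.} The main technical content sits in Step 3, and really in Proposition \ref{estinnerprod} itself: the angular momentum operator is \emph{unbounded}, so standard coherent-state localization (which controls only $\Psi_{\alpha,N}$, not its derivatives) is not enough. The care needed is in bounding the function $Q_{\alpha,m}(\vecp;N)$ arising from differentiating $\Psi_{\alpha,N}$ uniformly in $\alpha$, and in verifying that the stationary phase calculation extracts the leading term $(\tilde B \ell_3(\alpha))^m$ while absorbing all lower-order terms into the $O(N^{-1})$ remainder; once this uniform-in-$\alpha$ estimate is in hand, the rest of the argument (Stone--Weierstrass, resolution of identity, dominated convergence, Moser change of variables) is routine.
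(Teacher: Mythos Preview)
Your proposal is correct and follows essentially the same approach as the paper: reduce to monomials by Stone--Weierstrass (using the uniform bound $\|T_N\|\le B/2$), exploit the commutativity of $L_3$ with $\Pi_N$ to write $(\Pi_N T_N \Pi_N)^m=\Pi_N T_N^m\Pi_N$, express the normalized trace via the coherent-state resolution of the identity \eqref{eq:proj-coherentstates1}, apply Proposition~\ref{estinnerprod} with its uniform-in-$\alpha$ remainder, and finish the second equality via Proposition~\ref{liouville}. The only cosmetic difference is that the paper treats monomials first and then extends, whereas you reduce first; the content is the same.
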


\begin{proof}
1. As in the proof of Theorem \ref{trace-aprox-theorem}, we first prove \eqref{eq:szego22} for a polynomial $Q$.
Consequently, it is enough to show Theorem
\ref{szego-theorem} for a monomial $Q(x)=x^m$.
Since $L_3$ commutes with the Hamiltonian $H_V$, then $L_3$ commutes
with the projections $\Pi_N$ which in turn implies
$\left(\Pi_N
\left(-\frac{B}{2}h\right)L_3\Pi_N\right)^m=
\Pi_N\left(-\frac{B}{2}hL_3\right)^m\Pi_N$.
Thus we have from Proposition \ref{estinnerprod}
\bea\label{evaltrace}
&&\frac{1}{d_N} Tr\left(\left(\Pi_N\left(-\frac{B}{2}h\right)
L_3\Pi_N\right)^m\right) = \frac{1}{d_N} Tr\left(\Pi_N\left(-\frac{B}{2}h
L_3\right)^m\Pi_N\right) \nonumber \\
&& \phantom{xxxxxxxxxxxxxxxx} = \int_{\alpha\in{\mathcal A}} \left<\Psi_{\alpha, N},\left(-\frac{B}{2}hL_3\right)^m\Psi_{\alpha,
N}\right> _{L^2(\mathbb{R}^3)}  d\mu(\alpha) \nonumber \\
&&  \phantom{xxxxxxxxxxxxxxxx} = \int_{\alpha\in{\mathcal A}}
\left(-\frac{B}{2}\ell_3(\alpha)\right)^m d\mu(\alpha)+ O(N^{-1}) ,
 \eea
where we have used that the measure $\int_{\mathcal{A}} ~d\mu = 1$ and that for any linear operator $T:{\mathcal E}_N\rightarrow{\mathcal E}_N$, with ${\mathcal E}_N$ denoting the range of $\Pi_N$, we have
\bea
Tr(T) = d_N\int_{\alpha\in{\mathcal A}} \left<\Psi_{\alpha, N},T\Psi_{\alpha, N}\right>   \label{traza}
d\mu(\alpha) .
\eea
Taking the limit $N\rightarrow\infty$ we get the first equality in Theorem
\ref{szego-theorem} for monomials and hence for polynomials $Q$.

\noindent
2. To extend the result to continuous functions $\rho$, we note that
the operators $A_N =: \Pi_N \left( -\frac{B}{2} h B L_3 \right)\Pi_N$
satisfy $\| A_N \| \leq B/2$, so they are uniformly bounded in $N$. Similarly, for $({\bf x,p})\in\Sigma(-1/2)$, we have
$|\ell_3 ({\bf x, p})|\leq{1}$, so that $\left| - \frac{B}{2} \ell_3 (\alpha) \right| \leq \frac{B}{2}$. Consequently, we only need to consider continuous functions on the interval $[ - B/2, B/2]$. Given $\epsilon > 0$, there exists a polynomial $Q_\epsilon$ so that
\beq\label{eq:poly-approx1}
| \rho (x) - Q_\epsilon (x) | \leq \frac{ 2 \epsilon}{B}, ~~~\forall x \in \left[ - \frac{B}{2}, \frac{B}{2} \right].
\eeq
This implies that for all $N$:
\beq \label{eq:trace-approx1}
\| \rho(A_N)-Q_\epsilon(A_N)\| \leq \epsilon,
\eeq
which in turn implies that for all $N$:
\beq \label{eq:trace-approx2}
\left|\frac{1}{d_N} Tr\left(
\rho\left(A_N\right)\right) - \frac{1}{d_N} Tr\left(
Q\epsilon\left(A_N\right)\right)\right| \leq \epsilon .
\eeq
It also follows from \eqref{eq:poly-approx1} that
\beq
 \left| \int_{{\mathcal A}} \rho \left( -\frac{B}{2} \ell_3 (\alpha)
  \right) ~ d\mu(\alpha) -\int_{{\mathcal A}} Q_\epsilon \left( -\frac{B}{2} \ell_3 (\alpha)
  \right) ~ d\mu(\alpha) \right| \leq \epsilon .
\eeq
since $\int_{\mathcal A} ~ d\mu(\alpha) = 1$. By \eqref{eq:trace-approx1}--\eqref{eq:trace-approx2}, we have
\bea\label{eq:trace-approx3}
\left| \frac{1}{d_N} Tr\left(
\rho\left(A_N\right)\right) -  \int_{{\mathcal A}} \rho \left( -\frac{B}{2} \ell_3 (\alpha)
  \right) ~ d\mu(\alpha)\right| \phantom{xxxxxxxxxxxxxxxxx}\nonumber \\
  \leq 2\epsilon + \left| \frac{1}{d_N} Tr\left(
Q_\epsilon\left(A_N\right)\right) - \int_{{\mathcal A}} Q_\epsilon \left( -\frac{B}{2} \ell_3 (\alpha)
  \right) ~ d\mu(\alpha)\right|.
\eea
By Theorem \ref{trace-aprox-theorem}, the limit as $N \rightarrow \infty$ of the second term on the right side of \eqref{eq:trace-approx3} is zero. Since $\epsilon > 0$ is arbitrary, it follows that the limit as $N \rightarrow \infty$ of the first term on the left side of \eqref{eq:trace-approx3} is zero establishing the first equality of
\eqref{eq:szego22}.

\noindent
3. As for the second equality in \eqref{eq:szego22}, we recall
that ${\mathcal A}$ can be identified with the unit
cotangent bundle $T_1^*(\Sp^3)$ of the 3-sphere by the map
$\sigma(\alpha)=\left(\Re \alpha,-\Im \alpha\right)$.
The inverse of the Moser map ${\mathcal M}^{-1}$  (see Appendix 1)
identifies in turn $T_1^*(\Sp_o^3)$ with the surface of constant
energy $\Sigma(-1/2)$ for the Kepler problem. Hence we have
$\ell_3(\alpha)=\ell_3({\bf x,p})$ where $({\bf x,p})={\mathcal
M}^{-1}\circ\sigma(\alpha)$. Thus the second equality of Theorem
\ref{szego-theorem}  is consequence of equation \eqref{szego1} and
Proposition \ref{liouville} below.
\end{proof}

\begin{proposition}\label{liouville}
The Liouville measure
$d\mu_L$ on $\Sigma(-1/2)$ is the push-forward measure of $d\tilde\mu$
by the map ${\mathcal M}^{-1}\circ\sigma$ with $d\tilde\mu$ the
restriction of $d\mu$ to the set $\tilde{{\mathcal A}}=\{\alpha \in{\mathcal A} \; | \; \Re \alpha \neq(0,0,0,1)\}$ .
\end{proposition}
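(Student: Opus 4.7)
The plan is to factor the map $\mathcal{M}^{-1} \circ \sigma$ into its two legs and check that each one transports the appropriate invariant probability measure onto its counterpart.

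First I will analyze the $\sigma$-leg. The map $\sigma(\alpha) = (\Re\alpha, -\Im\alpha)$ is a diffeomorphism from $\mathcal{A}$ onto the unit cotangent bundle $T_1^*\Sp^3$, intertwining the natural $\mathrm{SO}(4)$-actions on source and target. The measure $d\mu$ on $\mathcal{A}$ is by hypothesis the normalized $\mathrm{SO}(4)$-invariant measure. On $T_1^*\Sp^3$, the Liouville measure $d\mu_L^{\Sp^3}$ induced by the canonical symplectic form on $T^*\Sp^3$ is likewise $\mathrm{SO}(4)$-invariant, since $\mathrm{SO}(4)$ acts on $\Sp^3$ by isometries and its cotangent lift preserves the tautological one-form. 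As $\mathrm{SO}(4)$ acts transitively on $T_1^*\Sp^3$ (a homogeneous space), the $\mathrm{SO}(4)$-invariant probability measure on it is unique, so $\sigma_*d\mu = d\mu_L^{\Sp^3}$ after normalization. The exceptional subset $\mathcal{A}\setminus\tilde{\mathcal{A}}$, where $\Re\alpha$ is the north pole, is a single $S^1$-orbit, hence a lower-dimensional submanifold and $d\mu$-null, so I may restrict to $\tilde{\mathcal{A}}$ and to $T_1^*\Sp^3_o$ without affecting the identity.

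Next I will handle the Moser leg. From Appendix 1, the map $\mathcal{M}$ is a symplectic diffeomorphism between appropriate open subsets of $T^*(\R^3\setminus\{0\})$ and $T^*\Sp^3_o$, carrying $\Sigma(-1/2)$ onto $T_1^*\Sp^3_o$. The key observation is that the (normalized) Liouville measure on a regular energy surface $\{H=c\}$ can be characterized intrinsically by the disintegration relation $d\mu_L \wedge dH = \omega^n/n!$; in particular it depends only on the symplectic form and the level set, not on the choice of Hamiltonian or any time parametrization. Since $\mathcal{M}^*\omega_{\Sp^3} = \omega_{\R^3}$ and $\mathcal{M}$ matches the two level sets diffeomorphically, the push-forward $\mathcal{M}_* d\mu_L$ equals the Liouville measure on $T_1^*\Sp^3_o$ induced by $\omega_{\Sp^3}$, which is precisely $d\mu_L^{\Sp^3}|_{T_1^*\Sp^3_o}$.

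Composing the two identities $\mathcal{M}_*d\mu_L = d\mu_L^{\Sp^3}|_{T_1^*\Sp^3_o} = \sigma_*d\tilde\mu$ and inverting the first gives $d\mu_L = (\mathcal{M}^{-1}\circ\sigma)_* d\tilde\mu$, as required. The main subtlety I expect is in the Moser leg: although $\mathcal{M}$ regularizes the Kepler flow by a time reparametrization, one must argue that the normalization constants of the two Liouville probability measures nevertheless match. The cleanest resolution is to work exclusively with the intrinsic symplectic-volume definition of Liouville measure as above, so that the time reparametrization drops out of the comparison; matching the normalizations then reduces to the fact that both measures give total mass one on their respective unit cotangent bundles, which is built into the definition of $d\mu_L$ and $d\mu$.
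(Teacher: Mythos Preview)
Your approach mirrors the paper's almost exactly: both factor through $T_1^*\Sp^3$, identify $\sigma_*d\mu$ with the Liouville measure there via uniqueness of the $\mathrm{SO}(4)$-invariant probability measure on that homogeneous space, invoke that $\mathcal{M}$ is a symplectomorphism carrying $\Sigma(-1/2)$ onto $T_1^*\Sp^3_o$, and discard the north-pole fibre as $\mu$-null. The paper's treatment of the Moser step is a single ``Thus''; you supply more words but the same skeleton.

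There is, however, a genuine issue in the extra justification you add for that step. Your assertion that the Liouville measure on $\{H=c\}$ ``depends only on the symplectic form and the level set, not on the choice of Hamiltonian'' is false: the disintegration relation you quote visibly contains $dH$, and if two Hamiltonians share a level set with $dH_1=f\,dH_2$ along it, their Gelfand--Leray forms differ by $f^{-1}$, a discrepancy that survives normalization whenever $f$ is non-constant. In the case at hand one computes $|\xi|\circ\mathcal{M}=\tfrac{1}{2}|{\bf x}|(|{\bf p}|^2+1)$ and hence $d(|\xi|\circ\mathcal{M})=|{\bf x}|\,dG$ on $\Sigma(-1/2)$, so the time-reparametrization factor you anticipated is precisely the non-constant function $|{\bf x}|$. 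The bare symplectomorphism argument therefore does not match the two Liouville forms pointwise, and your proposed ``intrinsic'' resolution does not repair it. The paper glosses over the same point; what rescues the downstream application is that $\ell_3$ is constant along orbits and all Kepler orbits on $\Sigma(-1/2)$, like all great circles on $\Sp^3$, share the common period $2\pi$, so integrals of orbit-invariant functions against the two normalized measures coincide even though the measures themselves need not.
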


\begin{proof}
1. We have $d\mu$ defined as the normalized SO(4)-invariant measure on ${\mathcal A}$.
Note that the push-forward measure $\sigma_*(d\mu)$ is the normalized
SO(4)-invariant measure on $T_1^*\Sp^3$.
Let us consider the normalized Liouville measure $d\nu_L$
on $T_1^*\Sp^3$ obtained  from the symplectic form
$\sum_{k=1}^{4}d\xi_k\wedge{d}\omega_k \;| \; _{T^*\Sp^3}$
that endows  $T^*\Sp^3$  with a symplectic structure. It is given by the restriction  to $T^*\Sp^3$ of the
canonical symplectic form  $\sum_{k=1}^{4}d\xi_k\wedge{d}\omega_k$ for the ambient $T^*\R^4$ .  Here
$(\omega,\xi)$ are canonical coordinates for $T^*\R^4$ and the condition
$|\omega|=1$ specifies the 3-sphere $\Sp^3$.
Note that $d\nu_L$  is SO(4)-invariant.  Therefore the measure
$\sigma_*(d\mu)$ coincides with  $d\nu_L$.

\noindent
2. On the other hand, the Moser map ${\mathcal M}$ is a canonical transformation (symplectic diffeomorphism) from $T^* \R^3$ onto $T^* \Sp_o^3$ and ${\mathcal M}(\Sigma(-1/2))= T_1^* \Sp_o^3$. Thus the restriction to $T_1^* \Sp_o^3$ of the measure
$d\nu_L$ is actually the push-forward measure of $d\mu_L$ under the Moser map.  Since $d\nu_L=\sigma_*(d\mu)$ then the restriction $d\tilde{\mu}$ of $d\mu$ to the set $\tilde{{\mathcal A}}$
must coincide
with $(\sigma^{-1}\circ{\mathcal M})_*d\mu_L$.  From the equality
$\mu\left(\{\alpha \in{\mathcal A} \; | \; \Re \alpha =(0,0,0,1)\}\right)=0$, we have that for any  integrable function $f:\Sigma(-1/2)\rightarrow\R$:
\bea
\int_{\Sigma(-1/2)} f(x,p) d\mu_L(x,p) = \int_{\tilde{{\mathcal A}}} f\circ{\mathcal M}^{-1}\circ\sigma(\alpha)
d\tilde{\mu} = \int_{\mathcal A} \hat{f}(\alpha) d\mu   \label{liou-quadric}
\eea
where $ \hat{f}$ is an extension of $f\circ{\mathcal M}^{-1}\circ\sigma$  with domain $\tilde{{\mathcal A}}$ to  the set ${\mathcal A}$.
\end{proof}


\subsection{Conclusion of the proof of Theorems  \ref{thm:main1}    and
\ref{thm:main2}.}\label{subsec:proof-main1-2}

For any continuous function $\rho$ and $\epsilon > 0$, there is a polynomial $Q_\epsilon$ that uniformly approximates $\rho$ on $\left[ - \frac{B}{2}, \frac{B}{2} \right]$ as above.
It follows from Theorem \ref{trace-aprox-theorem} that for any continuous $\rho$, we have
\bea
\lefteqn{\left| \frac{1}{d_N}\sum_{j=1}^{d_N}\rho\left(\frac{\tilde\nu_{N,j}}{h^2\epsilon(h)}\right) - \frac{1}{d_N} {\rm Tr}
\rho (A_N) \right|}  \nonumber \\
 &\leq&  
 \frac{1}{d_N}\sum_{j=1}^{d_N}\left|\rho\left(\frac{\tilde\nu_{N,j}}{h^2\epsilon(h)}\right) - Q_\epsilon \left(\frac{\tilde\nu_{N,j}}{h^2\epsilon(h)}\right) \right| + \frac{1}{d_N} \left| {\rm Tr}
\rho  (A_N)  - {\rm Tr}
Q_\epsilon  (A_N) \right|\nonumber \\
  && + \frac{1}{d_N}\left| \sum_{j=1}^{d_N}Q_\epsilon \left(\frac{\tilde\nu_{N,j}}
{h^2\epsilon(h)}\right)- {\rm Tr}
Q_\epsilon  (A_N) \right|
 \leq 2\epsilon +O_{Q_\epsilon} (N^{- \sigma})  \label{desglose}
 \eea
 where we are using the notation $O_{Q_\epsilon} (N^{- \sigma})$ to emphasize that the  third term
 $ \frac{1}{d_N}\left| \sum_{j=1}^{d_N}Q_\epsilon \left(\frac{\tilde\nu_{N,j}}
{h^2\epsilon(h)}\right)- {\rm Tr}
Q_\epsilon  (A_N) \right| $ is $O(N^{- \sigma})$ but depends on the polynomial $Q_\epsilon$.

Next ,we keep $\epsilon>0$ fixed and take $N\rightarrow\infty$.  Then we use the fact that $\epsilon$ is arbitrary to obtain:
 \beq
 \lim_{N \rightarrow \infty}
\frac{1}{d_N}\sum_{j=1}^{d_N}\rho\left(\frac{\tilde\nu_{N,j}}{h^2\epsilon(h)}\right)  =
\lim_{N \rightarrow \infty} \frac{1}{d_N} {\rm Tr} \rho (A_N)
 \eeq

Combining this with Theorem \ref{szego-theorem}, we have now proved that for $\epsilon (h) = h^{q}$, $q>33/2$  and
$h=1/(N+1)$: 
\bea
\lim_{N \rightarrow \infty}
\frac{1}{d_N}\sum_{j=1}^{d_N}\rho\left(\frac{\tilde\nu_{N,j}}{h^2\epsilon(h)}\right)  &=&
 \int_{{\mathcal A}} \rho \left( -\frac{B}{2}
\ell_3 (\alpha)
  \right) ~ d\mu(\alpha) \nonumber \\
&=&\int_{\Sigma
(-1/2) } \rho \left(  -\frac{B}{2}\ell_3 ({\bf x, p})
  \right) ~ d\mu_L ({\bf x, p}).  \phantom{xx}
\eea
Since the eigenvalue shifts  $E_{N,j} (1/(N+1), B) - E_N (1/(N+1))=\frac{\tilde\nu_{N,j}}{h^2}$
(see \eqref{eq:scale2}), this concludes the proofs of Theorem \ref{thm:main1} and Theorem \ref{thm:main2} .


\section{Proof of   Theorem \ref{thm:main3}.}\label{prueba-teorema3}

In order to prove Theorem \ref{thm:main3}, we use an explicit expression for the measure  $d\mu$ obtained in \cite{villegas} in terms of coordinates related to the Kepler problem that we specify below.  Then, by doing the integral indicated in equation (\ref{eq:szego2}), we obtain
equation (\ref{eq:szego3}).


We know that the angular momentum vector $\ell$ and the Runge-Lenz vector $a = {\bf p} \times \ell - \frac{{\bf x}}{|{\bf x}|}$
are integrals of motion for the Kepler problem and satisfy the following relations on the energy surface $\Sigma (-1/2)$:
\bea
\ell^2+a^2=1,   \\
\ell \cdot {a}=0  .
\eea
So we write:
\bea
\ell&=&\cos(\psi)\left(\sin(\theta)\cos(\phi),\sin(\theta)\sin(\phi),\cos(\theta)\right) ,
\\
a&=&\sin(\psi)\left(\cos(\gamma)\hat{u} + \sin(\gamma)\hat{v}\right),
\eea
where $\psi\in(0,\pi/2)$, and $(\theta,\phi)$ are usual spherical coordinates for
the two-sphere $\Sp^2$,  with $\phi\in[0,2\pi]$ and $\theta\in[0,\pi]$.  The unit vectors
$\hat{u}=\left(\sin(\phi),-\cos(\phi),0\right)$ and $\hat{v}=\left(\cos(\theta)\cos(\phi),\cos(\theta)\sin(\phi),-\sin(\theta)\right)$ generate the orthogonal plane to $\ell$.  The parameter
$\gamma\in[0,2\pi]$ takes us around that plane. Note that to cover all possible cases for $|\ell|$ and $|a|$
we should take $\psi\in[0,\pi/2]$, but we only miss
a set of measure zero with the choice $\psi\in(0,\pi/2)$. In this way, the vectors $\ell$ and $a$ are non-zero.
The set $(\psi,\theta,\phi,\gamma)$ gives a
parametrization of the set of
Kepler orbits on the energy surface $\Sigma (-1/2)$. Then, via the the Moser map $\mathcal M$,
we obtain a parametrization for the space of oriented geodesics of the 3-sphere $\Sp^3$  as well.

A parametrization for $\Sigma (-1/2)$, and, consequently, for ${\mathcal A}$ through the Moser map ${\mathcal M}$ and the map $\sigma^{-1}$ (see \eqref{eq:sigma1}), can be obtained by
introducing the parameter $\beta\in[0,2\pi]$ which takes us along the Kepler orbit determined by $\ell$ and $a$.  Namely,  consider $\beta$ such that
\bea
\frac{{\bf x}}{|{\bf x}|} = \cos(\beta)\frac{a}{|a|} + \sin(\beta)\frac{\ell\times{a}}{|\ell||a|}.
\label{ecuac-pos-beta}
\eea
Note that
   $a/|a|$ and $\ell\times{a}/(|\ell||a|)$ give an orthonormal basis for the plane of the Kepler  orbit in configuration space.

   Moreover, it can be shown that, in momentum space $\R^3$, the momentum vector ${\bf p}$ is in a circle with center  $\frac{\ell\times{a}}{|\ell|^2}$  and radius
   $1/|\ell|$. Namely, the coordinate ${\bf p}$
   satisfies the following equation
 \bea
 {\bf p} = \frac{\ell \times{a}}{|\ell|^2} + \frac{\ell\times{{\bf x}}}{|\ell|^2|{\bf x}|}. \label{ecua-mom}
 \eea
Thus from equations (\ref{ecuac-pos-beta}) and (\ref{ecua-mom}) we have
   \bea
  {\bf  p} = -\sin(\beta)\frac{a}{|a||\ell|} + \left(|a| + \cos(\beta)\right).
   \frac{\ell\times{a}}{|a||\ell|^2} \label{ecuac-mom-beta}
   \eea
   From equations (\ref{ecuac-pos-beta}),  (\ref{ecuac-mom-beta}),
  the relation $|{\bf x}|=2/(|{\bf p}|^2+1)$,   the Moser map (see equations (\ref{stereo}) and (\ref{stereoxi})) and the map $\sigma^{-1}$,
 we obtain a parametrization for  ${\mathcal A}$ based on  the five real parameters
 $W=(\psi,\theta,\phi,\gamma,\beta)$.  Namely, we can obtain $(\omega(W),\xi(W)\in{T}^*\Sp^3_0$ (see equations  (4.35), (4.36) and (4.39) in reference \cite{villegas}) such that $\alpha=\omega(W)-\imath{\xi}(W)$, with $\alpha\in\tilde{\mathcal A}$.

  In addition, we can give a parametrization for the elements of the matrix representation by $4\times{4}$ matrices of the group $SO(4)$.    For $g\in{SO(4)}$,
we construct the matrix
\beq\label{para-so4}
{\mathcal R(g)}=\left(\omega(W),\xi(W),
\cos(\delta)\eta(W)+\sin(\delta)\chi(W),-\sin(\delta)\eta(W)+\cos(\delta)\chi(W)\right),
\eeq
from column vectors given by the vectors $\eta(W)$ and $\chi(W)$ that are two orthonormal vectors generating the   plane orthogonal to the vectors $\omega(W)$ and $\xi(W)$.  The parameter $\delta\in[0,2\pi]$ takes us  around the plane generated by  $\{ \eta(W),\chi(W) \}$.

The normalized Haar measure $d\mu_H$ for the group SO(4) can be computed from the parametrization indicated in equation \ref{para-so4}. After a long computation (see reference \cite{villegas} ), we obtain:
\beq\label{haar}
d\mu_H = \frac{1}{(2\pi)^4}\frac{\cos^2(\psi)\sin(\psi)\sin(\theta)}{1+\sin(\psi)\cos(\beta)}d\psi{d}
\theta{d}\phi{d}\gamma{d}\beta{d}\delta.
\eeq
Integrating the measure $d\mu_H $ with respect to $\delta$, we obtain a normalized SO(4) invariant measure $d \mu$ parametrized by $W$:
\bea
d\mu = \frac{1}{(2\pi)^3}\frac{\cos^2(\psi)\sin(\psi)\sin(\theta)}{1+\sin(\psi)\cos(\beta)}d\psi{d}\theta{d}
\phi{d}\gamma{d}\beta.
\eea
Moreover, integrating $d\mu$ with respect to $\beta$,  we obtain the
normalized SO(4)-invariant measure $d\mu_\Gamma$ on the space of oriented geodesics:
\beq\label{eq:mu-gamma1}
d\mu_\Gamma = \frac{1}{(2\pi)^2}\cos(\psi)\sin(\psi)\sin(\theta) d\psi{d}\theta{d}\phi{d}\gamma.
\eeq

Now we want to do the integral appearing on the right hand side of equation
(\ref{eq-main-thm2}). Since the function
$\rho \left( -\frac{B}{2} \ell_3 (\alpha)\right)=\rho\left(-\frac{B}{2}\cos(\psi)\cos(\theta)\right)$  depends only on the variables $\psi$ and $\theta$, we integrate out the variables $\phi$ and $\gamma$ in $d \mu_\Gamma$
in \eqref{eq:mu-gamma1} to obtain the measure $d {\tilde \mu}_\Gamma=\cos(\psi)\sin(\psi)\sin(\theta) d\psi{d}\theta$. Making the change of variables $u=\cos(\psi)\cos(\theta)$, $v=\cos(\psi)\sin(\theta)$, with $(u,v)$ in the
half-disk $u^2+v^2\leq1$, $v\geq{0}$,  we have
\bea
\int_{{\mathcal A}} \rho \left( -\frac{B}{2} \ell_3 (\alpha)
  \right)~ d\mu(\alpha)  &=& \int_{\psi=0}^{\pi/2}\int_{\theta=0}^{\theta=\pi}\rho\left(-\frac{B}{2}\cos(\psi)\cos(\theta)\right)d{\tilde \mu}_\Gamma (\psi, \theta), \nonumber \\
 &=& \int_{u=-1}^{u=1}\rho\left( -\frac{B}{2}u\right)\int_{v=0}^{v=\sqrt{1-u^2}}
 \frac{v}{\sqrt{u^2+v^2}} \;\; dv du,  \nonumber \\
  &=& \int_{-1}^{1}\rho\left( -\frac{B}{2}u\right)\left[1-|u|\;\right] \; du .
\eea
This completes the proof of Theorem \ref{thm:main3}.



\section{Alternate description of the limit measure in Theorem \ref{thm:main3}.}\label{sec:eigen-approx}

The proof of Theorem \ref{thm:main3} presented in section \ref{prueba-teorema3} is based on the geometric description of the Kepler orbits on $\Sigma (-1/2)$ afforded by the Moser map. In this section, we present an alternate proof
based on a more detailed analysis of the eigenvalue clusters $\mathcal{C}_N$. Although this proof is more direct and shorter, it does not reveal the full geometric content of Theorems \ref{thm:main1} and \ref{thm:main2}.

\subsection{Eigenvalue approximation.}\label{eigen-approx}

From equation (\ref{est-normalized}), we know that, for $N$ sufficiently large,  the $d_N=(N+1)^2$ eigenvalues of $S_V(\lambda)$ inside the cluster ${\mathcal C}_N$ around $E_N$ are actually in an interval
of size $O(h^2\epsilon(h))$. In this section, we prove that if we take $\sigma>1$ (i.e.\ $q>19$),
then those eigenvalues are in sub-clusters around the eigenvalues of the operator
$ \Pi_N\left(- \frac{1}{2}\Delta - \frac{1}{|x|}  - \frac{\lambda(h,B)}{2} L_3\right)\Pi_N$ inside the cluster ${\mathcal C}_N$. That is, the eigenvalues cluster around the numbers $E_N - \frac{\lambda(h,B)}{2}m=E_N-\frac{B}{2}h^3\epsilon(h)m$, with $m\in\Z$
and  $|m|\leq{N}$.

\begin{proposition}\label{subclusters}
Assume $\sigma>1$.  Then, for $N$ sufficiently large,  the spectrum of $S_V(\lambda)$ inside the cluster ${\mathcal C}_N$ consists of a union of sub-clusters ${\mathcal C}_{N,m}$ of eigenvalues around $E_N-\frac{B}{2}h^3\epsilon(h)m$, with $m\in\Z$ and  $|m|\leq{N}$. Moreover, each sub-cluster ${\mathcal C}_{N,m}$ contains
  $N+1-|m|$ eigenvalues  of $S_V(\lambda)$ and has size   $h^{\sigma+2}\epsilon(h)$ .
\end{proposition}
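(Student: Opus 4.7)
The plan is to upgrade estimate \eqref{est-normalized}---which in operator norm already states that the scaled cluster operator differs from $A_N=\Pi_N\!\left(-\tfrac{B}{2}hL_3\right)\Pi_N$ by $O(N^{-\sigma})$---into an eigenvalue comparison on a single finite-dimensional subspace via a Kato unitary intertwining, and then to read off the sub-clusters by Weyl's inequality together with a Riesz projector dimension count. Throughout, write $B_N:=\frac{P_N(S_V(\lambda)-E_N)P_N}{h^2\epsilon(h)}$, so that \eqref{est-normalized} reads $B_N=A_N+O(N^{-\sigma})$ in operator norm on $L^2(\mathbb{R}^3)$.

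The spectral structure of $A_N$ on the $d_N$-dimensional space $\mathrm{Ran}(\Pi_N)$ is explicit: $\Pi_NL_3\Pi_N$ has eigenvalues $m\in\{-N,\ldots,N\}$, the multiplicity of $m$ being the number of admissible orbital quantum numbers $\ell\in\{|m|,|m|+1,\ldots,N\}$, namely $N+1-|m|$. Hence $A_N$ has eigenvalues $-\tfrac{B}{2}hm$ with multiplicity $N+1-|m|$, and adjacent eigenvalues are separated by exactly $\tfrac{B}{2}h=\tfrac{B}{2(N+1)}$. Also $\|A_N\|\le B/2$, so $\|B_N\|=O(1)$.

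Since $\|P_N-\Pi_N\|=O(N^{-\sigma})<1$ for $N$ large (Theorem \ref{thm:stability1}), Kato's intertwining construction (\cite{kato}, chapter I, section 4.6) produces a unitary operator $U_N$ on $L^2(\mathbb{R}^3)$ satisfying $U_NP_NU_N^*=\Pi_N$ and $\|U_N-I\|=O(N^{-\sigma})$. The operator $\widetilde B_N:=U_NB_NU_N^*$ is isospectral to $B_N|_{\mathrm{Ran}(P_N)}$, is supported on $\mathrm{Ran}(\Pi_N)$, and the identity
\[
\widetilde B_N-B_N=U_NB_N(U_N^*-I)+(U_N-I)B_N
\]
gives $\|\widetilde B_N-B_N\|\le 2\|U_N-I\|\,\|B_N\|=O(N^{-\sigma})$, whence $\|\widetilde B_N-A_N\|=O(N^{-\sigma})$ as self-adjoint operators on $\mathrm{Ran}(\Pi_N)$. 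The assumption $\sigma>1$ makes this perturbation strictly smaller than one quarter of the spectral gap of $A_N$ for all $N$ large, so the Riesz projectors
\[
Q_{N,m}:=-\frac{1}{2\pi i}\oint_{|z+\tfrac{B}{2}hm|=\tfrac{B}{8(N+1)}}(\widetilde B_N-z)^{-1}\,dz,\qquad m=-N,\ldots,N,
\]
are well defined; their norm difference with the corresponding Riesz projector of $A_N$ is $O(N^{1-\sigma})\to 0$ by the usual second resolvent estimate, hence $\mathrm{rank}\,Q_{N,m}=N+1-|m|$ by integer-valued dimension. Equivalently, Weyl's inequality for Hermitian matrices places every eigenvalue of $\widetilde B_N$ in a disk of radius $O(N^{-\sigma})$ around a unique $-\tfrac{B}{2}hm$, with no inter-cluster mixing.

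Rescaling by $h^2\epsilon(h)$ yields the proposition: the eigenvalue shifts $\tilde\nu_{N,j}=E_{N,j}-E_N$ inside $\mathcal{C}_N$ split into the sub-clusters $\mathcal{C}_{N,m}$ of diameter $O(h^{\sigma+2}\epsilon(h))$ centered at $-\tfrac{B}{2}h^3\epsilon(h)m$, each containing exactly $N+1-|m|$ eigenvalues. The main technical subtlety is the intertwining step, which is needed because $B_N|_{\mathrm{Ran}(P_N)}$ and $A_N|_{\mathrm{Ran}(\Pi_N)}$ a priori act on distinct $d_N$-dimensional subspaces and cannot be compared eigenvalue-by-eigenvalue without first moving them onto a common space; once this is arranged, the remainder is routine finite-dimensional spectral perturbation theory.
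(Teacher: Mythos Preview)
Your proof is correct and follows the same overall strategy as the paper: use the operator-norm identity \eqref{est-normalized}, exploit that the spectral gap $\tfrac{B}{2}h=O(N^{-1})$ of $A_N$ dominates the $O(N^{-\sigma})$ perturbation when $\sigma>1$, and compare Riesz projectors around small circles of radius $\tfrac{B}{8(N+1)}$ to obtain the multiplicities $N+1-|m|$.

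The one genuine difference is your Kato intertwining step. You first conjugate $B_N$ by the unitary $U_N$ so that $\widetilde B_N$ and $A_N$ live on the common finite-dimensional space $\mathrm{Ran}\,\Pi_N$, and only then apply Weyl/Riesz perturbation theory there. The paper instead works directly on $L^2(\mathbb{R}^3)$: since \eqref{est-normalized} is an identity of bounded operators on the full Hilbert space, the resolvents $(B_N-z)^{-1}$ and $(A_N-z)^{-1}$ and hence the projectors $P_{N,m},\ \Pi_{N,m}$ can be compared there without any intertwining, and the second-resolvent estimate yields $\|P_{N,m}-\Pi_{N,m}\|=O(N^{1-\sigma})$. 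So your remark that the intertwining ``is needed because $B_N|_{\mathrm{Ran}P_N}$ and $A_N|_{\mathrm{Ran}\Pi_N}$ act on distinct subspaces'' is a little too strong: the paper's route shows it is not strictly necessary. What your version buys is a cleaner treatment of the sub-cluster at $m=0$: on $L^2(\mathbb{R}^3)$ the circle $\Gamma_{N,0}$ encloses the infinite-dimensional kernel of both $B_N$ and $A_N$, so the rank equality there has to be recovered by a separate counting argument ($d_N$ minus the already-established counts for $m\neq 0$); on $\mathrm{Ran}\,\Pi_N$ this issue never arises.
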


\begin{proof}
1. The eigenvalues of the operator $\Pi_N\left(-\frac{B}{2}hL_3\right)\Pi_N$ are
$-\frac{B}{2}\frac{m}{N+1}$, $m\in\Z$ and $|m|\leq{N}$, with multiplicity
$N+1-|m|$.
They are uniformly distributed with a distance
$\frac{B}{2}\frac{1}{N+1}=O(N^{-1})$ between consecutive eigenvalues. Since the error term in equation (\ref{est-normalized}) is a bounded operator whose norm is
$O(N^{-\sigma})$, $\sigma>1$, one can show that  if
the distance between $ w\in\C$ and the spectrum of $\Pi_N\left(-\frac{B}{2}hL_3\right)\Pi_N$ is greater than $O(N^{-\sigma})$, then $w$ is in the resolvent set of $\;\frac{P_N(S_V(\lambda)-E_N)P_N}{h^2\epsilon(h)}\;$.  Namely,
the spectrum of $\;\frac{P_N(S_V(\lambda)-E_N)P_N}{h^2\epsilon(h)}\;$ is contained in the union of the disjoint closed intervals $[-\frac{B}{2}\frac{m}{N+1}-O(N^{-\sigma}),-\frac{B}{2}\frac{m}{N+1}+O(N^{-\sigma})]$, $|m|\leq{N}$.

\noindent
2. Let ${\Gamma}_{N,m}$ be the circle with center $-\frac{B}{2}\frac{m}{N+1}$
and radius $\frac{B}{8}\frac{1}{N+1}$, $|m|\leq{N}$. Let us denote by $\Pi_{N,m}$ and $P_{N,m}$ the Riez projectors asociated to the spectrum of
$\Pi_N\left(-\frac{B}{2}hL_3\right)\Pi_N$ and $\frac{P_N(S_V(\lambda)-E_N)P_N}{h^2\epsilon(h)}$ inside the circle  ${\Gamma}_{N,m}$, respectively.
Then we have
\bea
\|P_{N,m}-\Pi_{N,m}\| =
\left\|\frac{1}{2\pi\imath}\int_{{\Gamma}_{N,m}}\; \left[\left(\frac{P_N(S_V(\lambda)-E_N)P_N}{h^2\epsilon(h)}-\z\right)^{-1}
\right.\right.\nonumber \\
\left.\left. -
\left(\Pi_N\left(-\frac{B}{2}hL_3\right)\Pi_N-\z\right)^{-1} \right] d \z \right\|.
\label{difprojm}
\eea
We note that both $\left\|  \left(\frac{P_N(S_V(\lambda)-E_N)P_N}{h^2\epsilon(h)}-\z\right)^{-1} \right\|$ and $\left\| \left(\Pi_N\left(-\frac{B}{2}hL_3\right)\Pi_N-\z\right)^{-1}  \right\|$ are $O(N)$. From equations
(\ref{est-normalized}) and (\ref{difprojm}), we obtain
$\|P_{N,m}-\Pi_{N,m}\| = O(N^{1-\sigma})$, which implies that,
for $N$ sufficiently large,
$\|P_{N,m}-\Pi_{N,m}\|<1$. Hence  the dimension of the range of the projectors  $P_{N,m}$ and
$\Pi_{N,m}$ is the same, namely equal to $N+1-|m|$.
\end{proof}

\noindent
Proposition \ref{subclusters} implies the following eigenvalue approximation.

\begin{proposition}\label{prop:eigen-approx}
 Assume $\sigma>1$.  The  eigenvalues  of $S_V(\lambda)$ inside  the cluster ${\mathcal C}_N$ around $E_N$ can be written in the following way:  For $N$ sufficiently large and
$m=-N,\ldots,N$,
we have
\bea
E_{N,m,k} = E_N  - \frac{B}{2}\frac{m}{N+1}h^2\epsilon(h) + G(N,m,k)
\eea
where, given $m$,  the index $k=1,\ldots,N+1-|m|$ and
the error term $G(N,m,k)=O(N^{-\sigma})h^2\epsilon(h)$.
\end{proposition}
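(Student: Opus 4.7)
The plan is to deduce Proposition \ref{prop:eigen-approx} directly from Proposition \ref{subclusters} by labeling the eigenvalues in each sub-cluster and carrying out the trivial rescaling.

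First I would recall the setup: Proposition \ref{subclusters} shows that, under the hypothesis $\sigma>1$, the spectrum of $S_V(\lambda)$ inside the cluster $\mathcal{C}_N$ decomposes into a disjoint union of sub-clusters $\mathcal{C}_{N,m}$, indexed by $m\in\Z$ with $|m|\leq N$, where each $\mathcal{C}_{N,m}$ is centered at $E_N-\frac{B}{2}h^3\epsilon(h)m$, contains exactly $N+1-|m|$ eigenvalues (counted with multiplicity), and has diameter $O(h^{\sigma+2}\epsilon(h))$. The disjointness is guaranteed by the fact that the gap between consecutive centers is $\frac{B}{2}h^3\epsilon(h)=\frac{B}{2}\frac{1}{N+1}h^2\epsilon(h)$, whereas the diameter of each sub-cluster, being $O(N^{-\sigma}h^2\epsilon(h))$, is much smaller than this gap as soon as $\sigma>1$.

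Given this, for fixed $m$ I would label the eigenvalues of $S_V(\lambda)$ lying in $\mathcal{C}_{N,m}$ as $E_{N,m,k}$ with $k=1,\dots,N+1-|m|$, and define
\begin{equation*}
G(N,m,k):=E_{N,m,k}-\left(E_N-\tfrac{B}{2}h^3\epsilon(h)\,m\right).
\end{equation*}
Using $h=1/(N+1)$, we have $h^3\epsilon(h)\,m=\frac{m}{N+1}h^2\epsilon(h)$, so this gives the desired expression
\begin{equation*}
E_{N,m,k}=E_N-\tfrac{B}{2}\tfrac{m}{N+1}h^2\epsilon(h)+G(N,m,k).
\end{equation*}
By the size estimate of $\mathcal{C}_{N,m}$ in Proposition \ref{subclusters}, $|G(N,m,k)|=O(h^{\sigma+2}\epsilon(h))=O(N^{-\sigma})h^2\epsilon(h)$, as claimed.

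Finally, to be sure no eigenvalue of $\mathcal{C}_N$ is missed, I would check the multiplicity count: summing the sub-cluster multiplicities gives $\sum_{m=-N}^{N}(N+1-|m|)=(N+1)+2\sum_{m=1}^{N}(N+1-m)=(N+1)^2=d_N$, which matches the total multiplicity of $\mathcal{C}_N$ established in Theorem \ref{thm:stability1}. Hence the labels $(m,k)$ with $|m|\leq N$ and $1\leq k\leq N+1-|m|$ exhaust the cluster. There is no genuine obstacle here: Proposition \ref{subclusters} already contains all the spectral information, and this proposition is essentially a restatement that packages the sub-cluster decomposition as an explicit eigenvalue expansion with an $O(N^{-\sigma})h^2\epsilon(h)$ remainder.
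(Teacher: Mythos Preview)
Your proposal is correct and follows exactly the route the paper takes: in the paper, Proposition~\ref{prop:eigen-approx} is stated as an immediate consequence of Proposition~\ref{subclusters} with no further argument, and you have simply spelled out the obvious labeling, rescaling $h^3\epsilon(h)m=\tfrac{m}{N+1}h^2\epsilon(h)$, and multiplicity count that make this implication explicit.
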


\begin{remark} A similar expansion was obtained by Karasev and Novikova \cite{karasev-novikova} using different methods.
\end{remark}

\subsection{Alternate proof of Theorem \ref{thm:main3}.}

Using Proposition  \ref{prop:eigen-approx}, we can prove Theorem \ref{thm:main3} when
$\rho$ is a polynomial on a fixed compact interval. Then Theorem \ref{thm:main3} can  be shown for $\rho$ continuous with a uniform approximation argument.

The averages appearing in the left hand side of equation
(\ref{eq-main-thm3}) can be written, for $N$ sufficiently large,   as:
\bea
&&\frac{1}{d_N}
\sum_{j=1}^{d_N} \rho \left( \frac{ E_{N,j} (1/(N+1), B) - E_N (1/(N+1)) }{ \epsilon
(1/(N+1)) } \right) \phantom{xxxxxxxxxxxxxxxxx} \nonumber \\
&&\phantom{xxxx}= \frac{1}{d_N}\sum_{m =-N}^{m=N}\sum_{k=1}^{N+1-|m|}\rho\left( - \frac{B}{2}\frac{m}{N+1} + \frac{G(N,m,k)}{h^2\epsilon(h)}\right) \nonumber \\
&&\phantom{xxxx}=
\frac{1}{d_N}\sum_{m =-N}^{m=N}\left(N+1-|m|\right)\rho\left(- \frac{B}{2}\frac{m}{N+1}\right) + O(N^{-\sigma}) \nonumber \\
&&\phantom{xxxx}= \sum_{m =-(N+1)}^{m=N}\left(1-\frac{|m|}{N+1}\right)
\rho\left(- \frac{B}{2}\frac{m}{N+1}\right)\frac{1}{N+1} + O(N^{-\sigma}).
 \label{riemann-sum}
\eea
To obtain the second equality, we have used the mean value theorem,
that $\rho^\prime$ is bounded on a fixed compact interval, and the fact that
$\sum_{m =-N}^{m=N}\left(N+1-|m|\right)=d_N$.   The first term in equation (\ref{riemann-sum}) can be thought of as a Riemann sum associated to $\int_{[-1,1]}\rho(-\frac{B}{2}u)\left(1-|u|\right)du$ with the partition $\left\{-1,\frac{-N}{N+1}, \ldots, \frac{N}{N+1},1\right\}$ of the interval $[-1,1]$.  Taking the limit $N\rightarrow\infty$ we conclude the proof of Theorem \ref{thm:main3}.


\section{Appendix 1: The Kepler problem and the Moser map.}\label{sec:moser-map}

We review the Moser map as a regularization of the Kepler problem.
The Kepler problem in $\mathbb{R}^3-\{0\}$ is defined as the
Hamiltonian $G:T^*(\mathbb{R}^3-\{0\})\rightarrow\R$
given by
\beq
G({\vecx, \vecp})= \frac{|{\bf p}|^2}{2}-\frac{1}{|{\bf x}|}.
\eeq

The symmetries of the Kepler problem are given by conservation of
the angular momentum vector $\vecell = \vecx \times {\vecp}$ and the Runge-Lenz
vector $\veca =\vecp\times{\vecell}-\frac{\vecx}{|\vecx|}$. For negative energy, they
imply that the orbits in configuration space are either ellipses
($|\vecell|\neq{0}$) or contained in segments of straight lines with the
origin $\vecx=0$ as an end ($\vecell=0$, collision orbits).  The symmetries
$\vecell$ and $\veca$ also imply that the orbits in momentum space are
either circles ($|\vecell|\neq{0}$) or half lines passing through the origin ($\vecell=0$).

We now restrict ourselves to the energy surface $\Sigma(-1/2)=
\{({\bf x,p})\in{T}^*(\R^3-\{0\})\; |
\; G({\bf x,p}) = - 1/2 \}.$
We first note that the vectors $\vecell$ and $\veca$ satisfy the relation
$|\vecell|^2+|\veca|^2=1$ on $\Sigma(-1/2)$ which, in particular, implies that $|\ell_3|\leq{1}$  on $\Sigma(-1/2)$ with $\vecell=(\ell_1,\ell_2,\ell_3)$.

Following J.\ Moser \cite{Moser}, we consider the stereographic projection $S:{\mathbb R}^3\rightarrow{\Sp}^3_o$, where  $\Sp^3_o$ denotes the 3-sphere
with the north pole removed.   The map $S$ is given by the assignment  ${\bf p} =(p_1,p_2,p_3) \rightarrow \omega=(\omega_1,\omega_2,\omega_3,\omega_4)$ with
\bea\label{stereo}
\omega_i &=& \left(\frac{2}{|{\bf p}|^2+1}\right)p_i, \;\;\;  i=1,2,3,   \;\;\; |{\bf p}|^2=p_1^2+p_2^2+p_3^2   \nonumber \\
\omega_4 &=& \frac{|{\bf p}|^2-1}{|{\bf p}|^2+1}.
\eea

 The stereographic projection $S$ maps the circles in momentum space mentioned above to great
circles on $\Sp^3$ that do not contain the north pole. The image of a
half line passing through the origin in momentum space under the stereographic projection is
contained in a great circle passing through the north pole.

Moser \cite{Moser} extended the stereographic projection  $S$  to a
map ${\mathcal M}:T^*({\mathbb R}^3)\rightarrow{T}^*({\Sp}^3_o)$. The Moser map ${\mathcal M}$
maps
the point $({\bf x,p})$ to $(\omega,\vecxi)$  under the requirement
that 
${\mathcal{M}}^*(\vecxi\cdot {d}\omega)= {\bf y} \cdot  {d}{\bf p}$ with ${\bf y}=-{\bf x}$. Thus the Moser map is a
canonical transformation ${\mathcal M}^*({d}\vecxi\wedge{d}\omega)={d} {\bf p} \wedge {d}{\bf x}$ given explicitly
by (\ref{stereo}) and the following equations:
\begin{eqnarray}
\xi_i&=& \frac{|{\bf p}|^2+1}{2}y_i - ({\bf y} \cdot {\bf p})p_i, \;\;\;
i=1,2,3 \nonumber \\
\xi_{4}&=& {\bf y} \cdot {\bf p}. \label{stereoxi}
\end{eqnarray}

The inverse map ${\mathcal M}^{-1}$ is determined by the equations
\begin{eqnarray}
p_i&=&\frac{\omega_i}{1-\omega_{4}},  \;\;\;
i=1,2,3 \nonumber \\
y_i&=& (1-\omega_{4})\xi_i+ \xi_{4}\omega_i.   \;\;\; i=1,2,3
\label{invmoser}
\end{eqnarray}


The Moser map ${\mathcal M}$ transforms the Hamiltonian flow of
the Kepler problem on the energy surface $\Sigma(-1/2)$ into the geodesic flow on
the 3-sphere under the time re-parametrization $s\mapsto{t}$ given
by the equation
\begin{eqnarray}
\frac{d}{ds}=|{\bf x}|\frac{d}{dt} \label{trep}.
\end{eqnarray}
Considering the geodesic flow on $T^*(\Sp^3)$ (i.e.\ including the
north pole) corresponds to extending the
collision orbits by making the convention that the particle is
reflected back to its trajectory after a collision.  Thus all of
the orbits on the energy surface $\Sigma(-1/2)$ are periodic
orbits with period $2\pi$. See reference \cite{Moser} for details.




\section{Appendix 2: Coherent states for the hydrogen atom.}\label{sec:app2-coherentstates1}

We review the construction and properties of the coherent states
that form an over-complete set in the eigenspace $\mathcal{E}_N$
of the hydrogen atom Hamiltonian $S_V = -\frac{1}{2}\Delta - \frac{1}{|x|}$
corresponding to the eigenvalue
$E_N = - 1/ (2  (N+1)^2)$,   $N\in\N^*$.

We begin by defining    the null quadric ${\mathbb Q}^n$, with $n\geq{1}$ a natural number, as the set
\bea
{\mathbb Q}^n &=& \left\{ \alpha=(\alpha_1\ldots,\alpha_{n+1})\in \C^{n+1}\;\; | \;\;  \alpha_1^2+\ldots+\alpha_{n+1}^2=0\right\} ,\nonumber \\
&=& \left\{ \alpha\in\C^{n+1} \;\; | \;\;  |\Re \alpha|=|\Im \alpha| ,   \;\;  \Re\alpha
 \cdot \Im \alpha =0 \right\}.
\eea
The null quadric ${\mathbb Q}^n$ with the origin removed can be identified with the cotangent bundle of the n-sphere $T^*\Sp^n$ with the zero section removed through the following map :
\bea\label{eq:sigma1}
\sigma&:& {\mathbb Q}^n-\{0\} \rightarrow T^*\Sp^n-\{0\} \nonumber \\
\sigma(\alpha) &=& \left(-\frac{\Re\alpha}{|\Re\alpha|},\Im\alpha\right).
\eea

Let $\mathcal{A}$ be the $2n-1$
real-dimensional subset of ${\mathbb Q}^n$ defined by
\beq\label{eq:defnA1}
\mathcal{A} = \{ \alpha\in{\C}^{n+1} ~|~  ~| \Re
\alpha | = | \Im \alpha | = 1, ~ \Re \alpha \cdot  \Im \alpha = 0 \} \subset {\mathbb Q}^n .
\eeq
This provides a parametrization of the
unit cotangent  bundle $T_1^* \Sp^n$ of the n-sphere  $\Sp^n$ under the map $\alpha\rightarrow (\Re \alpha , - \Im \alpha )$. There is a
$SO(n+1)$-rotationally invariant probability measure on
$\mathcal{A}$ that we denote by $\mu$.

Coherent states on $\Sp^n$
have the form
\beq\label{eq:cost-defn1}
\Phi_{\alpha,N}(\omega) = a(N) ( \alpha \cdot \omega)^N, ~~\omega \in
\Sp^n, ~~ \alpha \in \mathcal{A}, ~~N\in\N^*. \eeq
The coefficient $a (N) \sim N^{(n-1)/2}$ is fixed by the
requirement that the $L^2 ( \Sp^n)$-norm of $\Phi_{\alpha, N}$ is
equal to one, see \cite[(2.11)]{thomas-villegas1}. The states $\Phi_{\alpha,N}$
are spherical harmonics i.e.  they are eigenstates of the
normalized spherical Laplacian $\Delta_{\Sp^n}$ (the usual positive spherical Laplacian on the sphere plus the constant $(n-1)^2/4$) with eigenvalue
$(N+\frac{n-1}{2})^2$ . The entire family $\{ \Phi_{\alpha, N}(\omega) ~|~
\alpha \in \mathcal{A} \}$ is over complete and spans the
eigenspace ${\mathcal L}_N$ of $ \Delta_{\Sp^3}$ with eigenvalue $(N+\frac{n-1}{2})^2$ in the following sense:
  Let $P_N^S$ be the projector from $L^2(S^n)$ onto ${\mathcal L}_N$.   Then
  for all $\Psi\in{L}^2(S^n)$
\beq
P_N^S\Psi = d_n(N) \int_{\mathcal{A}} ~ \langle\Phi_{\alpha, N},\Psi\rangle_{L^2(S^n)}  \;\; \Phi_{\alpha, N}
 ~d \mu(\alpha) . \label{sph-res}
\eeq
where $d_n(N)$ denotes the dimension of ${\mathcal L}_N$. The following notation for the projector $P_N^S$ in equation
\eqref{sph-res} is also used:
\beq
P_N^S =  d_n(N) \int_{\mathcal{A}} ~ |\Phi_{\alpha, N}\rangle\langle\Phi_{\alpha, N}|  ~d \mu(\alpha) .
\eeq

We note that the state $\Phi_{\alpha, N}$  concentrates on the great circle $\{
\omega \in \Sp^n ~|~ | \alpha \cdot \omega | = 1 \}$ generated by $\alpha$ as $N \rightarrow \infty$.

The normalization
factor $a(N)$ can be estimated by the stationary phase method.  Here we show an estimate of the error $O(N^{-1})$ which improves the estimate $O(N^{-1/2})$ obtained in \cite[(2.11)]{thomas-villegas1}.

\begin{proposition}\label{estnormacs}
For $N$ large, we have \beq a^2(N) =
\left(\frac{N}{\pi}\right)^{\frac{n-1}{2}}\left[\frac{1}{2\pi}+O(N^{-1})\right].
\eeq
\end{proposition}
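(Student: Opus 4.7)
My plan is to compute the normalization integral essentially in closed form and then extract the claimed asymptotics via Stirling's formula; the integral is simple enough that stationary phase (alluded to in the paper's comment) is equivalent to exact evaluation.

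First, the defining relation $\|\Phi_{\alpha,N}\|_{L^2(\Sp^n)}=1$ gives
\[
a^{-2}(N)=\int_{\Sp^n}|\alpha\cdot\omega|^{2N}\,d\Omega(\omega).
\]
By the $SO(n+1)$-invariance of the surface measure and the characterization of $\mathcal{A}$ in \eqref{eq:defnA1}, I can rotate $\alpha$ to $\alpha=e_1+ie_2$, so that $|\alpha\cdot\omega|^2=\omega_1^2+\omega_2^2$. This reduces the calculation to one integral that does not depend on $\alpha$, which is already the content of $a(N)$ being independent of $\alpha$.

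Second, I parametrize $\Sp^n$ by writing $\omega=(r\cos\theta,\,r\sin\theta,\,\sqrt{1-r^2}\,\eta)$ with $r\in[0,1]$, $\theta\in[0,2\pi]$, $\eta\in\Sp^{n-2}$, for which the surface measure factors as $d\Omega=r(1-r^2)^{(n-3)/2}\,dr\,d\theta\,d\Omega_{\Sp^{n-2}}$. Substituting $u=r^2$, the integral collapses to a Beta function:
\[
a^{-2}(N)=2\pi\,|\Sp^{n-2}|\int_0^1 r^{2N+1}(1-r^2)^{(n-3)/2}\,dr
=\pi\,|\Sp^{n-2}|\,\frac{\Gamma(N+1)\,\Gamma((n-1)/2)}{\Gamma\bigl(N+1+(n-1)/2\bigr)}.
\]

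Third, I invoke Stirling's asymptotic expansion, in the sharp form
\[
\frac{\Gamma(N+1)}{\Gamma(N+1+(n-1)/2)}=N^{-(n-1)/2}\bigl(1+O(N^{-1})\bigr),
\]
together with the standard identity $|\Sp^{n-2}|=2\pi^{(n-1)/2}/\Gamma((n-1)/2)$. Inserting both yields
\[
a^{-2}(N)=2\pi^{(n+1)/2}\,N^{-(n-1)/2}\bigl(1+O(N^{-1})\bigr),
\]
which after inversion and regrouping $\pi^{(n+1)/2}=\pi\cdot\pi^{(n-1)/2}$ becomes exactly $a^2(N)=(N/\pi)^{(n-1)/2}\bigl[(2\pi)^{-1}+O(N^{-1})\bigr]$.

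The only delicate point is producing the $O(N^{-1})$ remainder (rather than $O(N^{-1/2})$ as in the cited reference). I expect this to follow directly from the second-order term in Stirling's expansion, namely $\log\Gamma(N+c)=\log\Gamma(N)+c\log N+O(N^{-1})$, so no new idea is required beyond invoking a sufficiently precise form of Stirling's formula. If one prefers the stationary-phase viewpoint, the Laplace-method analysis of $\int_0^1 r\,(1-r^2)^{(n-3)/2}e^{2N\log r}\,dr$ near the boundary maximum $r=1$ gives the same leading constant and the $O(N^{-1})$ correction from the first subleading term in the Watson-lemma expansion.
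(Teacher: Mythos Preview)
Your proof is correct. Both you and the paper reduce to the same radial integral by choosing coordinates adapted to the plane spanned by $\Re\alpha,\Im\alpha$; the difference is only in how that integral is handled. The paper writes $|\alpha\cdot\omega|^{2N}=\exp\bigl(iN\phi_\alpha(z)\bigr)$ with $\phi_\alpha(z)=-i\ln(1-|z|^2)$ in the transverse variables $z=(z_3,\ldots,z_{n+1})$ and applies the stationary-phase (Laplace) expansion at the critical point $z=0$, reading off the leading term and the $O(N^{-1})$ remainder from the Hessian. You instead recognize the same integral as a Beta function, evaluate it in closed form as $\pi\,|\Sp^{n-2}|\,\Gamma(N+1)\Gamma((n-1)/2)/\Gamma(N+1+(n-1)/2)$, and extract the asymptotics from the standard Gamma-ratio estimate. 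Your route is more elementary and makes the $O(N^{-1})$ error completely transparent, while the paper's stationary-phase argument is consistent with the technique used later in Proposition~\ref{estinnerprod}, where no closed form is available. A minor caveat: your parametrization via $\eta\in\Sp^{n-2}$ tacitly assumes $n\ge 2$; the case $n=1$ is trivial since then $|\alpha\cdot\omega|\equiv 1$ and $a^2(N)=1/(2\pi)$ exactly.
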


\begin{proof} The proof follows a  similar procedure as the one
used to show \eqref{prodint}. Namely,  given $\alpha$ in
${\mathcal A}$, let us consider the following coordinates for the
n-sphere. Consider orthonormal vectors
$\vece_3,\ldots,\vece_{n+1}$ such that
$\{\Re{\alpha},\Im{\alpha},\vece_3,\ldots\vece_{n+1}\}$ is an
orthonormal basis of $\mathbb{R}^{n+1}$.  Then for almost every
element $\omega\in\Sp^n$ we can write
$\omega=\sqrt{1-|\vecz|^2}\cos(\theta)\Re{\alpha}+\sqrt{1-|\vecz|^2}\sin(\theta)\Im{\alpha}+z_3\vece_3+
\ldots+ z_{n+1}\vece_{n+1}$ with $\theta\in[0,2\pi]$ and
$\vecz=(z_3,\ldots,z_{n+1})$ such that $|\vecz|^2<1$.  The volume form in
these coordinates is $d\Omega(\omega)={d}z_3 \cdots {d}z_{n+1} d \theta$.
Thus \bea
\int\left|\alpha\cdot{\omega}\right|^N{d}\Omega(\omega) =
\int_{0}^{2\pi}\int_\vecz\exp\left(\imath{N}\phi_{\alpha}(\vecz)\right)dz_3\cdots{d}z_{n+1}d\theta
\phantom{xxxxxxxxxxx} \nonumber \\ =
2\pi\left[\det\left(\frac{\phi_\alpha^{\prime\prime}(0)N}{2\pi\imath}\right)\right]^{-1/2}
+ O(N^{-\frac{n+1}{2}}) =
\left(\frac{\pi}{N}\right)^{\frac{n-1}{2}}\left[2\pi +
O(N^{-1})\right]
 \eea
where $\phi_{\alpha}(\vecz)=-\imath\ln(1-|\vecz|^2)$ and
$\phi_\alpha^{\prime\prime}(0)$ denotes the $(n-1)\times(n-1)$
Hessian matrix of $\phi_{\alpha}$ evaluated at the critical point
$\vecz=0$ such that
$\det\left(\frac{\phi_\alpha^{\prime\prime}(0)N}{2\pi\imath}\right)=
\left(\frac{N}{\pi}\right)^{n-1}$.
\end{proof}

Let us now restrict ourselves to the $n=3$ case.
The coherent states functions in both momentum  and configuration space, constructed from the coherent states for the 3-sphere defined in equation (\ref{eq:cost-defn1}), were considered by  Thomas and Villegas-Blas \cite{thomas-villegas1}.  The construction uses a
transformation due to V.\ Fock \cite{Fo} and described in reference \cite{BI:1966}.

The coherent states functions in momentum space $\hat{\Psi}_{\alpha, N}$ are defined as a suitable dilation of $\Phi_{\alpha, N}(\omega(\vecp))$, where $\omega(\vecp)$ is given by the stereographic projection map $S$ defined in equations (\ref{stereo}), times a factor that involves not only the square root of the corresponding Jacobian but also a  factor $2/(|\vecp|^2+1)$, which is precisely $|\vecx|$ on the energy surface $\Sigma(-1/2)$ of the Kepler problem. In addition, the factor $|\vecx|$ is exactly  the factor used in the time reparametrization for the regularization of the Kepler problem.  We still have the isometry property  $\|\hat{\Psi}_{\alpha, N} (\vecp)\|=1$ due to the Virial Theorem.  See reference \cite{BI:1966} for details.

The coherent states $\hat{\Psi}_{\alpha, N}$ have the following explicit
form.
For any
$\alpha \in \mathcal{A}$, we define \beq\label{eq:cohst-p1}
\hat{\Psi}_{\alpha, N} (\vecp) = a(N) (N+1)^{3/2} \left(
\frac{2}{(N+1)^2 |\vecp|^2 +1} \right)^2 (\alpha \cdot \omega
((N+1) \vecp) )^{N},
 ~p  \in \R^3 .
\eeq
The  coherent states  $\hat{\Psi}_{\alpha, N}$ are in $L^2 ( \R^3)$.

To obtain the formula for the coherent states in configuration space, we
 use the Fourier transform
given by
\beq\label{defft}
{\mathcal F}(g)(\vecp)=\frac{1}{(2\pi)^{3/2}}\int ~e^{-\imath{\vecp}\cdot{\vecx}} ~g(\vecx) ~d\vecx,
\eeq
that is a unitary operator on $L^2 (\R^3)$.
The inverse Fourier transforms ${\Psi}_{\alpha, N}$ of the coherent states
$\hat{\Psi}_{\alpha, N}$ are eigenfunctions of $S_V$ with
eigenvalue $E_{N}=-\frac{1}{2(N+1)^2}$.
For fixed $N$ and $\alpha \in \mathcal{A}$,
they form a normalized (but not orthogonal) overdetermined basis
of the $(N+1)^2$-dimensional eigenspace $\mathcal{E}_{N}
\subset L^2 (\R^3)$ of the hydrogen atom Hamiltonian with
eigenvalue $E_{N}$. These coherent states functions in configuration space
have the form
\bea\label{eq:cohst-p2} {\Psi}_{\alpha, N} (\vecx)&=&
\frac{a(N) (N+1)^{3/2}}{ (2 \pi)^{3/2}} \int_{\R^3} ~d \vecp ~e^{i \vecx \cdot \vecp} ~\left\{
~\left( \frac{2}{(N+1)^2 |\vecp|^2 +1} \right)^2 \right. \nonumber \\
&& \times
 \left. (\alpha \cdot \omega ((N+1) \vecp) )^{N} \right\}. \nonumber \\
\eea

Note that the coherent states ${\Psi}_{\alpha, N}$ can be written as
\beq\label{cohstaeshydatom} {\Psi}_{\alpha, N} =
{\mathcal
F}^{-1}D_{r_{N}}J^{1/2}K\Phi_{\alpha, N} \eeq where
$r_{N}=N+1$,
 $K:L^2(S^3)\rightarrow{L^2(\mathbb{R}^3)}$ is the unitary
operator \beq\label{defK} K(f)(\vecp) =
\left(\frac{2}{|\vecp|^2+1}\right)^{3/2}f(\omega(\vecp)) \eeq and J is
the self-adjoint multiplicative operator acting in momentum space
$J:L^2(\mathbb{R}^3)\rightarrow{L^2(\mathbb{R}^3)}$ given by
\beq\label{defJ}
J(\hat{\Psi})(\vecp)=\frac{2}{|\vecp|^2+1}\hat{\Psi}.(\vecp) \eeq

We remark that $d\Omega({{\omega}})=\left(\frac{2}{|\vecp|^2+1}\right)^3d\vecp$ under the stereographic  projection (\ref{stereo}).  Thus the factor $\left(\frac{2}{|\vecp|^2+1}\right)^{3/2}$ defining the operator $K$ is the square root of the Jacobian $\left(\frac{2}{|\vecp|^2+1}\right)^3$.

Let $\mu$ be the normalized SO(4)-invariant  measure on $\mathcal{A}$. The orthogonal
projector $\Pi_N$ onto the eigenspace $\mathcal{E}_N$ can
be written as \beq\label{eq:proj-coherentstates1} \Pi_N  =
(N+1)^2 \int_{\mathcal{A}} ~| \Psi_{\alpha, N} \rangle
\langle \Psi_{\alpha, N}| ~d \mu(\alpha) . \eeq
Equation (\ref{eq:proj-coherentstates1}) expresses the fact that the system of coherent states  $\{\Psi_{\alpha,N}\}_{\alpha\in{\mathcal A}}$ gives a resolution of the identity for the eigenspace $\mathcal{E}_N$.

In reference \cite{thomas-villegas1}, it is shown that  the coherent state  ${\Psi}_{\alpha, N}$
concentrates on the Kepler orbit determined by $\alpha$ as
$N \rightarrow \infty$ in the case when such an orbit is an ellipse.




\begin{thebibliography}{AENSS}


\bibitem{AhS1} J.\ Avron, I.\ W.\ Herbst, B.\ Simon, Schr\"odinger operators with magnetic fields.\ I.\ General interactions, \emph{Duke Math.\ J.} {\bf 45} (1978), no.\ 4, 847–-883.

\bibitem{AhS3} J.\ Avron, I.\ W.\ Herbst, B.\ Simon, Schr\"odinger operators with magnetic fields.\ III.\ Atoms in homogeneous magnetic field, \emph{Comm.\ Math.\ Phys.} {\bf 79} (1981), no.\ 4, 529–-572.

\bibitem{BI:1966} M.\ Bander, C.\ Itzykson,
Group theory and the hydrogen atom. I, II, {\it Rev.\ Mod.\ Phys.} {\bf 38} No.\ 3, 330-345 (1966).

\bibitem{B-U} R.\ Brummelhuis, A.\ Uribe, A semi-classical trace formula for Schr\"odinger operators,
{\it Commun.\ Math.\ Phys.} {\bf 136}, 567--584 (1991).

\bibitem{DO} C.\ R.\ de Oliveira, {\it Intermediate Spectral Theory and Quantum Dynamics},  Progress in Mathematics, Vol 54, Birkhausser, 2009.

\bibitem{dozias} S.\ Dozias, Clustering for the spectrum of h-pseudodifferential operators with periodic flow on an energy surface, {\it J.\ Funct.\ Anal.} {\bf 145} (1997), no.\ 2, 296–311.

\bibitem{Fo} V.\ Fock,  \emph{Z.\ Physik} {\bf 98}, 145 (1935).

\bibitem{FW} R.\ Froese and R.\ Waxler, The spectrum of the hydrogen atom in an intense magnetic field,
{\it Rev.\ Math.\ Phys.} {\bf 6} (1994), no.\ 5, 699-–832.

\bibitem{guillemin1} V.\ Guillemin, Some spectral results for the Laplace operator with potential on the n-sphere,
\emph{Advances in Math.} {\bf 27} (1978), no.\ 3, 273-–286.


\bibitem{guillemin2} V.\ Guillemin, Some spectral results on rank one symmetric spaces, {\it Advances in Math.} {\bf 28} no.\ 3, 129-137 (1978); An addendum to: "Some spectral results on rank one symmetric spaces'', \emph{Advances in Math.} {\bf 28} (1978), no.\ 2, 138–-147.

\bibitem{helffer-sj1} B.\ Helffer, J.\ Sj\"ostrand, Puis de potentiel g\'en\'eralis\'es et asymptotique semi-classique, \emph{Ann.\ Inst.\ H.\ Poincar\'e} {\bf 41} (1984), 291--331.


\bibitem{hVB1} P.\ D.\ Hislop, C.\ Villegas-Blas, Semiclassical Szeg\"{o} limit of resonance clusters for the hydrogen atom Stark Hamiltonian, \emph{Asymptotic Analsis} {\bf 79}, Number 1--2, (2012).

\bibitem{karasev-novikova} M.\ Karasev, E.\ Novikova, Coherent transform of the spectral problem and algebras with nonlinear commutation relations,  \emph{J.\ Mathematical Science} {\bf 95}, No.\ 6, (1999), 2703--2798.

\bibitem{kato} T.\ Kato, {\it Perturbation theory for linear operators}, second edition, New York: Springer, 1988.

\bibitem{Messiah} A.\ Messiah, {\it Quantum Mechanics}, volume 1, New York: J.\ Wiley and Sons, 1958.

\bibitem{Moser} J.\ Moser.  Regularization of Kepler's problem
and the averaging method on a manifold.  {\it Comm. Pure Appl.}
Math. {\textbf{23}}, 609-636, (1970).

\bibitem{RS4} M.\ Reed and B.\ Simon, \emph{Methods of modern mathematical physics, IV. Analysis of operators},
New York: Academic Press, 1978.

\bibitem{thomas-villegas1} L.\ E.\  Thomas, C.\ Villegas-Blas, Asymptotics of Rydberg states for the hydrogen atom,
{\it Commun.\ Math.\ Phys.} {\bf 187}, 623--645 (1997).

\bibitem{uribe-villegas1} A.\ Uribe, C.\ Villegas-Blas, Asymptotic
of spectral clusters for a perturbation of the hydrogen atom,
{\it Commun.\ Math.\ Phys.} {\bf 280}, 123--144 (2008).

\bibitem{villegas} C. Villegas-Blas, The Laplacian on the n-sphere, the Hydrogen Atom and the Bargmann space Representation. Ph. D. thesis, University of Virginia, 1996.

\bibitem{weinstein} A.\ Weinstein, Asymptotics of eigenvalue clusters for the Laplacian plus potential, \emph{Duke Math. J.}
\textbf{44}, 883--892 (1977).

\end{thebibliography}
\end{document}